\numberwithin{equation}{section}
\numberwithin{figure}{section}
 \theoremstyle{definition}
 \newtheorem*{defn*}{\protect\definitionname}
\theoremstyle{plain}
\newtheorem{thm}{\protect\theoremname}
  \theoremstyle{definition}
  \newtheorem{defn}[thm]{\protect\definitionname}
  \theoremstyle{plain}
  \newtheorem{prop}[thm]{\protect\propositionname}
  \theoremstyle{plain}
  \newtheorem{cor}[thm]{\protect\corollaryname}
  \theoremstyle{plain}
  \newtheorem{lem}[thm]{\protect\lemmaname}
\date{}
  \providecommand{\corollaryname}{Corollary}
  \providecommand{\definitionname}{Definition}
  \providecommand{\lemmaname}{Lemma}
  \providecommand{\propositionname}{Proposition}
\providecommand{\theoremname}{Theorem}
\begin{document}

\title{A modular spectral triple for $\kappa$-Minkowski space}

\author{Marco Matassa%
\thanks{SISSA, Via Bonomea 265, I-34136 Trieste, Italy\textit{. E-mail address}:
marco.matassa@sissa.it%
}}
\maketitle
\begin{abstract}
We present a spectral triple for $\kappa$-Minkowski space in two
dimensions. Starting from an algebra naturally associated to this
space, a Hilbert space is built using a weight which is invariant
under the $\kappa$-Poincaré algebra. The weight satisfies a KMS condition
and its associated modular operator plays an important role in the
construction. This forces us to introduce two ingredients which have
a modular flavor: the first is a twisted commutator, used to obtain
a boundedness condition for the Dirac operator, the second is a weight
replacing the usual operator trace, used to measure the growth of
the resolvent of the Dirac operator. We show that, under some assumptions
related to the symmetries and the classical limit, there is a unique
Dirac operator and automorphism such that the twisted commutator is
bounded. Then, using the weight mentioned above, we compute the spectral
dimension associated to the spectral triple and find that is equal
to the classical dimension. Finally we briefly discuss the introduction
of a real structure.

\newpage{}
\end{abstract}
\tableofcontents{}

\newpage{}

\section{Introduction}

In 1991 Lukierski, Ruegg, Nowicki \& Tolstoi \cite{luk1,luk2} introduced
a Hopf algebraic deformation of the Poincaré algebra, which is called
\textit{$\kappa$-Poincaré}, with a deformation parameter having units
of mass and denoted by $\kappa$. A few years later Majid and Ruegg
\cite{maj} clarified the bicrossproduct structure of $\kappa$-Poincaré:
it consists of a semidirect product of the classical Lorentz algebra,
which acts in a deformed way on the translation subalgebra, and a
backreaction of the momentum sector on the Lorentz transformations.
This allowed to introduce a homogeneous space for the $\kappa$-deformed
symmetry, as the quotient Hopf algebra of the $\kappa$-Poincaré group
with the Lorentz group. The result is a non-commutative Hopf algebra,
which can be interpreted as the algebra of functions over a non-commutative
spacetime, which is called \textit{$\kappa$-Minkowski}.

It is natural to ask whether $\kappa$-Minkowski is a non-commutative
geometry in the sense of Alain Connes \cite{connes}. Despite some
attempts \cite{dandrea-kappa,kappa1,kappa2}, it is fair to say that
a spectral triple that encodes the geometry of this space in a satisfactory
way has not yet been constructed. In particular, in none of the spectral
triples proposed so far the spectral dimension coincides with the
classical dimension, which is not very pleasant from the point of
view of the classical limit. In this paper we provide a new construction
of a spectral triple, which uses tools that have been developed to
study geometries with modular properties. They provide refinements
to the notion of spectral triple introduced by Connes, which is given
by the following.
\begin{defn*}
A compact spectral triple $(\mathcal{A},\mathcal{H},\mathcal{D})$
is the data of a unital $*$-algebra $\mathcal{A}$, a faithful $*$-representation
$\pi$ on a Hilbert space $\mathcal{H}$, and a self-adjoint operator
$\mathcal{D}$ such that\end{defn*}
\begin{itemize}
\item $[\mathcal{D},\pi(a)]$ extends to a bounded operator for all $a\in\mathcal{A}$,
\item $(\mathcal{D}-\mu)^{-1}$ is compact for all $\mu\notin\mbox{sp}\mathcal{D}$.
\end{itemize}
Here $\mbox{sp}\mathcal{D}$ is the spectrum of $\mathcal{D}$, and
we can rewrite the last condition in many equivalent forms, for example
requiring the compactness of the operator $(\mathcal{D}^{2}+1)^{-1/2}$.
One needs to modify this definition to treat non-compact cases, which
correspond to non-unital algebras. We refer to \cite{moyal} for a
discussion of the issues arising in this case and the necessary modifications.
Here we only ask for $\pi(a)(\mathcal{D}^{2}+1)^{-1/2}$ to be compact
for all $a\in\mathcal{A}$.

As a starting point for our construction we consider the $*$-algebra
introduced in \cite{DuSi11}, which we denote by $\mathcal{A}$, built
using the commutation relations associated to $\kappa$-Minkowski
space, and on which the $\kappa$-Poincaré algebra has a natural action.
It was also shown that the integral with respect to the Lebesgue measure
on $\mathbb{R}^{2}$, which we denote by $\omega$, is invariant under
the action of the $\kappa$-Poincaré algebra. It provides us with
the natural ingredient for the construction of a Hilbert space associated
to this geometry. We show that, using the GNS construction for $\omega$,
we obtain a Hilbert space which is unitarily equivalent to $L^{2}(\mathbb{R}^{2})$
and determine the corresponding unitary operator. We show that $\omega$
is a KMS weight for the algebra $\mathcal{A}$, and the corresponding
modular operator plays a major role in the rest of the construction.

The next step is the introduction of a self-adjoint operator $\mathcal{D}$,
which in the classical setting is given by the Dirac operator. In
the non-commutative setting this operator might be different from
the Dirac operator, that is the operator one would introduce to describe
the physical properties of fermions, but we still loosely refer to
$\mathcal{D}$ as such. We immediately face a difficulty in satisfying
the condition of boundedness of the commutator with $\mathcal{D}$,
which is related to the structure of the coproduct of the $\kappa$-Poincaré
algebra. One can relax this condition and consider the framework of
\textit{twisted spectral triples} \cite{type III}. This notion was
introduced to treat examples like perturbed spectral triples and the
transverse geometry of codimension one foliations. It was also remarked
that such a notion could be useful for spectral triples associated
to quantum groups. The novel ingredient of this framework is the twisted
commutator, defined by $[\mathcal{D},\pi(a)]_{\sigma}=\mathcal{D}\pi(a)-\pi(\sigma(a))\mathcal{D}$,
where $\sigma$ is an automorphism of the the algebra $\mathcal{A}$.
One asks that $[\mathcal{D},\pi(a)]_{\sigma}$ extends to a bounded
operator for all $a\in\mathcal{A}$.

We prove that, under some assumptions related to symmetry and to the
classical limit, there is a unique Dirac operator $\mathcal{D}$ and
a unique automorphism $\sigma$ such that the twisted commutator is
bounded. We also discuss the relations between $\mathcal{D}$, the
Casimir of the $\kappa$-Poincaré algebra and the equivariant Dirac
operator which has been considered in the literature.

At this point we arrive to one of the main results of the paper, the
computation of the spectral dimension. If this quantity exists we
say that the spectral triple is finitely summable, and this implies
the compactness requirement in the definition of a spectral triple.
However it turns out that, for our construction, the spectral dimension
does not exists in the usual sense of spectral triples. We are going
to argue that this problem arises from a mismatch of the modular properties
of the weight $\omega$, which we defined on our algebra $\mathcal{A}$,
and the non-commutative integral defined by the trace on the Hilbert
space.

A natural question at this point is whether we took a wrong turn in
our construction or if there is some way to make sense of this problem.
This is actually a quite general problem for geometries involving
KMS states (or weights, as in our case), and to deal with it the framework
of \textit{modular spectral triples} was introduced, see \cite{modular1,modular2,modular3,modular4}.
Unfortunately our model does not fit in this framework, which has
been formalized on the basis of examples where the modular group associated
to the KMS state is periodic, see for example the definition given
in \cite{modular4}. In this situation, it makes sense to restrict
some conditions to the fixed point algebra under this action. However
this possibility is not available in our case which, as we are going
to see, has an action of $\mathbb{R}$ by translation: indeed in this
case the only fixed point of $\mathcal{A}$ is given by the zero function.
It is possible that the framework might be extended to deal with this
class of examples, but we do not dwell on this point here.

Nevertheless, we show that we can borrow fruitfully some ingredients
from the modular spectral triples framework, and that they give interesting
results when applied to our case. For this reason we are going to
refer to this construction, loosely speaking, as a modular spectral
triple. In particular we use the notion of spectral dimension: to
this end we consider a weight $\Phi$, whose modular group is related
to that of $\omega$, and use it to compute the spectral dimension.
This weight has the role of fixing the mismatch in the modular properties
which we mentioned above. We show that, using this definition, our
spectral triple is finitely summable and its spectral dimension coincides
with the classical dimension. Moreover, if we take the residue at
$s=2$, that is the spectral dimension, of the {}``zeta function''
given by $\Phi(\pi(f)(\mathcal{D}^{2}+\mu^{2})^{-s/2})$, with $f\in\mathcal{A}$
and $\mu\neq0$, we recover the weight $\omega$ up to a constant.

This is the result one would expect, namely to recover the natural
notion of integration on the algebra, given by the weight $\omega$,
via the non-commutative integral, defined in terms of $\Phi$. Therefore
we see that, using ingredients from the modular spectral triples framework,
we can make sense of some of the constructions in the non-commutative
geometry approach. These results provide some preliminary evidences
that these are the right tools to describe the geometry of $\kappa$-Minkowski.
Finally we discuss the introduction of a real structure by defining
an antilinear isometry $\mathcal{J}$ on the Hilbert space. We check
the conditions which define a real structure in the classical case,
and show that they are modified. The most interesting modification
is given by the commutation relation between $\mathcal{D}$ and $\mathcal{J}$,
which involves the antipode structure of the $\kappa$-Poincaré algebra.

\section{The $*$-algebra}

The aim of this section is to introduce a $*$-algebra, which provides
the first ingredient for a spectral triple describing the geometry
of $\kappa$-Minkowski space. We start by recalling some basic facts
about the $\kappa$-Poincaré and $\kappa$-Minkowski Hopf algebras,
and some notions related to the implementation of Hopf algebra symmetries
on a Hilbert space. After this short review we describe the $*$-algebra
$\mathcal{A}$, which was introduced in \cite{DuSi11}, and recall
some of its properties which are relevant for the construction of
a spectral triple.

\subsection{The $\kappa$-Poincaré and $\kappa$-Minkowski algebras}

In this subsection we summarize the algebraic properties of the $\kappa$-Poincaré
algebra $\mathcal{P}_{\kappa}$ in two dimensions. First we give the
usual presentation that appears in the literature, that is as the
Hopf algebra generated by the elements $P_{0}$, $P_{1}$, $N$ satisfying
\[
\begin{split} & [P_{0},P_{1}]=0\ ,\quad[N,P_{0}]=P_{1}\ ,\\
 & [N,P_{1}]=\frac{\kappa}{2}(1-e^{-2P_{0}/\kappa})-\frac{1}{2\kappa}P_{1}^{2}\ .
\end{split}
\]
The coproduct $\Delta:\mathcal{P}_{\kappa}\to\mathcal{P}_{\kappa}\otimes\mathcal{P}_{\kappa}$
is defined by the relations
\[
\begin{split} & \Delta(P_{0})=P_{0}\otimes1+1\otimes P_{0}\ ,\qquad\Delta(P_{1})=P_{1}\otimes1+e^{-P_{0}/\kappa}\otimes P_{1}\ ,\\
 & \Delta(N)=N\otimes1+e^{-P_{0}/\kappa}\otimes N\ .
\end{split}
\]
The counit $\varepsilon:\mathcal{P}_{\kappa}\to\mathbb{C}$ and antipode
$S:\mathcal{P}_{\kappa}\to\mathcal{P}_{\kappa}$ are defined by
\[
\begin{split} & \varepsilon(P_{0})=\varepsilon(P_{1})=0\ ,\quad\varepsilon(N)=0\ ,\\
 & S(P_{0})=-P_{0}\ ,\quad S(P_{1})=-e^{P_{0}/\kappa}P_{1}\ ,\quad S(N)=-e^{P_{0}/\kappa}N\ .
\end{split}
\]
An important role is played by the Hopf subalgebra generated by $P_{0}$
and $P_{1}$, which we denote by $\mathcal{T}_{\kappa}$, that is
the generators of the translations. Indeed the $\kappa$-Minkowski
space is defined as the dual Hopf algebra to this subalgebra \cite{maj},
which we denote by $\mathcal{M}_{\kappa}$. If we denote the pairing
by $\langle\cdot,\cdot\rangle:\mathcal{T}_{\kappa}\times\mathcal{M}_{\kappa}\to\mathbb{C}$,
then the structure of $\mathcal{M}_{\kappa}$ is determined by the
duality relations 
\[
\begin{split}\langle t,xy\rangle & =\langle t^{(1)},x\rangle\langle t^{(2)},y\rangle\ ,\\
\langle ts,x\rangle & =\langle t,x^{(1)}\rangle\langle s,x^{(2)}\rangle\ .
\end{split}
\]
Here we have $t,s\in\mathcal{T}_{\kappa}$, $x,y\in\mathcal{M}_{\kappa}$
and we use the Sweedler notation for the coproduct
\[
\Delta x=\sum_{i}x_{(i)}^{(1)}\otimes x_{(i)}^{(2)}=x^{(1)}\otimes x^{(2)}\ .
\]
From the pairing we deduce that $\mathcal{M}_{\kappa}$ is non-commutative,
since $\mathcal{T}_{\kappa}$ is not cocommutative, that is the coproduct
in $\mathcal{T}_{\kappa}$ is not trivial. On the other hand, since
$\mathcal{T}_{\kappa}$ is commutative we have that $\mathcal{M}_{\kappa}$
is cocommutative. The algebraic relations for the $\kappa$-Minkowski
Hopf algebra $\mathcal{M}_{\kappa}$ are

\[
[X_{0},X_{1}]=-\kappa^{-1}X_{1}\ ,\qquad\Delta X_{\mu}=X_{\mu}\otimes1+1\otimes X_{\mu}\ .
\]

This concludes the usual presentation of the $\kappa$-Poincaré and
$\kappa$-Minkowski Hopf algebras. To be more precise such a presentation
is that of a $h$-adic Hopf algebra, where $h$ is the formal deformation
parameter, since we need to make sense of the power series of elements
like $e^{-P_{0}/\kappa}$. In this paper we prefer to use a slighly
different presentation, as done in \cite{DuSi11}, which gives a genuine
Hopf algebra. Instead of considering the exponential $e^{-P_{0}/\kappa}$
as a power series in $P_{0}$, we consider it as an invertible element
$\mathcal{E}$ and rewrite the defining relations as
\[
\begin{split} & [P_{0},P_{1}]=0\ ,\quad[P_{0},\mathcal{E}]=[P_{1},\mathcal{E}]=0\ ,\\
 & \Delta(P_{0})=P_{0}\otimes1+1\otimes P_{0}\ ,\quad\Delta(P_{1})=P_{1}\otimes1+\mathcal{E}\otimes P_{1}\ ,\quad\Delta(\mathcal{E})=\mathcal{E}\otimes\mathcal{E}\ ,\\
 & \varepsilon(P_{0})=\varepsilon(P_{1})=0\ ,\quad\varepsilon(\mathcal{E})=1\ ,\\
 & S(P_{0})=-P_{0}\ ,\quad S(P_{1})=-\mathcal{E}^{-1}P_{1}\ ,\quad S(\mathcal{E})=\mathcal{E}^{-1}\ .
\end{split}
\]
In this form the role of the subalgebra $\mathcal{T}_{\kappa}$ is
played by the one generated by $P_{\mu}$ and $\mathcal{E}$, which
we call the \textit{extended momentum algebra} and denote again by
$\mathcal{T}_{\kappa}$. An appropriate pairing defining $\kappa$-Minkowski
space can be easily written in terms of these generators. More importantly
it can be made into a Hopf $*$-algebra by defining the involution
as $P_{\mu}^{*}=P_{\mu}$ and $\mathcal{E}^{*}=\mathcal{E}$.

In the following we are going to consider the case of Euclidean signature
and so, strictly speaking, we should refer to the Euclidean counterpart
of the $\kappa$-Poincaré algebra, which is known as the \textit{quantum
Euclidean group}. However the boost generator $N$ is not going to
play a central role in our discussion, which is going to be based
on the extended momentum algebra, and therefore most of our relations
do not depend on the signature. Henceforth we only make reference
to the $\kappa$-Poincaré algebra and make some remarks when needed.

One more remark on the notation: we are going to write all formulae
in terms of the parameter $\lambda:=\kappa^{-1}$, instead of $\kappa$.
The motivation comes from the fact that the Poincaré algebra is obtained
in the {}``classical limit'' $\lambda\to0$, in a similar fashion
to the classical limit $\hbar\to0$ of quantum mechanics. This makes
more transparent checking that some formulae reduce, in this limit,
to their respective undeformed counterparts.

\subsection{Implementation of Hopf symmetries}

Here we recall some notions related to the implementation of Hopf
algebra symmetries on a Hilbert space \cite{equivariant}. In the
following $H$ denotes a Hopf algebra.
\begin{defn}
An \textit{$H$-module} is a pair $(V,\varphi)$, where $V$ is a
linear space and $\varphi$ is a complex linear representation of
$H$ on $V$. The representation will be also denoted by $h\triangleright v$,
where $h\in H$ and $v\in V$, instead of $\varphi(h)v$.
\end{defn}
In particular we are interested in the case in which $H$ acts on
an algebra $A$. The following definitions are compatibility conditions
between the two structures.
\begin{defn}
An algebra $A$ is a \textit{left $H$-module algebra} if $A$ is
a left $H$-module and the representation respects the algebra structure
of $A$, that is $h\triangleright(ab)=(h_{(1)}\triangleright a)(h_{(2)}\triangleright b)$
for all $h\in H$ and $a,b\in A$.
\end{defn}

\begin{defn}
A $*$-algebra $A$ is \textit{left $H$-module $*$-algebra} if $A$
is a left $H$-module algebra and moreover the action is compatible
with the star structure, that is $(h\triangleright a)^{*}=S(h)^{*}\triangleright a^{*}$
for all $h\in H$ and $a\in A$.
\end{defn}
We are interested in a representation of $H$ on a Hilbert space $\mathcal{H}$.
This representation will be unbounded in general, so we will have
to specify the appropriate domains.
\begin{defn}
Let $\mathcal{H}_{0}$ be a dense linear subspace of a Hilbert space
$\mathcal{H}$ with inner product $(\cdot,\cdot)$. An \textit{unbounded
$*$-representation} of $H$ on $\mathcal{H}_{0}$ is a homomorphism
$\rho:H\to\mbox{End}(\mathcal{H}_{0})$ such that $(\rho(h)\psi,\phi)=(\psi,\rho(h^{*})\phi)$
for all $\psi,\phi\in\mathcal{H}_{0}$ and $h\in H$.
\end{defn}
Now suppose that we have a representation $\pi$ of a $*$-algebra
$A$ on $\mathcal{H}$. Moreover suppose that $A$ is a left $H$-module
$*$-algebra, which is the case of interest for a spectral triple
with symmetries. We want the representation $\pi$ to be compatible
with the structure of the Hopf algebra $H$. This leads to the notion
of equivariance, given in the next definition.
\begin{defn}
Suppose $A$ is a left $H$-module $*$-algebra. A $*$-representation
$\pi$ of $A$ on $\mathcal{H}_{0}$ is called \textit{$H$-equivariant}
(or also covariant) if there exists an unbounded $*$-representation
$\rho$ of $H$ on $\mathcal{H}_{0}$ such that
\begin{equation}
\rho(h)\pi(a)\psi=\pi(h_{(1)}\triangleright a)\rho(h_{(2)})\psi\ ,
\end{equation}
for all $h\in H$, $a\in A$ and $\psi\in\mathcal{H}_{0}$.
\end{defn}
Finally we give a definition of equivariance for operators in $\mathcal{H}$.
\begin{defn}
A linear operator $T$ defined on $\mathcal{H}_{0}$ is said to be
\textit{equivariant} if it commutes with $\rho(h)$, that is $T\rho(h)\psi=\rho(h)T\psi$
for all $h\in H$ and $\psi\in\mathcal{H}_{0}$.
\end{defn}

\subsection{Definition of the $*$-algebra $\mathcal{A}$}

Now we can describe the $*$-algebra introduced in \cite{DuSi11},
see also \cite{dabro-piac}. We summarize some of the main results
and fix the notation for the other sections. In two dimensions the
underlying algebra of $\kappa$-Minkowski is the enveloping algebra
of the Lie algebra with generators $iX_{0}$ and $iX_{1}$ fullfilling
$[X_{0},X_{1}]=i\lambda X_{1}$. It has a faithful representation
$\varphi$ given by
\[
\varphi(iX_{0})=\left(\begin{array}{cc}
-\lambda & 0\\
0 & 0
\end{array}\right)\ ,\qquad\varphi(iX_{1})=\left(\begin{array}{cc}
0 & 1\\
0 & 0
\end{array}\right)\ .
\]
The corresponding simply connected Lie group $G$ consists of $2\times2$
matrices of the form
\begin{equation}
S(a)=S(a_{0},a_{1})=\left(\begin{array}{cc}
e^{-\lambda a_{0}} & a_{1}\\
0 & 1
\end{array}\right)\ ,\label{eq:param}
\end{equation}
which are obtained by exponentiating $\varphi$ as follows
\[
e^{i\varphi(a_{0}X_{0}+a_{1}^{\prime}X_{1})}=\left(\begin{array}{cc}
e^{-\lambda a_{0}} & \frac{1-e^{-\lambda a_{0}}}{\lambda a_{0}}a_{1}^{\prime}\\
0 & 1
\end{array}\right)\ .
\]
The group operations written in the $(a_{0},a_{1})$ coordinates are
given by
\[
S(a_{0},a_{1})S(a_{0}^{\prime},a_{1}^{\prime})=S(a_{0}+a_{0}^{\prime},a_{1}+e^{-\lambda a_{0}}a_{1}^{\prime})\ ,\qquad S(a_{0},a_{1})^{-1}=S(-a_{0},-e^{\lambda a_{0}}a_{1})\ .
\]

We have that the Lebesgue measure $d^{2}a$ is right invariant whereas
the measure $e^{\lambda a_{0}}d^{2}a$ is left invariant on $G$,
so in particular $G$ is not unimodular. We denote by $L^{1}(G)$
the convolution algebra of $G$ with respect to the right invariant
measure. We identify functions on $G$ with functions on $\mathbb{R}^{2}$
by the parametrization (\ref{eq:param}). Then $L^{1}(G)$ is an involutive
Banach algebra consisting of integrable functions on $\mathbb{R}^{2}$
with product $\hat{\star}$ and involution $\hat{*}$ given by
\[
\begin{split}(f\hat{\star}g)(a) & =\int f(a_{0}-a_{0}^{\prime},a_{1}-e^{-\lambda(a_{0}-a_{0}^{\prime})}a_{1}^{\prime})g(a_{0}^{\prime},a_{1}^{\prime})d^{2}a^{\prime}\ ,\\
f^{\hat{*}}(a) & =e^{\lambda a_{0}}\overline{f}(-a_{0},-e^{\lambda a_{0}}a_{1})\ .
\end{split}
\]
Any unitary representation $\tilde{\pi}$ of $G$ (assumed to be strongly
continuous) gives rise to a representation of $L^{1}(G)$, denoted
with the same symbol, obtained by setting
\[
\tilde{\pi}(f)=\int f(a)\tilde{\pi}(S(a))d^{2}a\ .
\]
It is indeed a $*$-representation, since it obeys the relations
\[
\tilde{\pi}(f\hat{\star}g)=\tilde{\pi}(f)\tilde{\pi}(g)\ ,\qquad\tilde{\pi}(f^{\hat{*}})=\tilde{\pi}(f)^{\dagger}\ ,
\]
for any $f,g\in L^{1}(G)$, where here $\dagger$ denotes the adjoint.
Now, following the same procedure as in the Weyl quantization, we
can define the map $W_{\tilde{\pi}}(f):=\tilde{\pi}(\mathcal{F}f)$,
where $f\in L^{1}(\mathbb{R}^{2})\cap\mathcal{F}^{-1}(L^{1}(\mathbb{R}^{2}))$
and $\mathcal{F}$ denotes the Fourier transform on $\mathbb{R}^{2}$.
It then follows that
\[
W_{\tilde{\pi}}(f\star g)=W_{\tilde{\pi}}(f)W_{\tilde{\pi}}(g)\ ,\qquad W_{\tilde{\pi}}(f^{*})=W_{\tilde{\pi}}(f)^{\dagger}\ ,
\]
where the product $\star$ and the involution $*$ are defined by
\begin{equation}
f\star g=\mathcal{F}^{-1}(\mathcal{F}f\hat{\star}\mathcal{F}g)\ ,\qquad f^{*}=\mathcal{F}^{-1}(\mathcal{F}f)^{\hat{*}}\ .\label{eq:alg-ops}
\end{equation}
Here the formulae are given using the unitary convention for the Fourier
transform.

It is important to remark that, although $W_{\tilde{\pi}}$ depends
on the choice of $\tilde{\pi}$ (and therefore on the choice of the
unitary representation of $G$), the product $\star$ and the involution
$*$ do not depend on such a choice. This is the same strategy which
is employed to define the deformed product in quantum mechanics. Indeed,
as in that case, one needs to exercise care about the domain of definition
of the operations $\star$ and $*$, so in the following we restrict
ourselves to a subset of Schwartz functions which have nice compatibility
properties with those operations.
\begin{defn}
Denote by $\mathcal{S}_{c}$ the space of Schwartz functions on $\mathbb{R}^{2}$
with compact support in the first variable, that is for $f\in\mathcal{S}_{c}$
we have $\mbox{supp}(f)\subseteq K\times\mathbb{R}$ for some compact
$K\subset\mathbb{R}$. We define $\mathcal{A}=\mathcal{F}(\mathcal{S}_{c})$,
where $\mathcal{F}$ is the Fourier transform on $\mathbb{R}^{2}$.\end{defn}
\begin{prop}
For $f,g\in\mathcal{A}$ we can write (\ref{eq:alg-ops}) in the following
form
\begin{equation}
\begin{split}(f\star g)(x) & =\int e^{ip_{0}x_{0}}(\mathcal{F}_{0}f)(p_{0},x_{1})g(x_{0},e^{-\lambda p_{0}}x_{1})\frac{dp_{0}}{2\pi}\ ,\\
f^{*}(x) & =\int e^{ip_{0}x_{0}}(\mathcal{F}_{0}\overline{f})(p_{0},e^{-\lambda p_{0}}x_{1})\frac{dp_{0}}{2\pi}\ .
\end{split}
\label{eq:alg-formulae}
\end{equation}
We have that $f\star g\in\mathcal{A}$ and $f^{*}\in\mathcal{A}$,
so that $\mathcal{A}$ is a $*$-algebra.
\end{prop}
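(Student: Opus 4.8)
The plan is to obtain both identities by a direct computation, unwinding the abstract definitions in (\ref{eq:alg-ops}) and converting the group-convolution product $\hat{\star}$ and involution $\hat{*}$ into position-space operations through the Fourier transform. For the product, I would substitute $f\star g=\mathcal{F}^{-1}(\mathcal{F}f\,\hat{\star}\,\mathcal{F}g)$, write out the convolution over $a'$ and the inverse Fourier transform over $a$ as one iterated integral, and then perform the change of variables $b_0=a_0-a_0'$, $b_1=a_1-e^{-\lambda b_0}a_1'$ (equivalently $a_0=b_0+a_0'$, $a_1=b_1+e^{-\lambda b_0}a_1'$). This map is triangular with unit Jacobian, so it introduces no extra factor, and its only effect on the phase $e^{i(a_0x_0+a_1x_1)}$ is to replace $a_1x_1$ by $b_1x_1+e^{-\lambda b_0}a_1'x_1$; this is precisely what produces the rescaling $x_1\mapsto e^{-\lambda p_0}x_1$ in the second slot of $g$. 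After the substitution the integrals over $b_1$ and over $a_1'$ each collapse the second-variable Fourier transforms of $f$ and $g$ via a delta function, the integral over $a_0'$ reconstructs $g$ in its first variable, and the integral over $b_0$ survives as the partial transform $\mathcal{F}_0 f$. Relabelling $b_0=p_0$ then yields the stated formula for $f\star g$.

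For the involution I would proceed analogously, starting from $f^{*}=\mathcal{F}^{-1}((\mathcal{F}f)^{\hat{*}})$ and using $(\mathcal{F}f)^{\hat{*}}(a)=e^{\lambda a_0}\overline{(\mathcal{F}f)}(-a_0,-e^{\lambda a_0}a_1)$. Inserting the Fourier representation of $\mathcal{F}f$ and carrying out the inverse transform, the change of variables $p_0=-a_0$ together with $u_1=-e^{\lambda a_0}a_1$ untwists the expression; the integral over $a_1$ produces a delta function that fixes the second argument to $e^{-\lambda p_0}x_1$. The point to watch here is that this last integration carries a Jacobian $e^{-\lambda a_0}$ which exactly cancels the explicit prefactor $e^{\lambda a_0}$, so that no residual exponential remains and one is left with $(\mathcal{F}_0\overline{f})(p_0,e^{-\lambda p_0}x_1)$ integrated against $e^{ip_0x_0}$, as claimed.

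Finally, the associativity and involution axioms are inherited from the ambient operations $\hat{\star}$ and $\hat{*}$ through the Fourier transform, so the only substantive point is that $\mathcal{A}$ is closed under $\star$ and $*$. For this the useful observation is that $f\in\mathcal{A}=\mathcal{F}(\mathcal{S}_c)$ is equivalent to $(\mathcal{F}_0 f)(p_0,x_1)$ being Schwartz and compactly supported in $p_0$, which I would verify directly from the definition of $\mathcal{S}_c$. Applying $\mathcal{F}_0$ to the explicit formulas just derived, I expect $\mathcal{F}_0(f\star g)(k_0,x_1)$ to come out as a twisted convolution in the first variable of $\mathcal{F}_0 f$ and $\mathcal{F}_0 g$; since each factor is compactly supported in its momentum variable, the support of the convolution lies in the (compact) sum of the two supports, and Schwartz decay in $x_1$ is preserved because the rescaling $x_1\mapsto e^{-\lambda p_0}x_1$ is uniform over the compact $p_0$-range. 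For the involution, $\mathcal{F}_0 f^{*}(k_0,x_1)$ reduces to $\overline{(\mathcal{F}_0 f)}$ evaluated at reflected momentum and rescaled position, which again lies in the class. The main obstacle is analytic rather than algebraic: all of the delta-function manipulations and interchanges of integration must be justified, and this is exactly where the choice of $\mathcal{S}_c$ (Schwartz plus compact support in the first variable) is needed to guarantee absolute convergence and the applicability of Fubini's theorem throughout. I would also keep careful track of the $2\pi$ normalisations introduced by each transform in order to match the constants appearing in (\ref{eq:alg-formulae}).
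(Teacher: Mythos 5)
The paper itself gives no proof of this proposition: it is imported verbatim from \cite{DuSi11} as part of the background material, so there is no internal argument to compare against. Your derivation is the standard one and is correct: the triangular change of variables $b_{0}=a_{0}-a_{0}'$, $b_{1}=a_{1}-e^{-\lambda b_{0}}a_{1}'$ with unit Jacobian is what produces the rescaled second argument of $g$, the cancellation of the prefactor $e^{\lambda a_{0}}$ against the Jacobian $e^{-\lambda a_{0}}$ coming from the delta function is precisely why no residual exponential survives in the formula for $f^{*}$, and the characterization of $\mathcal{A}$ by compact $p_{0}$-support of $\mathcal{F}_{0}f$ together with uniform Schwartz decay in $x_{1}$ is the right way to see closure under $\star$ and $*$ (the twisted convolution of compactly supported momentum profiles, and the rescaling $x_{1}\mapsto e^{-\lambda p_{0}}x_{1}$ being uniform over a compact $p_{0}$-range, are exactly the points that make this work). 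Your closing remarks about Fubini and the $2\pi$ normalizations identify the only places where care is genuinely required.
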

Here $\mathcal{F}_{0}$ denotes the Fourier transform of $f$ in the
first variable, defined as 
\[
(\mathcal{F}_{0}f)(p_{0},x_{1})=\int e^{-ip_{0}y_{0}}f(y_{0},x_{1})dy_{0}\ .
\]
Notice that for $\lambda=0$ we recover the pointwise product $(f\star g)(x)=f(x)g(x)$
and the complex conjugation $f^{*}(x)=\overline{f}(x)$, giving the
correct classical limit for the algebra. The definition of $\mathcal{A}$
implies that any $f\in\mathcal{A}$ is a Schwartz function, but does
not have compact support in the first variable. However, by the Paley-Wiener
theorem, the compact support of $\mathcal{F}f$ in the first variable
implies analiticity of $f$ in the first variable.

The algebra $\mathcal{A}$ and the extended momentum algebra $\mathcal{T}_{\kappa}$
are naturally related. The main result that we need for our construction
is the following \cite{DuSi11}.
\begin{thm}
\label{prop:tk-module}The algebra $\mathcal{A}$ is a left $\mathcal{T}_{\kappa}$-module
$*$-algebra with respect to the following representation of the extended
momentum algebra
\[
(P_{\mu}\triangleright f)(x)=-i(\partial_{\mu}f)(x)\ ,\qquad(\mathcal{E}\triangleright f)(x)=f(x_{0}+i\lambda,x_{1})\ .
\]

\end{thm}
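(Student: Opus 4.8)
The plan is to reduce everything to a finite check on the three generators $P_0,P_1,\mathcal{E}$. Both the module-algebra identity $h\triangleright(f\star g)=(h_{(1)}\triangleright f)\star(h_{(2)}\triangleright g)$ and the $*$-compatibility identity $(h\triangleright f)^*=S(h)^*\triangleright f^*$ are preserved under products in $\mathcal{T}_\kappa$: if they hold for $h$ and $k$ they hold for $hk$, using that $\Delta$ is an algebra map in the first case and that both $S$ and $*$ are anti-homomorphisms in the second. Since $\mathcal{T}_\kappa$ is generated by $P_0,P_1,\mathcal{E}^{\pm1}$, it suffices to verify (i) that the given operators furnish an algebra representation of $\mathcal{T}_\kappa$ on $\mathcal{A}$, (ii) the module-algebra identity for each generator, and (iii) the $*$-identity for each generator. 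Throughout I would work with the partial transform $\mathcal{F}_0$, in which the product and involution \eqref{eq:alg-formulae} are most transparent.

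Before the algebraic checks I would settle well-definedness, where the only analytic content lies. For $f\in\mathcal{A}$ the function $\mathcal{F}_0 f(\cdot,x_1)$ has compact support in $p_0$ (built into $\mathcal{A}=\mathcal{F}(\mathcal{S}_c)$, as $\mathcal{F}_0$ carries $f$ back to the second-variable transform of its compactly-supported generator), so by Paley--Wiener $f$ extends to an entire function of $x_0$ with rapid decay along horizontal lines; hence $(\mathcal{E}\triangleright f)(x)=f(x_0+i\lambda,x_1)$ is meaningful. The key identity is obtained by a contour shift:
\[
\mathcal{F}_0(\mathcal{E}\triangleright f)(p_0,x_1)=e^{-\lambda p_0}(\mathcal{F}_0 f)(p_0,x_1),
\]
and likewise $\mathcal{F}_0(\mathcal{E}^{-1}\triangleright f)=e^{\lambda p_0}\mathcal{F}_0 f$. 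Since multiplication by $e^{\pm\lambda p_0}$ preserves the fixed compact support and the Schwartz decay in $x_1$, one gets $\mathcal{E}^{\pm1}\triangleright f\in\mathcal{A}$; that $P_\mu\triangleright f=-i\partial_\mu f\in\mathcal{A}$ is clear. The representation property is then immediate: $\partial_0,\partial_1$ commute and both commute with the $x_0$-shift implementing $\mathcal{E}$, reproducing $[P_0,P_1]=[P_0,\mathcal{E}]=[P_1,\mathcal{E}]=0$ with $\mathcal{E}$ acting invertibly.

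For the module-algebra identities I would differentiate or shift under the integral sign in \eqref{eq:alg-formulae}. Applying $-i\partial_0$ and using $\mathcal{F}_0(\partial_0 f)=ip_0\mathcal{F}_0 f$ gives the ordinary Leibniz rule $P_0\triangleright(f\star g)=(P_0\triangleright f)\star g+f\star(P_0\triangleright g)$, matching $\Delta(P_0)$. Applying $-i\partial_1$, the derivative falls on the factors $\mathcal{F}_0 f(p_0,x_1)$ and $g(x_0,e^{-\lambda p_0}x_1)$; the chain rule on the second produces an extra $e^{-\lambda p_0}$, which is exactly the multiplier turning $\mathcal{F}_0 f$ into $\mathcal{F}_0(\mathcal{E}\triangleright f)$, so one gets the twisted rule $P_1\triangleright(f\star g)=(P_1\triangleright f)\star g+(\mathcal{E}\triangleright f)\star(P_1\triangleright g)$, matching $\Delta(P_1)=P_1\otimes1+\mathcal{E}\otimes P_1$. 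Finally, evaluating $f\star g$ at $x_0+i\lambda$ and using the same multiplier identity together with $g(x_0+i\lambda,\cdot)=(\mathcal{E}\triangleright g)(x_0,\cdot)$ yields $\mathcal{E}\triangleright(f\star g)=(\mathcal{E}\triangleright f)\star(\mathcal{E}\triangleright g)$, matching $\Delta(\mathcal{E})=\mathcal{E}\otimes\mathcal{E}$.

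The last step is $*$-compatibility, where I expect the only real care to be needed. Using $S(P_0)^*=-P_0$, $S(\mathcal{E})^*=\mathcal{E}^{-1}$ and $S(P_1)^*=-\mathcal{E}^{-1}P_1$ (the latter from $S(P_1)=-\mathcal{E}^{-1}P_1$, $P_1^*=P_1$, $\mathcal{E}^*=\mathcal{E}$, $[P_1,\mathcal{E}]=0$), the identities to prove read $(P_0\triangleright f)^*=-P_0\triangleright f^*$, $(\mathcal{E}\triangleright f)^*=\mathcal{E}^{-1}\triangleright f^*$, and $(P_1\triangleright f)^*=-\mathcal{E}^{-1}P_1\triangleright f^*$. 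For $P_0$ one checks from \eqref{eq:alg-formulae} that $(\partial_0 f)^*=\partial_0(f^*)$, and conjugate-linearity of $*$ supplies the sign. The subtle input, which I regard as the main obstacle, is the interaction of conjugation with the analytic continuation defining $\mathcal{E}$: for the entire extension $\overline{f(x_0+i\lambda,x_1)}=\bar f(x_0-i\lambda,x_1)$, where $\bar f$ is the conjugate function and its continuation. Feeding this into the involution formula, with the multiplier identities for $\mathcal{E}^{\pm1}$ and the fact that $f^*$ depends on $x_0$ only through $e^{ip_0 x_0}$, gives $(\mathcal{E}\triangleright f)^*=\mathcal{E}^{-1}\triangleright f^*$; the $P_1$ case combines this shift-by-$(-i\lambda)$ bookkeeping with $(\partial_1 f)^*$ to produce exactly the $e^{-\lambda p_0}$ mismatch absorbed by $\mathcal{E}^{-1}$, matching $-\mathcal{E}^{-1}P_1$. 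Granting the generator checks, multiplicativity closes the argument.
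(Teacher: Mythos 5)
Your argument is sound, and all the key identities you isolate are the right ones: the multiplier identity $\mathcal{F}_0(\mathcal{E}\triangleright f)=e^{-\lambda p_0}\mathcal{F}_0 f$, the twisted Leibniz rule for $P_1$ coming from the chain-rule factor $e^{-\lambda p_0}$ in $g(x_0,e^{-\lambda p_0}x_1)$, and the reflection $\overline{f(x_0+i\lambda,x_1)}=\bar f(x_0-i\lambda,x_1)$ for the entire extension, which is exactly what makes $(\mathcal{E}\triangleright f)^*=\mathcal{E}^{-1}\triangleright f^*$ and $(P_1\triangleright f)^*=-\mathcal{E}^{-1}P_1\triangleright f^*$ come out. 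The reduction to generators is also correctly justified ($\Delta$ multiplicative for the module-algebra identity; $h\mapsto S(h)^*$ multiplicative for the $*$-identity), and you rightly include $\mathcal{E}^{-1}$ among the generators and note that its case follows from invertibility of the $\mathcal{E}$-action. One caveat on comparison: the paper does not prove this theorem at all --- it is imported verbatim from the reference [DuSi11] --- so there is no in-paper proof to measure yours against; what you have written is the natural direct verification in the partial-transform picture and is presumably close to the source. The only points I would ask you to make fully explicit in a written-up version are the routine justifications for differentiating and contour-shifting under the $p_0$-integral, which are immediate here because $\mathcal{F}_0 f(\cdot,x_1)$ is supported in a fixed compact set independent of $x_1$.
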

We recall that this means that, for any $f,g\in\mathcal{A}$ and $h\in\mathcal{T}_{\kappa}$,
we have 
\[
h\triangleright(f\star g)=(h_{(1)}\triangleright f)\star(h_{(2)}\triangleright g)\ ,\qquad(h\triangleright f)^{*}=S(h)^{*}\triangleright f^{*}\ .
\]

\section{The Hilbert space}

Having at our disposal the algebra $\mathcal{A}$, the next step is
to introduce a Hilbert space together with a faithful $*$-representation
of the algebra on it. In this section we choose a weight $\omega$,
which is motivated by symmetry considerations, and use it to obtain
a Hilbert space $\mathcal{H}$ via the GNS construction. We show that
$\mathcal{H}$ is unitarily equivalent to $L^{2}(\mathbb{R}^{2})$
and determine the corresponding unitary operator $U$. Then we introduce
an unbounded $*$-representation $\rho$ of the extended momentum
algebra $\mathcal{T}_{\kappa}$ on $\mathcal{A}$, and prove that
$\rho(P_{\mu})$ and $\rho(\mathcal{E})$ are essentially self-adjoint
on $\mathcal{H}$. We show that the weight $\omega$ satisfies the
KMS condition with respect to a certain action $\alpha$, and determine
the corresponding modular operator $\Delta_{\omega}$. Finally we
prove some useful formulae that will be used extensively in the rest
of the paper.

\subsection{Definition of the Hilbert space $\mathcal{H}$}

We can introduce a Hilbert space, associated with the $*$-algebra
$\mathcal{A}$, via the GNS construction. The main ingredient is the
choice of a weight $\omega$ on $\mathcal{A}$ (recall that our algebra
is not unital). The inner product is defined in terms of $\omega$
by setting $(f,g):=\omega(f^{*}\star g)$ for $f,g\in\mathcal{A}$,
and the Hilbert space $\mathcal{H}$ is defined as the completion
of the algebra $\mathcal{A}$ in the norm induced by the inner product,
possibly after taking the quotient by the left ideal $\{f\in\mathcal{A}:\omega(f^{*}\star f)=0\}$.

A very natural choice for $\omega$ is given by the weight used in
the commutative case
\[
\omega(f):=\int f(x)d^{2}x\ ,
\]
where the integration is with respect to the Lebesgue measure. This
is motivated by simplicity, but more importantly by the invariance
of $\omega$ under the action of the $\kappa$-Poincaré algebra \cite{DuSi11}.
\begin{prop}
The weight $\omega$ is invariant under the action of $\mathcal{P}_{\kappa}$,
that is for any $f\in\mathcal{A}$ and for any $h\in\mathcal{P}_{\kappa}$
we have $\omega(h\triangleright f)=\varepsilon(h)\omega(f)$.
\end{prop}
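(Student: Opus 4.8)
The plan is to reduce the statement to a check on the algebra generators of $\mathcal{P}_{\kappa}$ and then verify each case directly, the only genuine analytic subtlety being the generator $\mathcal{E}$. First I would observe that it suffices to prove $\omega(h\triangleright f)=\varepsilon(h)\omega(f)$ for all $f\in\mathcal{A}$ and $h$ ranging over a generating set $\{P_{0},P_{1},\mathcal{E},N\}$ of $\mathcal{P}_{\kappa}$. Indeed, the action is an algebra representation, so $(h_{1}h_{2})\triangleright f=h_{1}\triangleright(h_{2}\triangleright f)$, and since $\mathcal{A}$ is a module algebra we have $h_{2}\triangleright f\in\mathcal{A}$. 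Hence, if invariance holds for $h_{1}$ and $h_{2}$ (applied to arbitrary elements of $\mathcal{A}$), then
\[
\omega((h_{1}h_{2})\triangleright f)=\omega(h_{1}\triangleright(h_{2}\triangleright f))=\varepsilon(h_{1})\omega(h_{2}\triangleright f)=\varepsilon(h_{1})\varepsilon(h_{2})\omega(f)=\varepsilon(h_{1}h_{2})\omega(f),
\]
using that $\varepsilon$ is an algebra homomorphism. Together with linearity and the trivial case $h=1$, an induction on word length extends the identity from the generators to all of $\mathcal{P}_{\kappa}$.

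For the translation generators $P_{\mu}$ I would use the explicit action from Theorem \ref{prop:tk-module}: since $(P_{\mu}\triangleright f)(x)=-i(\partial_{\mu}f)(x)$ and every $f\in\mathcal{A}$ is Schwartz, the integral of a total derivative vanishes, so $\omega(P_{\mu}\triangleright f)=-i\int(\partial_{\mu}f)(x)\,d^{2}x=0$, which matches $\varepsilon(P_{\mu})=0$.

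The substantive case is $\mathcal{E}$, where $(\mathcal{E}\triangleright f)(x)=f(x_{0}+i\lambda,x_{1})$ and we must show $\omega(\mathcal{E}\triangleright f)=\omega(f)$, consistent with $\varepsilon(\mathcal{E})=1$. Here I would exploit that $f\in\mathcal{A}=\mathcal{F}(\mathcal{S}_{c})$, so its Fourier transform has compact support $K$ in the first variable; by Paley--Wiener $f$ extends analytically in $x_{0}$, and writing $f=\mathcal{F}\phi$ with $\phi\in\mathcal{S}_{c}$, so that $f(x_{0}+is,x_{1})=\frac{1}{2\pi}\int e^{sp_{0}}e^{-i(x_{0}p_{0}+x_{1}p_{1})}\phi(p_{0},p_{1})\,d^{2}p$, one sees that the shifted function stays integrable since $e^{sp_{0}}$ is bounded for $p_{0}\in K$. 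Performing the $x_{0},x_{1}$ integrations then selects the zero Fourier mode and gives $\omega(\mathcal{E}\triangleright f)=2\pi\,e^{\lambda\cdot0}\phi(0,0)=2\pi\phi(0,0)=\omega(f)$; equivalently, one shifts the $x_{0}$-contour from $\operatorname{Im}x_{0}=\lambda$ back to the real axis by Cauchy's theorem. The main obstacle is precisely this analytic step: one must justify moving the contour, that is control $f$ uniformly in the strip $0\le\operatorname{Im}x_{0}\le\lambda$ and as $\operatorname{Re}x_{0}\to\pm\infty$; both follow from the compact support of the Fourier data in $p_{0}$ and from Schwartz decay.

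Finally, for the boost $N$ (with $\varepsilon(N)=0$) I would invoke its explicit action on $\mathcal{A}$, which is again a first-order operator with no constant term, so that $\omega(N\triangleright f)$ reduces to the integral of a derivative-type expression and vanishes exactly as in the $P_{\mu}$ case. In fact the whole computation is uniform in the Fourier picture: $\omega$ selects the zero mode, $P_{\mu}$ multiplies by a symbol proportional to $p_{\mu}$ (hence killing that mode), $\mathcal{E}$ by $e^{\lambda p_{0}}$ (equal to $1$ at $p=0$), and $N$ by a symbol annihilating the constant mode, which simultaneously encodes all four checks. This verifies invariance on the generators and, by the reduction above, on all of $\mathcal{P}_{\kappa}$.
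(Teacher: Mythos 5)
The paper does not actually prove this proposition: it is imported verbatim from \cite{DuSi11}, where the invariance of the Lebesgue integral under the full $\kappa$-Poincar\'e action is established. So there is no in-paper argument to compare against, and your proposal has to be judged on its own. The overall strategy is the right one and matches what one would do: reduce to generators using that the action is an algebra representation, that $\varepsilon$ is an algebra homomorphism, and that $\mathcal{A}$ is stable under the action; then check generators one by one. The $P_{\mu}$ case (integral of a total derivative of a Schwartz function) and the $\mathcal{E}$ case are correct, and you rightly identify the $\mathcal{E}$ case as the only analytic point: since $f=\mathcal{F}\phi$ with $\phi\in\mathcal{S}_{c}$ compactly supported in $p_{0}$, the function $e^{\lambda p_{0}}\phi$ is still Schwartz, so $\omega(\mathcal{E}\triangleright f)$ and $\omega(f)$ are both (the same constant times) $\phi(0)$; equivalently the contour shift is justified by the uniform rapid decay of $f(\cdot+is,\cdot)$ for $s\in[0,\lambda]$. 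Two small omissions there: the generating set should also contain $\mathcal{E}^{-1}$ (it appears in the antipode and is an independent generator in the presentation used here), handled identically with $e^{-\lambda p_{0}}$; and the stability $h\triangleright f\in\mathcal{A}$ for $h=N$ is itself part of what must be checked, since the paper only establishes the module-algebra property for $\mathcal{T}_{\kappa}$ (Theorem \ref{prop:tk-module}), not for all of $\mathcal{P}_{\kappa}$.

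The genuine gap is the boost $N$. You never write down its action on $\mathcal{A}$, and the properties you attribute to it are not accurate as stated: in position space the deformed boost is not simply ``a first-order operator with no constant term'' (its defining relations involve $e^{-2P_{0}/\kappa}$, i.e.\ imaginary shifts, and multiplication by coordinates, which in the star-product picture is nontrivial), and in the Fourier picture $N$ does not act ``by a symbol'' (a multiplication operator) at all --- it acts as a first-order differential operator (a vector field) on momentum space, possibly plus lower-order terms. The correct mechanism is that this vector field vanishes at the origin of momentum space (the boost fixes zero momentum), so that evaluating $\widehat{N\triangleright f}$ at $p=0$ --- which is what $\omega$ does --- gives zero. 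That is a short computation once the explicit action of $N$ from \cite{maj} or \cite{DuSi11} is written down, but as it stands your argument for $N$ is an assertion rather than a proof, and it is precisely the generator for which the deformation is nontrivial. To close the gap, state the action of $N$ on $\mathcal{A}$ (or on $\mathcal{F}\mathcal{A}=\mathcal{S}_{c}$), verify that it preserves $\mathcal{S}_{c}$, and verify that the associated vector field on momentum space vanishes at $p=0$.
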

We recall two additional results obtained in \cite{DuSi11}, which
we need in the following.
\begin{prop}
\label{prop:twisted-dusi}For any $f,g\in\mathcal{A}$ the weight
$\omega$ satisfies the twisted trace property
\[
\omega(f\star g)=\omega((\mathcal{E}\triangleright g)\star f)\ .
\]
Moreover we have
\[
\omega(f\star g^{*})=\int f(x)\overline{g}(x)d^{2}x\ .
\]

\end{prop}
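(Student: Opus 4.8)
The plan is to prove both identities by direct computation from the explicit formulas (\ref{eq:alg-formulae}) for the product and involution, combined with the action of $\mathcal{E}$ supplied by Theorem \ref{prop:tk-module}. In both cases the strategy is the same: substitute the defining integral formulas, carry out the $x_0$-integration first to produce a Fourier transform in the first variable (or a delta function), and then reorganize by elementary changes of variable.

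For the twisted trace property, I would insert the star product into $\omega(f\star g)=\int(f\star g)(x)\,d^2x$ and integrate over $x_0$ first. Since $\int e^{ip_0x_0}g(x_0,e^{-\lambda p_0}x_1)\,dx_0=(\mathcal{F}_0 g)(-p_0,e^{-\lambda p_0}x_1)$, this reduces $\omega(f\star g)$ to $\int(\mathcal{F}_0 f)(p_0,x_1)(\mathcal{F}_0 g)(-p_0,e^{-\lambda p_0}x_1)\,\frac{dp_0}{2\pi}\,dx_1$. The key computational lemma is that $\mathcal{F}_0(\mathcal{E}\triangleright g)(p_0,x_1)=e^{-\lambda p_0}(\mathcal{F}_0 g)(p_0,x_1)$, obtained by shifting the integration variable by $i\lambda$ in $\int e^{-ip_0 y_0}g(y_0+i\lambda,x_1)\,dy_0$. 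Running the same $x_0$-integration through $\omega((\mathcal{E}\triangleright g)\star f)$ and applying this lemma yields $\int e^{-\lambda q_0}(\mathcal{F}_0 g)(q_0,x_1)(\mathcal{F}_0 f)(-q_0,e^{-\lambda q_0}x_1)\,\frac{dq_0}{2\pi}\,dx_1$. I would then substitute $q_0\mapsto -p_0$ and rescale $x_1\mapsto e^{-\lambda p_0}x_1$: the Jacobian $e^{-\lambda p_0}$ exactly cancels the factor $e^{\lambda p_0}$ produced, matching the two expressions term by term.

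For the second identity, I would expand $f\star g^*$ by substituting the formula for $g^*$ into the star product, producing a double $p_0,q_0$ integral with phase $e^{i(p_0+q_0)x_0}$ and with argument $e^{-\lambda(p_0+q_0)}x_1$ in the $\overline{g}$ factor. Integrating over $x_0$ gives $2\pi\delta(p_0+q_0)$, forcing $q_0=-p_0$; crucially the exponential factor $e^{-\lambda(p_0+q_0)}$ then collapses to $1$, removing all $\lambda$-dependence. Using $(\mathcal{F}_0\overline{g})(-p_0,x_1)=\overline{(\mathcal{F}_0 g)(p_0,x_1)}$ leaves $\int(\mathcal{F}_0 f)(p_0,x_1)\overline{(\mathcal{F}_0 g)(p_0,x_1)}\,\frac{dp_0}{2\pi}\,dx_1$, and a single application of the Plancherel theorem in the $p_0$ variable collapses this to $\int f(x)\overline{g}(x)\,d^2x$.

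The step requiring the most care is the contour shift used to establish $\mathcal{F}_0(\mathcal{E}\triangleright g)=e^{-\lambda p_0}\mathcal{F}_0 g$, since it moves the contour off the real axis by $i\lambda$. This is justified precisely by the analyticity of elements of $\mathcal{A}$ in the first variable, which follows from the Paley--Wiener theorem via the compact support of $\mathcal{F}g$ in the first variable, together with the Schwartz decay needed to kill the vertical segments of the contour at infinity. Everything else is bookkeeping with elementary Fourier identities and changes of variable that are well defined on $\mathcal{A}$.
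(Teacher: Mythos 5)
Your computation is correct, but there is nothing in the paper to compare it against: Proposition \ref{prop:twisted-dusi} is explicitly \emph{recalled} from \cite{DuSi11} and the paper offers no proof of it. Your argument therefore supplies a self-contained verification rather than an alternative to an existing one. I checked the steps: reducing $\omega(f\star g)$ to $\int(\mathcal{F}_0 f)(p_0,x_1)(\mathcal{F}_0 g)(-p_0,e^{-\lambda p_0}x_1)\,\frac{dp_0}{2\pi}\,dx_1$ is right; the key identity $\mathcal{F}_0(\mathcal{E}\triangleright g)=e^{-\lambda p_0}\mathcal{F}_0 g$ is exactly the statement that $\mathcal{E}=e^{-\lambda P_0}$ acts as multiplication by $e^{-\lambda p_0}$ on the Fourier side, and the contour shift you use to prove it is legitimate because $\mathcal{F}_0 g$ is compactly supported in $p_0$ (so $g$ is entire in $x_0$ with rapid decay along horizontal lines, killing the vertical segments); and the substitution $q_0\mapsto -p_0$, $x_1\mapsto e^{-\lambda p_0}x_1$ does produce the cancelling Jacobian. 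For the second identity, the delta function forcing $q_0=-p_0$ collapses $e^{-\lambda(p_0+q_0)}$ to $1$ as you say, and your Plancherel normalization ($\frac{dp_0}{2\pi}$ against the non-unitary $\mathcal{F}_0$) matches the conventions of (\ref{eq:alg-formulae}). All integrals converge absolutely since $\mathcal{F}_0 f$ and $\mathcal{F}_0 g$ are Schwartz and compactly supported in $p_0$, so the Fubini interchanges are harmless. The one cosmetic point worth tightening in a written version is to either regularize the $2\pi\delta(p_0+q_0)$ step or, equivalently, perform the $q_0$-integration first and then apply Parseval in $x_0$ directly; as stated it is fine for functions in $\mathcal{A}$.
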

Later the twisted trace property is going to be rewritten in the language
of KMS weights. Now we construct the Hilbert space using the GNS construction
for the weight $\omega$.
\begin{prop}
The inner product $(\cdot,\cdot)$ is non-degenerate. Completing $\mathcal{A}$
in the induced norm gives a Hilbert space $\mathcal{H}$, which is
unitarily equivalent to $L^{2}(\mathbb{R}^{2})$ via the unitary operator
\[
(Uf)(x)=\int e^{ip_{0}x_{0}}(\mathcal{F}_{0}f)(p_{0},e^{\lambda p_{0}}x_{1})\frac{dp_{0}}{2\pi}\ .
\]
\end{prop}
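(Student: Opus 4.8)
The plan is to \emph{guess} the unitary $U$ and verify it directly, by showing that $U$ is an isometry from $(\mathcal{A},(\cdot,\cdot))$ into $L^{2}(\mathbb{R}^{2})$ with dense range; non-degeneracy of $(\cdot,\cdot)$ then comes for free. The heuristic behind the explicit form of $U$ is that the inner product $(f,g)=\omega(f^{*}\star g)$ will turn out to carry a weight $e^{-\lambda p_{0}}$ in the first Fourier variable, and the $p_{0}$-dependent dilation $x_{1}\mapsto e^{\lambda p_{0}}x_{1}$ built into $U$ is exactly what absorbs this weight into the flat Lebesgue measure.

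First I would compute the inner product in closed form. Writing $\hat{f}=\mathcal{F}_{0}f$, I claim that
\[
(f,g)=\omega(f^{*}\star g)=\int\frac{dp_{0}}{2\pi}\,e^{-\lambda p_{0}}\int dx_{1}\,\overline{\hat{f}(p_{0},x_{1})}\,\hat{g}(p_{0},x_{1})\ .
\]
The cleanest route avoids expanding the involution: by the twisted trace property of Proposition~\ref{prop:twisted-dusi} one has $\omega(f^{*}\star g)=\omega((\mathcal{E}\triangleright g)\star f^{*})$, and the second identity of the same proposition gives $\omega((\mathcal{E}\triangleright g)\star f^{*})=\int(\mathcal{E}\triangleright g)(x)\,\overline{f(x)}\,d^{2}x$. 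Using the action of $\mathcal{E}$ from Theorem~\ref{prop:tk-module}, namely $(\mathcal{E}\triangleright g)(x)=g(x_{0}+i\lambda,x_{1})$, and expanding $g(x_{0}+i\lambda,x_{1})$ via the inverse Fourier transform in the first variable, the imaginary shift produces precisely the factor $e^{-\lambda p_{0}}$; integrating out $x_{0}$ against $\overline{f}$ then collapses the two momentum integrals to one via a delta function, yielding the displayed formula. (Equivalently one inserts the explicit expressions (\ref{eq:alg-formulae}) for $\star$ and $*$ and carries out the same manipulation; the analytic shift is legitimate because $\mathcal{F}_{0}f$ has compact support in $p_{0}$, so the Paley--Wiener analyticity of $f$ in the first variable guarantees convergence.)

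Next I would match this against the right-hand side. Substituting the definition of $U$ into $(Uf,Ug)_{L^{2}}=\int\overline{(Uf)(x)}(Ug)(x)\,d^{2}x$ and integrating over $x_{0}$ produces $2\pi\delta(p_{0}-q_{0})$, identifying the two momentum variables; the remaining $x_{1}$-integral then carries arguments $e^{\lambda p_{0}}x_{1}$, and the change of variables $y_{1}=e^{\lambda p_{0}}x_{1}$ generates the Jacobian $e^{-\lambda p_{0}}$. The outcome coincides term by term with the formula for $(f,g)$ above, so $U$ is an isometry. Non-degeneracy is then immediate: setting $g=f$ gives $(f,f)=\int\frac{dp_{0}}{2\pi}e^{-\lambda p_{0}}\int dx_{1}\,|\hat{f}(p_{0},x_{1})|^{2}\ge 0$, which vanishes only when $\hat{f}\equiv 0$, i.e.\ $f=0$; in particular the GNS null ideal is trivial and no quotient is needed.

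Finally I would establish that $U$ has dense range, so that the isometry extends to a surjective unitary $\mathcal{H}\to L^{2}(\mathbb{R}^{2})$. Reading off its structure, $U=\mathcal{F}_{0}^{-1}\,M_{\lambda}\,\mathcal{F}_{0}$, where $M_{\lambda}$ acts on functions of $(p_{0},x_{1})$ by $(M_{\lambda}\psi)(p_{0},x_{1})=\psi(p_{0},e^{\lambda p_{0}}x_{1})$. Since $\mathcal{A}=\mathcal{F}(\mathcal{S}_{c})$ and taking $\mathcal{F}_{0}$ transforms only the first variable, $\mathcal{F}_{0}(\mathcal{A})$ is exactly the space of Schwartz functions of $(p_{0},x_{1})$ with compact support in $p_{0}$, and on a compact $p_{0}$-set the factor $e^{\lambda p_{0}}$ is bounded above and away from zero, so $M_{\lambda}$ is a bijection of this space. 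Hence $U(\mathcal{A})=\mathcal{F}_{0}^{-1}$ of this space, which is dense in $L^{2}(\mathbb{R}^{2})$ because such functions are dense and $\mathcal{F}_{0}^{-1}$ is a Fourier isomorphism preserving $L^{2}$-density. I expect this last step to be the main obstacle: the isometry identity is a direct (if careful) computation, whereas surjectivity requires controlling the interplay between the compact-support constraint inherited from $\mathcal{S}_{c}$ and the $p_{0}$-dependent dilation, and verifying that it does not shrink the range away from a dense subspace.
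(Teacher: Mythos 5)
Your proposal is correct, but it verifies the statement by a different route than the paper. The paper identifies $U$ through the involution: it sets $J\psi:=\psi^{*}$, factors $J=J_{c}U$ with $J_{c}$ complex conjugation, and then gets the isometry in one line from the identity $\omega(f^{*}\star g)=\int f^{*}(x)\overline{g^{*}}(x)\,d^{2}x$ (so $\|f\|^{2}=\|Jf\|_{L^{2}}^{2}=\|Uf\|_{L^{2}}^{2}$), while surjectivity is essentially free because $J$ is an involution, hence invertible on $\mathcal{A}$, so $U=J_{c}J$ already has range containing the dense set $\mathcal{A}$. You instead bypass the involution entirely: you diagonalize the inner product in the first Fourier variable, obtaining
\[
(f,g)=\int\frac{dp_{0}}{2\pi}\,e^{-\lambda p_{0}}\int dx_{1}\,\overline{(\mathcal{F}_{0}f)(p_{0},x_{1})}\,(\mathcal{F}_{0}g)(p_{0},x_{1})\ ,
\]
check the isometry by matching this against a direct computation of $(Uf,Ug)_{L^{2}}$, and prove density of the range via the factorization $U=\mathcal{F}_{0}^{-1}M_{\lambda}\mathcal{F}_{0}$ with $M_{\lambda}$ a bijection of $\mathcal{F}_{0}(\mathcal{A})$. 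Both arguments are sound (your delta-function and analytic-shift manipulations are justified by the compact $p_{0}$-support of $\mathcal{F}_{0}f$ for $f\in\mathcal{A}$, as you note). The paper's route is shorter and exhibits $U$ as the unitary part of the involution --- essentially the Tomita decomposition $S=J\Delta^{1/2}$ in embryo --- whereas your route makes the modular weight $e^{-\lambda p_{0}}$ (i.e.\ $\Delta_{\omega}=e^{-\lambda\hat{P}_{0}}$ in momentum space) visible in the inner product and gives a more explicit, self-contained surjectivity argument than the paper's rather terse ``$U$ is invertible in $\mathcal{A}$ since $J$ is.''
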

\begin{proof}
From the second part of Proposition \ref{prop:twisted-dusi} we have
\[
(f,g)=\omega(f^{*}\star g)=\int f^{*}(x)\overline{g^{*}}(x)d^{2}x\ .
\]
In particular if we set $f=g$ we have
\begin{equation}
\|f\|^{2}=(f,f)=\int|f^{*}(x)|^{2}d^{2}x\geq0\ .\label{eq:pos-norm}
\end{equation}
Notice that $f^{*}=0$ if and only if $f=0$, due to the properties
of the involution. Moreover we have $f^{*}\in C_{0}(\mathbb{R}^{2})$,
the functions vanishing at infinity on $\mathbb{R}^{2}$, and we know
that the Lebesgue integral is faithful on this space. Therefore we
have a positive-definite inner product and the left ideal $\{f\in\mathcal{A}:\omega(f^{*}\star f)=0\}$
consists only of $f=0$. By completing $\mathcal{A}$ with respect
to the norm $\|\cdot\|$ we obtain a Hilbert space, which we denote
by $\mathcal{H}$.

Now we prove that it is unitarily equivalent to $L^{2}(\mathbb{R}^{2})$
and determine the corresponding unitary operator $U:\mathcal{H}\to L^{2}(\mathbb{R}^{2})$.
Define $J\psi:=\psi^{*}$, where $*$ is the involution in $\mathcal{A}$,
and write $J:=J_{c}U$, where $J_{c}$ is complex conjugation. From
the formula (\ref{eq:alg-formulae}) we obtain
\[
(Uf)(x)=\int e^{ip_{0}x_{0}}(\mathcal{F}_{0}f)(p_{0},e^{\lambda p_{0}}x_{1})\frac{dp_{0}}{2\pi}\ .
\]
From equation (\ref{eq:pos-norm}) it follows that $U$ is an isometry
from $\mathcal{A}\subset\mathcal{H}$ to $L^{2}(\mathbb{R}^{2})$.
Indeed
\[
\|f\|^{2}=\int|(Jf)(x)|^{2}d^{2}x=\int|(Uf)(x)|^{2}d^{2}x=\|Uf\|_{L^{2}(\mathbb{R}^{2})}^{2}\ .
\]
Since $\mathcal{A}$ is a dense subset of $\mathcal{H}$ it follows
that $U$ can be extended by continuity to an isometry from $\mathcal{H}$
to $L^{2}(\mathbb{R}^{2})$. We still need to prove that $U$ is surjective.
We have that $U$ is invertible in $\mathcal{A}$ since $J=J_{c}U$
is invertible in $\mathcal{A}$, therefore the image of $U$ contains
$\mathcal{A}$. Notice that $\mathcal{A}$ is also a dense subset
of $L^{2}(\mathbb{R}^{2})$. Then, since $U$ is continuous, it contains
the closure of $\mathcal{A}$, which is indeed $L^{2}(\mathbb{R}^{2})$.
So we have shown that $U$ is isometric and surjective, therefore
unitary.\end{proof}
\begin{cor}
The involution, denoted here by $J$, is an antiunitary operator from
$\mathcal{H}$ to $L^{2}(\mathbb{R}^{2})$.\end{cor}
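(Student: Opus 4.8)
The plan is to read off the conclusion from the factorization $J=J_{c}U$ already obtained in the proof of the previous proposition, where $U:\mathcal{H}\to L^{2}(\mathbb{R}^{2})$ is the unitary just constructed and $J_{c}$ denotes complex conjugation on $L^{2}(\mathbb{R}^{2})$. Recall that an operator is antiunitary precisely when it is an antilinear, isometric bijection; equivalently, when it is antilinear, surjective, and satisfies the conjugated inner-product identity $(J\psi,J\phi)=\overline{(\psi,\phi)}$. I would establish these properties in turn, using only that $U$ is unitary and that $J_{c}$ is an antiunitary involution of $L^{2}(\mathbb{R}^{2})$.

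First, antilinearity: for a scalar $c$ one has $J(c\psi)=J_{c}U(c\psi)=J_{c}(cU\psi)=\bar{c}\,J_{c}U\psi=\bar{c}\,J\psi$, using that $U$ is linear and $J_{c}$ is antilinear. Next, the inner-product identity: since $U$ preserves the inner product and $J_{c}$ conjugates it on $L^{2}(\mathbb{R}^{2})$, one computes $(J\psi,J\phi)=(J_{c}U\psi,J_{c}U\phi)=\overline{(U\psi,U\phi)}=\overline{(\psi,\phi)}$, which in particular gives $\|J\psi\|=\|\psi\|$, so $J$ is isometric. Finally, surjectivity: $U$ maps $\mathcal{H}$ onto all of $L^{2}(\mathbb{R}^{2})$ and $J_{c}$ is a bijection of $L^{2}(\mathbb{R}^{2})$ onto itself, hence $J=J_{c}U$ is onto. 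Collecting these facts shows that $J$ is antiunitary.

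The only point carrying genuine content is the factorization $J=J_{c}U$ itself, namely that the involution $\psi\mapsto\psi^{*}$ agrees, as an element of $L^{2}(\mathbb{R}^{2})$, with the complex conjugate of $U\psi$. This was already used implicitly in deriving the norm identity of the preceding proof, where the explicit formulas (\ref{eq:alg-formulae}) for $f^{*}$ and for $Uf$ are matched via the relation $\overline{(\mathcal{F}_{0}f)(p_{0},\cdot)}=(\mathcal{F}_{0}\overline{f})(-p_{0},\cdot)$ together with the change of variable $p_{0}\mapsto-p_{0}$ in the integral. I therefore do not expect any real obstacle here: once $U$ is known to be unitary, the corollary reduces to the general observation that precomposing complex conjugation with a unitary operator produces an antiunitary one.
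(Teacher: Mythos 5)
Your proposal is correct and follows exactly the paper's own route: the corollary is deduced from the factorization $J=J_{c}U$ established in the preceding proposition, combined with the facts that $U$ is unitary and $J_{c}$ is antiunitary on $L^{2}(\mathbb{R}^{2})$. You simply spell out the antilinearity, inner-product conjugation, and surjectivity checks that the paper leaves implicit.
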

\begin{proof}
It follows immediately from the previous proposition. Indeed we have
that $U$ is unitary from $\mathcal{H}$ to $L^{2}(\mathbb{R}^{2})$
and complex conjugation is an antiunitary operator in $L^{2}(\mathbb{R}^{2})$
so, since $J=J_{c}U$, it follows that $J$ it is an antiunitary operator
from $\mathcal{H}$ to $L^{2}(\mathbb{R}^{2})$.
\end{proof}
For the construction of a spectral triple we need the following ingredient:
for any $f\in\mathcal{A}$ the $*$-representation $\pi(f)$ in $\mathcal{H}$,
given by the multiplication $\pi(f)\psi=f\star\psi$ for $\psi\in\mathcal{H}$,
should be a bounded operator. This follows from the next proposition.
\begin{prop}
For any $f\in\mathcal{A}$ the operator $\pi(f)$ is bounded on $\mathcal{H}$.\end{prop}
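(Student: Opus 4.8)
The plan is to transport the operator to $L^{2}(\mathbb{R}^{2})$ via the unitary $U$ and then to bound the resulting operator by a Schur-test argument, the key input being that every $f\in\mathcal{A}$ is a Schwartz function. Since $U:\mathcal{H}\to L^{2}(\mathbb{R}^{2})$ is unitary, $\pi(f)$ is bounded on $\mathcal{H}$ if and only if $M_{f}:=U\pi(f)U^{-1}$ is bounded on $L^{2}(\mathbb{R}^{2})$, with equal norms, so it suffices to treat $M_{f}$. First I would pass to the mixed representation obtained by applying $\mathcal{F}_{0}$ in the first variable, writing $\hat{g}:=\mathcal{F}_{0}g$. From the formula for $U$ one reads off that $U$ acts as the $p_{0}$-dependent dilation $(\mathcal{F}_{0}Ug)(p_{0},x_{1})=\hat{g}(p_{0},e^{\lambda p_{0}}x_{1})$, so $U^{-1}$ replaces $e^{\lambda p_{0}}$ by $e^{-\lambda p_{0}}$.

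Feeding the product formula (\ref{eq:alg-formulae}) through $\mathcal{F}_{0}$ turns the star product into a twisted convolution in the first variable, $(\mathcal{F}_{0}(f\star\psi))(k_{0},x_{1})=\int\hat{f}(p_{0},x_{1})\hat{\psi}(k_{0}-p_{0},e^{-\lambda p_{0}}x_{1})\frac{dp_{0}}{2\pi}$, and after conjugating by $U$ the dilation factors combine to give, for $\phi=U\psi$, the clean expression $(\mathcal{F}_{0}M_{f}\phi)(k_{0},x_{1})=\int\hat{f}(p_{0},e^{\lambda k_{0}}x_{1})\,\hat{\phi}(k_{0}-p_{0},x_{1})\,\frac{dp_{0}}{2\pi}$. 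Next I would regard $x_{1}$ as a parameter: for each fixed $x_{1}$ this defines an integral operator $T_{x_{1}}$ on $L^{2}(\mathbb{R},dk_{0}/2\pi)$ with kernel $K_{x_{1}}(k_{0},r_{0})=\hat{f}(k_{0}-r_{0},e^{\lambda k_{0}}x_{1})$, and Plancherel in the first variable yields $\|M_{f}\phi\|_{L^{2}}^{2}=\int\|T_{x_{1}}\hat{\phi}(\cdot,x_{1})\|^{2}\,dx_{1}\le\big(\sup_{x_{1}}\|T_{x_{1}}\|\big)^{2}\|\phi\|_{L^{2}}^{2}$. Hence it is enough to bound $\sup_{x_{1}}\|T_{x_{1}}\|$, which I would do by the Schur test, controlling $\|T_{x_{1}}\|$ by the geometric mean of $\sup_{k_{0}}\int|K_{x_{1}}(k_{0},r_{0})|\frac{dr_{0}}{2\pi}$ and $\sup_{r_{0}}\int|K_{x_{1}}(k_{0},r_{0})|\frac{dk_{0}}{2\pi}$.

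The decisive input is that $f\in\mathcal{A}$ is Schwartz, hence so is $\hat{f}=\mathcal{F}_{0}f$, giving $|\hat{f}(a,b)|\le C_{N}(1+|a|)^{-N}(1+|b|)^{-N}$ for every $N$. For the first Schur factor the substitution $s=k_{0}-r_{0}$ reduces the integral to $\int|\hat{f}(s,e^{\lambda k_{0}}x_{1})|\frac{ds}{2\pi}$, which the decay in the first argument bounds uniformly in $k_{0}$ and $x_{1}$. For the second factor I would simply discard the harmless decay in the second argument (bounded by $1$) and use only $(1+|k_{0}-r_{0}|)^{-N}$, which integrates over $k_{0}$ to a constant independent of $r_{0}$ and $x_{1}$. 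Both factors being uniformly bounded, $\sup_{x_{1}}\|T_{x_{1}}\|<\infty$ and the proposition follows. I expect the second Schur factor to be the main obstacle: the factor $e^{\lambda k_{0}}$ introduced by $U$ sits inside the $k_{0}$-integration and could in principle destroy integrability, and the content of the estimate is that the Schwartz decay of $\hat{f}$ in its first argument alone already suffices, so the exponential dilation does no harm.

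\bigskip

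An alternative, more structural route would be to identify the GNS left-multiplication representation $\pi$ with one of the representations $W_{\tilde{\pi}}(f)=\tilde{\pi}(\mathcal{F}f)$ associated to a unitary representation $\tilde{\pi}$ of $G$, since in that case boundedness is immediate from $\|W_{\tilde{\pi}}(f)\|\le\|\mathcal{F}f\|_{L^{1}(G)}<\infty$, the latter being finite because $\mathcal{F}f$ is Schwartz. However, the GNS inner product carries the twist $\|f\|^{2}=\int|f^{*}(x)|^{2}d^{2}x$ rather than the plain $L^{2}(G)$ norm, so this identification would require some extra bookkeeping; I therefore prefer the direct computation above.
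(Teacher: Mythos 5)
Your main argument is correct, and it takes a genuinely different route from the paper. The paper's proof is abstract and $C^{*}$-algebraic: it observes that $\mathcal{A}$ is the Fourier-transform image of the convolution algebra $L^{1}(G)$, uses amenability of $G$ to identify $C^{*}(G)=C_{r}^{*}(G)$, equips $\mathcal{A}$ with the $C^{*}$-norm $\|f\|:=\|\mathcal{F}f\|_{C^{*}(G)}$, and invokes the general fact that a GNS representation is norm-decreasing, giving $\|\pi(f)\|\leq\|f\|$. Your proof instead conjugates by $U$, passes to the partial Fourier picture, fibres the operator over $x_{1}$ into the integral operators $T_{x_{1}}$ with kernel $\hat{f}(k_{0}-r_{0},e^{\lambda k_{0}}x_{1})$, and runs a Schur test; I checked the twisted-convolution formula and the cancellation $e^{-\lambda(k_{0}-p_{0})}e^{-\lambda p_{0}}=e^{-\lambda k_{0}}$ that produces your clean kernel, and both are right. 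Your route is elementary and self-contained, yields an explicit quantitative bound of $L^{1}$ type on $\mathcal{F}_{0}f$, and makes visible exactly where the potentially dangerous dilation $e^{\lambda k_{0}}$ is absorbed (it only ever enters the second argument of $\hat{f}$, which you may discard); the paper's route is shorter, but buys the stronger structural statement that $\pi$ extends to the group $C^{*}$-algebra. Two small points of hygiene: your identity for $M_{f}\phi$ should be derived for $\phi$ in the dense subspace $U\mathcal{A}=\mathcal{A}$ and then extended by the uniform bound, since the star-product formula is only given there; and the ``alternative structural route'' you sketch at the end is essentially the paper's actual proof --- the ``extra bookkeeping'' you worry about is resolved precisely by noting that the GNS Hilbert space for $\omega$ is unitarily equivalent (via $\mathcal{F}$ and $U$) to the regular representation picture, so the estimate $\|\pi(f)\|\leq\|\mathcal{F}f\|_{C^{*}(G)}\leq\|\mathcal{F}f\|_{L^{1}(G)}$ does apply.
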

\begin{proof}
Recall that the $*$-algebra $\mathcal{A}$ is built from $L^{1}(G)$,
the convolution algebra of the Lie group $G$ associated to the $\kappa$-Minkowski
space in two dimensions. Any convolution algebra can be completed
to a $C^{*}$-algebra, either the group $C^{*}$-algebra $C^{*}(G)$
or the reduced one $C_{r}^{*}(G)$. In this case they coincide, since
$G$ is amenable, and we denote by $\|\cdot\|_{C^{*}(G)}$ the associated
$C^{*}$-norm. Now we can define a $C^{*}$-norm on $\mathcal{A}$
by setting $\|f\|:=\|\mathcal{F}f\|_{C^{*}(G)}$ for any $f\in\mathcal{A}$.
The verification that $\|\cdot\|$ is a $C^{*}$-norm is easy, indeed
using the formulae in (\ref{eq:alg-ops}) we have
\[
\begin{split}\|f\star f^{*}\| & =\|\mathcal{F}(f\star f^{*})\|_{C^{*}(G)}=\|\mathcal{F}f\hat{\star}\mathcal{F}f^{*}\|_{C^{*}(G)}\\
 & =\|\mathcal{F}f\hat{\star}(\mathcal{F}f)^{\hat{*}}\|_{C^{*}(G)}=\|\mathcal{F}f\|_{C^{*}(G)}^{2}=\|f\|^{2}\ .
\end{split}
\]
In the second line we have used the $C^{*}$-property of the norm
$\|\cdot\|_{C^{*}(G)}$. Then, since $\mathcal{H}$ is the Hilbert
space associated to $\mathcal{A}$ via the GNS construction, it follows
from the general properties of this construction that the operator
norm of $\pi(f)$ is bounded by $\|f\|$.
\end{proof}

\subsection{The representation of $\mathcal{T}_{\kappa}$}

In this subsection we are going to introduce an unbounded $*$-representation
$\rho$ of $\mathcal{T}_{\kappa}$ on a dense subspace of $\mathcal{H}$,
that is we extend the representation of the extended momentum algebra
to the Hilbert space. An obvious choice for the dense subspace is
$\mathcal{A}$, which by construction is dense in $\mathcal{H}$.
Recall that by Proposition \ref{prop:tk-module} we have that $\mathcal{A}$
is a left $\mathcal{T}_{\kappa}$-module $*$-algebra, where the representation
$\triangleright$ of $\mathcal{T}_{\kappa}$ on $\mathcal{A}$ is
defined by the formulae 
\begin{equation}
(P_{\mu}\triangleright f)(x)=-i(\partial_{\mu}f)(x)\ ,\qquad(\mathcal{E}\triangleright f)(x)=f(x_{0}+i\lambda,x_{1})\ .
\end{equation}
Then if we set $\rho(h)\psi:=h\triangleright\psi$, for every $h\in\mathcal{T}_{k}$
and $\psi\in\mathcal{A}$, we get an unbounded representation of $\mathcal{T}_{\kappa}$
on $\mathcal{A}$. Note that $\mathcal{A}$ is invariant under the
action of $\mathcal{T}_{\kappa}$ and the equivariance property for
the representation $\pi$ is automatic, since $\pi$ is given by left
multiplication and the equivariance property is just a restatement
of the fact that $\mathcal{A}$ is a left $\mathcal{T}_{\kappa}$-module
algebra.

To prove that $\rho$ is an unbounded $*$-representation we need
to show that the equality $(\rho(h)\phi,\psi)=(\phi,\rho(h^{*})\psi)$
holds for all $\phi,\psi\in\mathcal{A}$ and $h\in\mathcal{T}_{\kappa}$.
We only need to check this condition for the generators of $\mathcal{T}_{\kappa}$,
that is for the operators $\rho(P_{\mu})$ and $\rho(\mathcal{E})$
with domain $\mathcal{A}$. We are going to prove the stronger statement
that these operators are essentially self-adjoint on $\mathcal{H}$,
from which the previous equality follows. First we need a simple lemma.
\begin{lem}
For any $\psi\in\mathcal{A}$ we have $U\rho(P_{0})U^{-1}\psi=\rho(P_{0})\psi$
and $U\rho(P_{1})U^{-1}\psi=\rho(\mathcal{E})\rho(P_{1})\psi$.\end{lem}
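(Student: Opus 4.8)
The plan is to recognize both identities as intertwining relations for the unitary $U$ and to verify them by a direct computation on the dense domain $\mathcal{A}$, where the operators $\rho(P_0)$, $\rho(P_1)$, $\rho(\mathcal{E})$ all act and where the Schwartz decay, together with the compact support of $\mathcal{F}_0\phi$ in $p_0$, justifies differentiating under the integral sign, integrating by parts, and performing the analytic shift in $x_0$. Since the previous proposition shows that $U$ is invertible on $\mathcal{A}$, I may write $\phi=U^{-1}\psi\in\mathcal{A}$; it then suffices to establish the operator identities $U\rho(P_0)\phi=\rho(P_0)U\phi$ and $U\rho(P_1)\phi=\rho(\mathcal{E})\rho(P_1)U\phi$ for all $\phi\in\mathcal{A}$, and to apply them to $\phi=U^{-1}\psi$.

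For $P_0$ I would use that $\mathcal{F}_0$ converts $\rho(P_0)=-i\partial_0$ into multiplication by $p_0$, i.e.\ $\mathcal{F}_0(-i\partial_0\phi)=p_0\,\mathcal{F}_0\phi$, so that in $U\rho(P_0)\phi$ the factor $p_0$ appears inside the $p_0$-integral. On the other hand, applying $\rho(P_0)=-i\partial_{x_0}$ to $U\phi$, the derivative hits only the exponential $e^{ip_0x_0}$ (the rescaling $x_1\mapsto e^{\lambda p_0}x_1$ involves only the second variable) and again brings down a factor $p_0$. The two integrands coincide, giving $U\rho(P_0)=\rho(P_0)U$.

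The $P_1$ case is the substantive one. Here $\rho(P_1)=-i\partial_1$ commutes with $\mathcal{F}_0$, so $U\rho(P_1)\phi$ equals the integral of $e^{ip_0x_0}$ times $-i(\partial_2\mathcal{F}_0\phi)(p_0,e^{\lambda p_0}x_1)$, where $\partial_2$ denotes the derivative in the second slot. By contrast, applying $\rho(P_1)=-i\partial_{x_1}$ to $U\phi$ produces, via the chain rule applied to the argument $e^{\lambda p_0}x_1$, an extra factor $e^{\lambda p_0}$ inside the integral. The key observation is that the modular operator $\rho(\mathcal{E})$, acting as the shift $x_0\mapsto x_0+i\lambda$, replaces $e^{ip_0x_0}$ by $e^{ip_0x_0}e^{-\lambda p_0}$; this factor $e^{-\lambda p_0}$ exactly cancels the chain-rule factor $e^{\lambda p_0}$ and turns $\rho(\mathcal{E})\rho(P_1)U\phi$ into $U\rho(P_1)\phi$. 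This yields $U\rho(P_1)=\rho(\mathcal{E})\rho(P_1)U$.

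The main obstacle is precisely this bookkeeping of exponential factors in the $P_1$ computation, together with justifying the analytic shift underlying $\rho(\mathcal{E})$: one must check that the $p_0$-support of $\mathcal{F}_0\phi$ is compact (which holds since $\phi\in\mathcal{A}=\mathcal{F}(\mathcal{S}_c)$), so that $e^{-\lambda p_0}$ is bounded on the relevant range and the displacement $x_0\mapsto x_0+i\lambda$ is legitimate by the Paley--Wiener analyticity of $\phi$ in $x_0$. Once the compact support and Schwartz decay legitimize differentiation under the integral and the complex shift, both identities follow by matching integrands, and evaluating them at $\phi=U^{-1}\psi$ gives the stated equalities.
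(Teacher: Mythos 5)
Your proof is correct, but it takes a genuinely different route from the paper's. You verify the two intertwining relations by a direct Fourier-side computation: writing out $(U\phi)(x)=\int e^{ip_0x_0}(\mathcal{F}_0\phi)(p_0,e^{\lambda p_0}x_1)\frac{dp_0}{2\pi}$, noting that $\mathcal{F}_0$ turns $-i\partial_0$ into multiplication by $p_0$, and observing that the chain-rule factor $e^{\lambda p_0}$ produced by $\partial_{x_1}$ is exactly cancelled by the factor $e^{-\lambda p_0}$ that the shift $x_0\mapsto x_0+i\lambda$ extracts from the exponential $e^{ip_0x_0}$; the compact $p_0$-support of $\mathcal{F}_0\phi$ (from $\mathcal{A}=\mathcal{F}(\mathcal{S}_c)$) legitimizes the shift and the differentiation under the integral, and the invertibility of $U$ on $\mathcal{A}$ lets you substitute $\phi=U^{-1}\psi$. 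The paper instead argues structurally: it uses the module-$*$-algebra compatibility $(h\triangleright\psi)^*=S(h)^*\triangleright\psi^*$ to get $J\rho(P_\mu)J=\rho(S(P_\mu)^*)$, computes $S(P_0)^*=-P_0$ and $S(P_1)^*=-\mathcal{E}^{-1}P_1$, and then conjugates by $J_c$ via $J=J_cU$ to convert these into the stated relations for $U$. Your computation has the merit of being self-contained and of exhibiting explicitly where the factor $e^{\lambda p_0}$ comes from and how $\mathcal{E}$ absorbs it; the paper's argument reuses already-established Hopf-algebraic facts and makes it conceptually transparent that the appearance of $\mathcal{E}$ is forced by the nontrivial antipode of $P_1$. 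Both arrive at the same identities on the common dense domain $\mathcal{A}$, so the proposal is a valid alternative proof.
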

\begin{proof}
Recall that, since $\mathcal{A}$ is a left $\mathcal{T}_{\kappa}$-module
$*$-algebra, we have the compatibility property with the involution
given by $(h\triangleright\psi)^{*}=S(h)^{*}\triangleright\psi^{*}$.
Here $S$ is the antipode map and the equality is valid for any $h\in\mathcal{T}_{\kappa}$
and $\psi\in\mathcal{A}$. In particular we have the equality $(P_{\mu}\triangleright\psi^{*})^{*}=S(P_{\mu})^{*}\triangleright\psi$.
Then, using the Hopf algebraic rules of $\mathcal{T}_{\kappa}$, one
immediately shows that $S(P_{0})^{*}=-P_{0}$ and $S(P_{1})^{*}=-\mathcal{E}^{-1}P_{1}$.
Now for any $\psi\in\mathcal{A}$ we have
\[
J\rho(P_{\mu})J\psi=(P_{\mu}\triangleright\psi^{*})^{*}=S(P_{\mu})^{*}\triangleright\psi\ .
\]
As a consequence we obtain the relations $J\rho(P_{0})J\psi=-\rho(P_{0})\psi$
and $J\rho(P_{1})J\psi=-\rho(\mathcal{E}^{-1})\rho(P_{1})\psi$. Using
the definition of the representation $\triangleright$ we obtain the
following formulae
\[
(J\rho(P_{0})J\psi)(x)=i(\partial_{0}\psi)(x)\ ,\qquad(J\rho(P_{1})J\psi)(x)=i(\partial_{1}\psi)(x_{0}-i\lambda,x_{1})\ .
\]
Now recall that we have the relations $J=J_{c}U$ and $J=U^{-1}J_{c}$,
where $J_{c}$ stands for complex conjugation. We can use the relations
to easily compute the following
\[
(U\rho(P_{1})U^{-1}\psi)(x)=(\overline{J\rho(P_{1})J\overline{\psi}})(x)=-i(\partial_{1}\psi)(x_{0}+i\lambda,x_{1})\ .
\]
Notice that this can be rewritten as $U\rho(P_{1})U^{-1}\psi=\rho(\mathcal{E})\rho(P_{1})\psi$.
In a similar way one shows that the identity $U\rho(P_{0})U^{-1}\psi=\rho(P_{0})\psi$
holds. The lemma is proven.
\end{proof}
Before starting the next proof we recall that $U$ is a unitary operator
from $\mathcal{H}$ to $L^{2}(\mathbb{R}^{2})$ and that $\mathcal{A}$
is dense in both Hilbert spaces.
\begin{prop}
The operators $\rho(P_{\mu})$ and $\rho(\mathcal{E})$, with domain
$\mathcal{A}$, are essentially self-adjoint on $\mathcal{H}$. Therefore
$\rho$ is an unbounded $*$-representation of $\mathcal{T}_{\kappa}$
on $\mathcal{A}$.\end{prop}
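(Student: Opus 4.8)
The plan is to transport the whole question to $L^{2}(\mathbb{R}^{2})$ by means of the unitary operator $U$ of the previous proposition. Since $U$ is unitary and maps $\mathcal{A}$ onto itself (this is read off directly from the formula for $U$ together with the characterization of $\mathcal{A}$ as those functions whose first-variable Fourier transform has compact support), an operator $\rho(h)$ with domain $\mathcal{A}$ is essentially self-adjoint on $\mathcal{H}$ if and only if $U\rho(h)U^{-1}$, again with domain $\mathcal{A}$, is essentially self-adjoint on $L^{2}(\mathbb{R}^{2})$. The preceding lemma already supplies $U\rho(P_{0})U^{-1}=\rho(P_{0})=-i\partial_{0}$ and $U\rho(P_{1})U^{-1}=\rho(\mathcal{E})\rho(P_{1})$; the same contour-shift computation (or the observation that $\rho(\mathcal{E})$ is a function of $\rho(P_{0})$, which $U$ leaves fixed) gives $U\rho(\mathcal{E})U^{-1}=\rho(\mathcal{E})$.

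Next I would apply the full two-dimensional Fourier transform $\mathcal{F}$, which is unitary on $L^{2}(\mathbb{R}^{2})$ and satisfies $\mathcal{F}^{-1}(\mathcal{A})=\mathcal{S}_{c}$ by definition of $\mathcal{A}$. Conjugation by $\mathcal{F}$ turns $-i\partial_{\mu}$ into multiplication by the dual variable $p_{\mu}$, and turns the imaginary shift $x_{0}\mapsto x_{0}+i\lambda$ into multiplication by $e^{-\lambda p_{0}}$. Hence the three transported operators become, on the common domain $\mathcal{S}_{c}$, the multiplication operators
\[
M_{p_{0}},\qquad M_{e^{-\lambda p_{0}}p_{1}},\qquad M_{e^{-\lambda p_{0}}},
\]
all with real-valued multipliers. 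So the whole statement reduces to showing that each of these multiplication operators is essentially self-adjoint on $\mathcal{S}_{c}$.

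For this I would use the standard invariant-domain criterion, a corollary of Stone's theorem: if $g$ is real measurable and $D\subseteq D(M_{g})$ is dense and invariant under the unitary group $e^{itg}$, then $D$ is a core for $M_{g}$. Here each multiplier is real; $\mathcal{S}_{c}$ is dense in $L^{2}(\mathbb{R}^{2})$; and $\mathcal{S}_{c}\subseteq D(M_{g})$ because on the support $\{|p_{0}|\le R\}$ of an element of $\mathcal{S}_{c}$ the factor $e^{-\lambda p_{0}}$ is bounded, so $g\phi$ is again Schwartz with the same compact support in $p_{0}$. The substantive point, and what I expect to be the main obstacle, is the invariance check for the two unbounded multipliers $e^{-\lambda p_{0}}p_{1}$ and $e^{-\lambda p_{0}}$: one must verify that $e^{itg}$, a smooth function of at most polynomial growth, maps $\mathcal{S}_{c}$ into itself, i.e. that the built-in compact-support restriction in $p_{0}$ is exactly what tames the exponential growth of these symbols both in the domain inclusion and in the group invariance. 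Granting this, $\mathcal{S}_{c}$ is a core, each multiplication operator is essentially self-adjoint, and transporting back through $\mathcal{F}$ and $U$ yields essential self-adjointness of $\rho(P_{\mu})$ and $\rho(\mathcal{E})$ on $\mathcal{H}$.

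Finally, essential self-adjointness gives in particular that each generator is symmetric, that is $(\rho(h)\phi,\psi)=(\phi,\rho(h)\psi)$ for $h\in\{P_{0},P_{1},\mathcal{E}\}$, which are precisely the self-adjoint generators $h=h^{*}$ of $\mathcal{T}_{\kappa}$. The defining relation $(\rho(h)\phi,\psi)=(\phi,\rho(h^{*})\psi)$ then propagates from the generators to all of $\mathcal{T}_{\kappa}$, using that $\rho$ is an algebra homomorphism and that $*$ reverses products; this establishes that $\rho$ is an unbounded $*$-representation of $\mathcal{T}_{\kappa}$ on $\mathcal{A}$, as claimed.
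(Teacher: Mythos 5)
Your proof is correct, but it takes a genuinely different route from the paper's. The paper also conjugates by $U$ to work on $L^{2}(\mathbb{R}^{2})$, but there it only verifies that $\rho(P_{0})$ and $\rho(\mathcal{E})\rho(P_{1})$ are symmetric on that space and then invokes Nelson's analytic vector theorem, asserting that every element of $\mathcal{A}$ is an analytic vector for these operators (and handling $\rho(\mathcal{E})=e^{-\lambda\rho(P_{0})}$ by a second application of Nelson). You instead push one step further through the full Fourier transform, so that all three operators become the real multiplication operators $M_{p_{0}}$, $M_{e^{-\lambda p_{0}}p_{1}}$, $M_{e^{-\lambda p_{0}}}$ on $\mathcal{S}_{c}$, and then apply the invariant-domain core criterion coming from Stone's theorem. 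The two arguments lean on the same structural fact, namely that the compact support in $p_{0}$ of elements of $\mathcal{S}_{c}$ tames the exponential $e^{-\lambda p_{0}}$, but your version makes the operators manifestly diagonal, so the key verification (that $e^{itg}$ multiplies $\mathcal{S}_{c}$ into itself because its derivatives grow at most polynomially on each strip $K\times\mathbb{R}$) is elementary and self-contained, and it identifies the closures explicitly as maximal multiplication operators. The paper's Nelson route is shorter to state but leaves the analytic-vector estimates, which for $\rho(\mathcal{E})\rho(P_{1})$ and especially for $\rho(\mathcal{E})$ require essentially the same support argument, to the reader. Your concluding reduction of the $*$-representation property to the self-adjoint generators is the same observation the paper makes before its proof.
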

\begin{proof}
We can consider the operators $\rho(P_{0})$ and $\rho(\mathcal{E})\rho(P_{1})$
as unbounded operators on $L^{2}(\mathbb{R}^{2})$ with domain $\mathcal{A}$.
An easy computation shows that they are symmetric operators on this
Hilbert space. Using the previous lemma we can write for any $\phi,\psi\in\mathcal{A}$
\[
(\phi,\rho(\mathcal{E})\rho(P_{1})\psi)_{L^{2}(\mathbb{R}^{2})}=(\phi,U\rho(P_{1})U^{-1}\psi)_{L^{2}(\mathbb{R}^{2})}=(U^{-1}\phi,\rho(P_{1})U^{-1}\psi)\ .
\]
Now, since $\rho(\mathcal{E})\rho(P_{1})$ is symmetric on $L^{2}(\mathbb{R}^{2})$
and $U$ is invertible in $\mathcal{A}$, it follows that $\rho(P_{1})$
is symmetric in $\mathcal{H}$. The same argument applies to $\rho(P_{0})$.
Finally, it follows from Nelson's analytic vector theorem that $\rho(P_{0})$
and $\rho(P_{1})$ are essentially self-adjoint on $\mathcal{H}$,
since each element in $\mathcal{A}$ is an analytic vector for the
two symmetric operators. For $\rho(\mathcal{E})$ we only need to
observe that $\rho(\mathcal{E})=e^{-\lambda\rho(P_{0})}$ and apply
again Nelson's theorem.
\end{proof}
In the following we are going to denote the closure of these operators
by the same symbols, and we are also going to use the notation $\hat{P}_{\mu}:=\rho(P_{\mu})$.

\subsection{The KMS property of the weight $\omega$}

One relevant property of the weight $\omega$ is that is satisfies
the so-called twisted trace property, given in Proposition \ref{prop:twisted-dusi}.
This property can be recasted in the language of KMS weights \cite{kms-weights},
which provides great insight into the modular aspects of the spectral
triple we are constructing. Before proving this result we need to
review some results of Tomita-Takesaki modular theory. For simplicity
of presentation we restrict to the case of a faithful normal state,
but the following statements hold more generally for a faithful normal
semi-finite weight. The reader unfamiliar with these concepts is invited
to check the review \cite{modular-review}, which moreover contains
material on modular spectral triples, which are going to be of interest
to us later.

Given a von Neumann algebra $\mathcal{N}$ and a faithful normal state
$\omega$ on $\mathcal{N}$, the modular theory allows to create a
one-parameter group of $*$-automorphisms of the algebra $\mathcal{N}$,
which we call the \textit{modular automorphism group} associated to
$\omega$ and denote by $\sigma^{\omega}$. It is a map that assigns
to $t\in\mathbb{R}$ an automorphism of $\mathcal{N}$, which we denote
by $\sigma_{t}^{\omega}$. Denote by $\mathcal{H}_{\omega}$ the Hilbert
space constructed via the GNS construction for $\omega$, and denote
by $\pi_{\omega}$ the corresponding representation of $\mathcal{N}$
on $\mathcal{H}_{\omega}$. Then the modular automorphism group $\sigma^{\omega}$
is implemented by a unitary one-parameter group $t\mapsto\Delta_{\omega}^{it}\in\mathcal{B}(\mathcal{H}_{\omega})$.
This means that for each $a\in\mathcal{N}$ and for all $t\in\mathbb{R}$
we have $\pi_{\omega}(\sigma_{t}^{\omega}(a))=\Delta_{\omega}^{it}\pi_{\omega}(a)\Delta_{\omega}^{-it}$.
We call $\Delta_{\omega}$ the \textit{modular operator} associated
to the state $\omega$.

The modular automorphism group $\sigma^{\omega}$ has a very important
property: it is the unique one-parameter automorphism group that satisfies
the KMS condition with respect to the state $\omega$ at inverse temperature
$\beta=1$. The KMS condition is defined as follows.
\begin{defn}
Let $\mathcal{N}$ be a von Neumann algebra, $\omega$ a state on
$\mathcal{N}$ and $t\to\alpha_{t}$ a one-parameter group of automorphisms
of $\mathcal{N}$. Then $\omega$ satisfies the \textit{KMS condition}
at inverse temperature $\beta$ with respect to $\alpha$ if the following
conditions are satisfied:
\begin{enumerate}
\item for every $t\in\mathbb{R}$ we have $\omega\circ\alpha_{t}=\omega$,
\item for every $a,b\in\mathcal{N}$ there exists a bounded continuous function
$F_{a,b}$ from the horizontal strip $\{z\in\mathbb{C}:0\leq\mbox{Im}z\leq\beta\}$
to $\mathbb{C}$, which is analytic in the interior of the strip and
such that for every $t\in\mathbb{R}$ we have
\[
F_{a,b}(t)=\omega(a\alpha_{t}(b))\ ,\quad F_{a,b}(t+i\beta)=\omega(\alpha_{t}(b)a)\ .
\]

\end{enumerate}
\end{defn}
Now we are going to show how these concepts apply to our case. In
the next lemma we introduce a one-parameter group of $*$-automorphisms
of $\mathcal{A}$, which we denote by $\sigma^{\omega}$. This is
going to be the modular group of $\omega$, which justifies the notation.
\begin{lem}
For any $t\in\mathbb{R}$ and $f\in\mathcal{A}$ define $(\sigma_{t}^{\omega}f)(x):=f(x_{0}-\lambda t,x_{1})$.
We have that $\sigma^{\omega}$ is a one-parameter group of $*$-automorphisms
of $\mathcal{A}$.\end{lem}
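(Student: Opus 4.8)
The plan is to verify directly the four defining properties: that each $\sigma_{t}^{\omega}$ maps $\mathcal{A}$ into itself, that the family $\{\sigma_{t}^{\omega}\}_{t\in\mathbb{R}}$ is a one-parameter group, that each map is an algebra homomorphism for the product $\star$, and that each map commutes with the involution $*$. The single observation that drives every computation is that translation in the first variable becomes a modulation after the partial Fourier transform: a change of variables in the integral defining $\mathcal{F}_{0}$ gives
\[
(\mathcal{F}_{0}\sigma_{t}^{\omega}f)(p_{0},x_{1})=e^{-i\lambda t p_{0}}(\mathcal{F}_{0}f)(p_{0},x_{1})\ .
\]
I would establish this identity first and then reuse it throughout.

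For well-definedness I would recall that, since $\mathcal{A}=\mathcal{F}(\mathcal{S}_{c})$ and the partial transform in the second variable does not affect the support in $p_{0}$, membership $f\in\mathcal{A}$ is equivalent to $f$ being a Schwartz function whose partial transform $\mathcal{F}_{0}f$ has compact support in $p_{0}$. Multiplication by the smooth bounded phase $e^{-i\lambda t p_{0}}$ preserves both the Schwartz class and the support in $p_{0}$, so the modulation identity shows immediately that $\sigma_{t}^{\omega}f\in\mathcal{A}$. The group law $\sigma_{s}^{\omega}\circ\sigma_{t}^{\omega}=\sigma_{s+t}^{\omega}$ and $\sigma_{0}^{\omega}=\mathrm{id}$ are then immediate from the definition, since composing two translations in $x_{0}$ simply adds the shifts.

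The homomorphism and $*$-properties are checks using the explicit formulas (\ref{eq:alg-formulae}). For the product I would compute $\sigma_{t}^{\omega}(f\star g)(x)=(f\star g)(x_{0}-\lambda t,x_{1})$ and compare with $(\sigma_{t}^{\omega}f\star\sigma_{t}^{\omega}g)(x)$; substituting the modulation identity for $\mathcal{F}_{0}(\sigma_{t}^{\omega}f)$ and observing that $\sigma_{t}^{\omega}g$ only shifts the $x_{0}$-argument, the phase $e^{-i\lambda t p_{0}}$ combines with $e^{ip_{0}x_{0}}$ to produce exactly $e^{ip_{0}(x_{0}-\lambda t)}$, so the two expressions coincide. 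For the involution I would use that complex conjugation commutes with $\sigma_{t}^{\omega}$, hence $\overline{\sigma_{t}^{\omega}f}=\sigma_{t}^{\omega}\overline{f}$, and apply the modulation identity inside the formula for $f^{*}$ in (\ref{eq:alg-formulae}); once more the phase effects the shift $x_{0}\mapsto x_{0}-\lambda t$, giving $(\sigma_{t}^{\omega}f)^{*}=\sigma_{t}^{\omega}(f^{*})$.

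These computations are routine once the modulation identity is available, so there is no serious obstacle, only bookkeeping with the integral formulas. The one place deserving a little care is the well-definedness step: one must confirm that membership in $\mathcal{A}$ is detected purely by the Schwartz condition together with the $p_{0}$-support of $\mathcal{F}_{0}f$, which is precisely what makes the phase multiplication manifestly preserve $\mathcal{A}$.
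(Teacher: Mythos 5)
Your proof is correct and follows essentially the same route as the paper: the ``modulation identity'' $(\mathcal{F}_{0}\sigma_{t}^{\omega}f)(p_{0},x_{1})=e^{-i\lambda tp_{0}}(\mathcal{F}_{0}f)(p_{0},x_{1})$ is exactly the change of variable $q_{0}\to q_{0}+\lambda t$ that the paper performs inline in both the product and involution computations, and the phase then recombines with $e^{ip_{0}x_{0}}$ to produce the shift $x_{0}\mapsto x_{0}-\lambda t$ just as in the paper. You additionally verify that $\sigma_{t}^{\omega}$ preserves $\mathcal{A}$ and satisfies the group law, which the paper leaves implicit; both checks are correct and harmless additions.
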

\begin{proof}
To prove that $\sigma^{\omega}$ is one-parameter group of automorphisms
of $\mathcal{A}$ we have to show that, for any $f,g\in\mathcal{A}$
and $t\in\mathbb{R}$, the property $\sigma_{t}^{\omega}(f\star g)=\sigma_{t}^{\omega}(f)\star\sigma_{t}^{\omega}(g)$
is satisfied. This can be shown by a direct computation
\[
\begin{split}(\sigma_{t}^{\omega}(f)\star\sigma_{t}^{\omega}(g))(x) & =\int e^{ip_{0}x_{0}}(\mathcal{F}_{0}\sigma_{t}^{\omega}(f))(p_{0},x_{1})\sigma_{t}^{\omega}(g)(x_{0},e^{-\lambda p_{0}}x_{1})\frac{dp_{0}}{2\pi}\\
 & =\int e^{ip_{0}x_{0}}\int e^{-ip_{0}q_{0}}f(q_{0}-\lambda t,x_{1})g(x_{0}-\lambda t,e^{-\lambda p_{0}}x_{1})dq_{0}\frac{dp_{0}}{2\pi}\ .
\end{split}
\]
After the change of variable $q_{0}\to q_{0}+\lambda t$ we obtain
\[
\begin{split}(\sigma_{t}^{\omega}(f)\star\sigma_{t}^{\omega}(g))(x) & =\int e^{ip_{0}(x_{0}-\lambda t)}\int e^{-ip_{0}q_{0}}f(q_{0},x_{1})g(x_{0}-\lambda t,e^{-\lambda p_{0}}x_{1})dq_{0}\frac{dp_{0}}{2\pi}\\
 & =\int e^{ip_{0}(x_{0}-\lambda t)}(\mathcal{F}_{0}f)(p_{0},x_{1})g(x_{0}-\lambda t,e^{-\lambda p_{0}}x_{1})\frac{dp_{0}}{2\pi}=(\sigma_{t}^{\omega}(f\star g))(x)\ .
\end{split}
\]
Finally to prove that $\sigma_{t}^{\omega}$ is a $*$-automorphism
we need to check the additional property $\sigma_{t}^{\omega}(f)^{*}=\sigma_{t}^{\omega}(f^{*})$.
This can again be checked by a direct computation
\[
\begin{split}\sigma_{t}^{\omega}(f)^{*}(x) & =\int e^{ip_{0}x_{0}}(\mathcal{F}_{0}\overline{\alpha_{t}(f)})(p_{0},e^{-\lambda p_{0}}x_{1})\frac{dp_{0}}{2\pi}\\
 & =\int e^{ip_{0}x_{0}}\int e^{-ip_{0}q_{0}}\overline{f}(q_{0}-\lambda t,e^{-\lambda p_{0}}x_{1})dq_{0}\frac{dp_{0}}{2\pi}
\end{split}
\]
Using again the change of variable $q_{0}\to q_{0}+\lambda t$ we
obtain
\[
\begin{split}\sigma_{t}^{\omega}(f)^{*}(x) & =\int e^{ip_{0}(x_{0}-\lambda t)}\int e^{-ip_{0}q_{0}}\overline{f}(q_{0},e^{-\lambda p_{0}}x_{1})dq_{0}\frac{dp_{0}}{2\pi}\\
 & =\int e^{ip_{0}(x_{0}-\lambda t)}(\mathcal{F}_{0}\overline{f})(p_{0},e^{-\lambda p_{0}}x_{1})\frac{dp_{0}}{2\pi}=\sigma_{t}^{\omega}(f^{*})(x)\ .
\end{split}
\]
\end{proof}
\begin{prop}
The weight $\omega$ satisfies the KMS condition at inverse temperature
$\beta=1$ with respect to $\sigma^{\omega}$. The corresponding modular
operator is given by $\Delta_{\omega}=e^{-\lambda\hat{P}_{0}}$.\end{prop}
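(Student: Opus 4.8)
The plan is to check the two conditions in the definition of the KMS condition directly, and then to read off the modular operator from the way $\sigma^\omega$ acts on the GNS space. Condition (1) is immediate: since $(\sigma_t^\omega f)(x)=f(x_0-\lambda t,x_1)$ is a real translation in the first variable, invariance of the Lebesgue measure gives $\omega(\sigma_t^\omega f)=\omega(f)$ for every $t\in\mathbb{R}$.

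For condition (2), fix $a,b\in\mathcal{A}$ and set $(\sigma_z^\omega b)(x):=b(x_0-\lambda z,x_1)$ for complex $z$; on the Fourier side this is multiplication of $\mathcal{F}_0 b$ by $e^{-i\lambda z p_0}$, and since $\mathcal{F}_0 b$ has compact support in $p_0$ (because $b\in\mathcal{F}(\mathcal{S}_c)$) this keeps $\sigma_z^\omega b$ inside $\mathcal{A}$ for every $z$. I would then define the candidate function $F_{a,b}(z):=\omega(a\star\sigma_z^\omega(b))$ on the strip $0\le\mbox{Im}\,z\le 1$. The cleanest way to control it is to compute it explicitly: performing the $x_0$-integration in $\omega(a\star\sigma_z^\omega(b))$ and using the star-product formula one obtains
\[ F_{a,b}(z)=\int\!\!\int e^{i\lambda z p_0}(\mathcal{F}_0 a)(p_0,x_1)(\mathcal{F}_0 b)(-p_0,e^{-\lambda p_0}x_1)\frac{dp_0}{2\pi}\,dx_1 . \]
Here the decisive point is that the $p_0$-integral runs over the (compact) union of the supports of $\mathcal{F}_0 a$ and $\mathcal{F}_0 b$, so $e^{i\lambda z p_0}$ is entire in $z$ and bounded by $e^{\lambda R}$ on the strip, $R$ being the radius of that support, while the remaining factors are Schwartz in $x_1$. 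Hence $F_{a,b}$ is analytic in the open strip and bounded and continuous on its closure, as required.

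It remains to match the boundary values. For real $z=t$ we have $F_{a,b}(t)=\omega(a\star\sigma_t^\omega(b))$ by construction. For the upper edge I would invoke the twisted trace property of Proposition \ref{prop:twisted-dusi}: one checks that $\mathcal{E}\triangleright\sigma_{t+i}^\omega(b)=\sigma_t^\omega(b)$ (the imaginary shift $+i\lambda$ built into $\mathcal{E}$ cancels the $-i\lambda$ coming from $\sigma_{t+i}^\omega$), so applying $\omega(f\star g)=\omega((\mathcal{E}\triangleright g)\star f)$ with $f=a$ and $g=\sigma_{t+i}^\omega(b)\in\mathcal{A}$ gives $F_{a,b}(t+i)=\omega(\sigma_t^\omega(b)\star a)$. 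This is exactly the second boundary condition at $\beta=1$, so $\omega$ satisfies the KMS condition with respect to $\sigma^\omega$; by the uniqueness of the modular automorphism group, $\sigma^\omega$ is the modular group of $\omega$.

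Finally, to identify the modular operator I would use that the modular group is implemented on the GNS space by $\Delta_\omega^{it}$. Since $\hat P_0=\rho(P_0)$ acts as $-i\partial_0$, the unitary group $e^{-i\lambda t\hat P_0}=e^{-\lambda t\partial_0}$ sends $f\in\mathcal{A}\subset\mathcal{H}$ to $(e^{-\lambda t\partial_0}f)(x)=f(x_0-\lambda t,x_1)=(\sigma_t^\omega f)(x)$; equivalently, because $\sigma_t^\omega$ is an algebra automorphism and $\pi$ is left multiplication, $e^{-i\lambda t\hat P_0}\pi(f)e^{i\lambda t\hat P_0}=\pi(\sigma_t^\omega(f))$. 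Thus $e^{-i\lambda t\hat P_0}$ implements $\sigma_t^\omega$, and since $\hat P_0$ is self-adjoint this identifies the positive modular operator as $\Delta_\omega=e^{-\lambda\hat P_0}$. I expect the only real obstacle to be the analytic-continuation step of the second paragraph—ensuring that the shifted integrals converge and depend holomorphically on $z$ across the whole strip—and the compact support of the first-variable Fourier transforms of elements of $\mathcal{A}$ is precisely what reduces this to a routine estimate.
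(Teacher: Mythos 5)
Your proposal is correct and follows essentially the same route as the paper: the same candidate function $F_{a,b}(z)=\omega(a\star\sigma_z^{\omega}(b))$, the same use of the identity $\mathcal{E}\triangleright=\sigma_{-i}^{\omega}$ together with the twisted trace property to match the boundary value at $\mathrm{Im}\,z=1$, and the same identification of $\Delta_{\omega}^{it}$ with $e^{-i\lambda t\hat P_0}$ via the automorphism property of $\sigma_t^{\omega}$ and the fact that $\pi$ is left multiplication. Your explicit Fourier-side computation justifying analyticity and boundedness on the strip is a welcome elaboration of a step the paper only asserts, but it is a refinement rather than a different argument.
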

\begin{proof}
We define the function $F_{f,g}(z):=\omega(f\star\sigma_{z}^{\omega}g)$.
It is bounded continuous and analytic, since $\sigma^{\omega}$ acts
on the first variable and functions in $\mathcal{A}$ are analytic
in the first variable. To prove that $\omega$ satisfied the KMS condition
with respect to $\sigma^{\omega}$ we need to show that
\[
F_{f,g}(t)=\omega(f\star\sigma_{t}^{\omega}(g))\ ,\quad F_{f,g}(t+i\beta)=\omega(\sigma_{t}^{\omega}(g)\star f)\ .
\]
Notice that the action of $\mathcal{E}$ can be rewritten in terms
of $\sigma^{\omega}$, that is 
\begin{equation}
(\mathcal{E}\triangleright f)(x)=f(x_{0}+i\lambda,x_{1})=(\sigma_{-i}^{\omega}f)(x)\ .
\end{equation}
Then using the twisted trace property we have
\[
\begin{split}F_{f,g}(t+i) & =\omega(f\star\sigma_{t+i}^{\omega}(g))=\omega((\mathcal{E}\triangleright\sigma_{t+i}^{\omega}(g))\star f)\\
 & =\omega((\sigma_{-i}^{\omega}\sigma_{t+i}^{\omega}(g))\star f)=\omega(\sigma_{t}^{\omega}(g)\star f)\ .
\end{split}
\]
This proves the KMS condition. To determine the modular operator $\Delta_{\omega}$
associated with $\omega$ consider $f,g\in\mathcal{A}$. Using the
fact that $\sigma_{t}^{\omega}$ is an automorphism of $\mathcal{A}$
we have
\begin{equation}
\pi(\sigma_{t}^{\omega}(f))g=\sigma_{t}^{\omega}(f)\star g=\sigma_{t}^{\omega}(f\star\sigma_{-t}^{\omega}(g))=\sigma_{t}^{\omega}(\pi(f)\sigma_{-t}^{\omega}(g))\ .\label{eq:aut-modul}
\end{equation}
Now, using the fact that $\hat{P}_{0}=-i\partial_{0}$, we have the
following equality
\[
(\sigma_{t}^{\omega}f)(x)=f(x_{0}-\lambda t,x_{1})=(e^{-i\lambda t\hat{P}_{0}}f)(x)\ .
\]
Then we see that equation (\ref{eq:aut-modul}) can be rewritten as
\[
\pi(\sigma_{t}^{\omega}(f))g=e^{-i\lambda t\hat{P}_{0}}\pi(f)e^{i\lambda t\hat{P}_{0}}g=\Delta_{\omega}^{it}\pi(f)\Delta_{\omega}^{-it}g\ .
\]
This implies that the modular operator is given by $\Delta_{\omega}=e^{-\lambda\hat{P}_{0}}$.
\end{proof}

\subsection{Some useful formulae}

We have seen that the Hilbert space $\mathcal{H}$, constructed via
the GNS construction for $\omega$, is unitarily equivalent to $L^{2}(\mathbb{R}^{2})$,
where the unitary operator is given by
\[
(Uf)(x)=\int e^{ip_{0}x_{0}}(\mathcal{F}_{0}f)(p_{0},e^{\lambda p_{0}}x_{1})\frac{dp_{0}}{2\pi}\ .
\]
We can associate, to any densely defined operator $T$ on $\mathcal{H}$,
the densely defined operator $UTU^{-1}$ on $L^{2}(\mathbb{R}^{2})$.
Many properties of operators defined on $\mathcal{H}$ are conserved
by this unitary transformation: for example, if $T$ belongs to the
$p$-th Schatten ideal on $\mathcal{H}$, then $UTU^{-1}$ belongs
to the $p$-th Schatten ideal on $L^{2}(\mathbb{R}^{2})$. This is
useful, since we can use results which are formulated for the Hilbert
space $L^{2}(\mathbb{R}^{2})$ to establish some properties of operators
on $\mathcal{H}$. For example, to prove that an operator is Hilbert-Schmidt
in $L^{2}(\mathbb{R}^{2})$, one can write it in the form of an integral
operator and check that the kernel belongs to $L^{2}(\mathbb{R}^{2}\times\mathbb{R}^{2})$.

Here we collect some useful formulae used extensively in the rest
of the paper.
\begin{lem}
\label{lem:mult-formula}For $f\in\mathcal{A}$ and $\psi\in\mathcal{H}$
we have
\[
(U\pi(f)U^{-1}\psi)(x)=\int e^{ipx}(Uf)(x_{0},e^{\lambda p_{0}}x_{1})(\mathcal{F}\psi)(p)\frac{d^{2}p}{(2\pi)^{2}}\ .
\]
\end{lem}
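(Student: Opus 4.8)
The plan is to prove the identity first on a dense subspace and then extend by continuity. Since $U\colon\mathcal{H}\to L^{2}(\mathbb{R}^{2})$ is unitary and $\mathcal{A}$ is dense in $\mathcal{H}$, the set $U\mathcal{A}$ is dense in $L^{2}(\mathbb{R}^{2})$, so I would write $\psi=U\phi$ with $\phi\in\mathcal{A}$. Then the left-hand side collapses to
\[
U\pi(f)U^{-1}\psi=U\pi(f)\phi=U(f\star\phi)\ ,
\]
using $\pi(f)\phi=f\star\phi$, and the entire task becomes to evaluate $U(f\star\phi)$ and re-express it in terms of $Uf$ and $\mathcal{F}\psi=\mathcal{F}(U\phi)$. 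Everything is organized around the partial Fourier transform $\mathcal{F}_{0}$ in the first variable, which is the natural language here since both the star product~(\ref{eq:alg-formulae}) and the operator $U$ are written through it.

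First I would insert the star-product formula into the definition of $U$ and compute the partial transform of $f\star\phi$. The $y_{0}$-integration collapses against the Fourier kernel, turning the product into a convolution in the first frequency variable,
\[
(\mathcal{F}_{0}(f\star\phi))(p_{0},w_{1})=\int(\mathcal{F}_{0}f)(q_{0},w_{1})\,(\mathcal{F}_{0}\phi)(p_{0}-q_{0},e^{-\lambda q_{0}}w_{1})\,\frac{dq_{0}}{2\pi}\ .
\]
Applying $U$ means setting $w_{1}=e^{\lambda p_{0}}x_{1}$ and integrating against $e^{ip_{0}x_{0}}\tfrac{dp_{0}}{2\pi}$. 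The crucial observation is that with this substitution the second argument of $\mathcal{F}_{0}\phi$ becomes $e^{-\lambda q_{0}}e^{\lambda p_{0}}x_{1}=e^{\lambda(p_{0}-q_{0})}x_{1}$, so that, using the elementary identity $(\mathcal{F}_{0}(U\phi))(k_{0},x_{1})=(\mathcal{F}_{0}\phi)(k_{0},e^{\lambda k_{0}}x_{1})$ together with $\psi=U\phi$, the $\phi$-factor is exactly $(\mathcal{F}_{0}\psi)(p_{0}-q_{0},x_{1})$.

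Next I would perform the shear change of variables $(p_{0},q_{0})\mapsto(s_{0},r_{0}):=(p_{0}-q_{0},q_{0})$, which has unit Jacobian and decouples the two factors. The integral over $r_{0}$ of $e^{ir_{0}x_{0}}(\mathcal{F}_{0}f)(r_{0},e^{\lambda r_{0}}e^{\lambda s_{0}}x_{1})\tfrac{dr_{0}}{2\pi}$ reassembles, directly by the definition of $U$, into $(Uf)(x_{0},e^{\lambda s_{0}}x_{1})$, leaving
\[
(U(f\star\phi))(x)=\int e^{is_{0}x_{0}}(Uf)(x_{0},e^{\lambda s_{0}}x_{1})\,(\mathcal{F}_{0}\psi)(s_{0},x_{1})\,\frac{ds_{0}}{2\pi}\ .
\]
Finally I would trade the partial transform for the full one via $(\mathcal{F}_{0}\psi)(s_{0},x_{1})=\int e^{is_{1}x_{1}}(\mathcal{F}\psi)(s_{0},s_{1})\tfrac{ds_{1}}{2\pi}$; relabelling $s\mapsto p$ produces precisely the claimed integral against $e^{ipx}(\mathcal{F}\psi)(p)\tfrac{d^{2}p}{(2\pi)^{2}}$.

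The individual steps are elementary, so the \emph{main obstacle} is purely the bookkeeping: tracking the rescaling $x_{1}\mapsto e^{\lambda p_{0}}x_{1}$ through the several Fourier transforms and verifying that the exponential weights recombine so that the $f$-dependent part collapses back into $Uf$ with argument $e^{\lambda s_{0}}x_{1}$. The interchanges of integration order are justified by the Schwartz regularity of elements of $\mathcal{A}$; in fact $\mathcal{F}_{0}f$ has compact support in its frequency variable for $f\in\mathcal{A}$, which makes the $q_{0}$-integration harmless. For the passage from $\psi\in U\mathcal{A}$ to general $\psi\in\mathcal{H}$, I would invoke the boundedness of $\pi(f)$ established earlier, so that $U\pi(f)U^{-1}$ is bounded and is the unique continuous extension of its restriction to the dense subspace $U\mathcal{A}\subset L^{2}(\mathbb{R}^{2})$, whence the formula holds for all $\psi$.
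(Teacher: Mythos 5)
Your argument is correct, and every computational step checks out: the twisted-convolution formula for $\mathcal{F}_{0}(f\star\phi)$, the identity $(\mathcal{F}_{0}(U\phi))(k_{0},x_{1})=(\mathcal{F}_{0}\phi)(k_{0},e^{\lambda k_{0}}x_{1})$, the shear substitution, and the reassembly into $(Uf)(x_{0},e^{\lambda s_{0}}x_{1})$ all follow from the stated conventions for $\star$, $U$ and $\mathcal{F}_{0}$. The route, however, is genuinely different from the paper's. The paper avoids the twisted convolution entirely: it factors $U=J_{c}J$ with $J\psi=\psi^{*}$, observes that $J\pi(f)J$ is \emph{right} multiplication by $Jf$, and hence writes $U\pi(f)U^{-1}\psi=J_{c}\bigl(J_{c}\psi\star J_{c}Uf\bigr)$; since the product formula (\ref{eq:alg-formulae}) applies $\mathcal{F}_{0}$ only to the left factor, a single application of that formula (with $\psi$ now sitting on the left) plus one change of variable gives the kernel directly, with $Uf$ appearing untransformed. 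Your version pays for not using this trick with the extra bookkeeping you identify --- the convolution in the frequency variable and the shear change of variables --- but it has the merit of being self-contained and of not relying on the identities $J=J_{c}U$ and $J\pi(f)J\psi=\psi\star Jf$; it also makes transparent where the compact support of $\mathcal{F}_{0}f$ in $q_{0}$ enters to justify the interchanges of integration. The final continuity argument is the same in both proofs and is handled at the same level of rigor as in the paper.
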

\begin{proof}
Consider first $\psi\in\mathcal{A}$. From the definition of $\pi(f)$
and the properties of the involution $J$ we obtain that $J\pi(f)J$
acts by multiplication from the right, that is we have $J\pi(f)J\psi=\psi\star Jf$.
To compute $U\pi(f)U^{-1}$ we use the fact that $U$ is related to
the involution by $J=J_{c}U$, where $J_{c}$ denotes complex conjugation.
We have the following equality
\[
U\pi(f)U^{-1}\psi=J_{c}J\pi(f)JJ_{c}\psi=J_{c}(J_{c}\psi\star Jf)=J_{c}(J_{c}\psi\star J_{c}Uf)\ .
\]
Using the formula (\ref{eq:alg-formulae}) for the product and after
a change of variable we obtain
\[
\begin{split}J_{c}(J_{c}\psi\star J_{c}Uf)(x) & =\int e^{-ip_{0}x_{0}}(\overline{\mathcal{F}_{0}\overline{\psi}})(p_{0},x_{1})(Uf)(x_{0},e^{-\lambda p_{0}}x_{1})\frac{dp_{0}}{2\pi}\\
 & =\int e^{ip_{0}x_{0}}(\mathcal{F}_{0}\psi)(p_{0},x_{1})(Uf)(x_{0},e^{\lambda p_{0}}x_{1})\frac{dp_{0}}{2\pi}\ .
\end{split}
\]
Finally this expression may be rewritten as
\[
(U\pi(f)U^{-1}\psi)(x)=\int e^{ipx}(Uf)(x_{0},e^{\lambda p_{0}}x_{1})(\mathcal{F}\psi)(p)\frac{d^{2}p}{(2\pi)^{2}}\ .
\]
This formula extends by continuity for any $\psi\in L^{2}(\mathbb{R}^{2})$.
Indeed since $\pi(f)$ is a bounded operator in $\mathcal{H}$ and
since $U$ is a unitary operator from $\mathcal{H}$ to $L^{2}(\mathbb{R}^{2})$
it follows that $U\pi(f)U^{-1}$ is bounded in $L^{2}(\mathbb{R}^{2})$.
Therefore $U\pi(f)U^{-1}$ may be extended by continuity to $L^{2}(\mathbb{R}^{2})$.
\end{proof}
The following lemma gives an explicit expression for a function of
$\hat{P}_{\mu}$.
\begin{lem}
\label{lem:g-formula}For $g$ a bounded function, define $g(\hat{P})$
by the functional calculus. We have 
\[
(Ug(\hat{P})U^{-1}\psi)(x)=\int e^{ipx}g(p_{0},e^{-\lambda p_{0}}p_{1})(\mathcal{F}\psi)(p)\frac{d^{2}p}{(2\pi)^{2}}\ .
\]
\end{lem}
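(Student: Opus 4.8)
The plan is to diagonalize the pair of commuting generators by the Fourier transform, so that $g(\hat{P})$ becomes a multiplication operator. By definition $g(\hat{P})$ is the joint functional calculus of the pair $(\hat{P}_0,\hat{P}_1)$, so I first need to know that this pair commutes strongly. On $\mathcal{A}$ the two operators act as $-i\partial_0$ and $-i\partial_1$, which commute, and (as will be confirmed below) their transforms under $U$ are multiplication operators by real functions in the Fourier picture; hence their spectral projections commute and the joint spectral measure is well defined. Since $U$ is unitary and functional calculus intertwines with unitary conjugation (the joint spectral measure of $(U\hat{P}_0U^{-1},U\hat{P}_1U^{-1})$ is $U$-conjugate to that of $(\hat{P}_0,\hat{P}_1)$), we get
\[
Ug(\hat{P})U^{-1}=g\bigl(U\hat{P}_0U^{-1},\,U\hat{P}_1U^{-1}\bigr)\ ,
\]
so the whole computation reduces to identifying the transformed generators on $L^2(\mathbb{R}^2)$.

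For this I would invoke the lemma established above, which gives $U\hat{P}_0U^{-1}=\hat{P}_0$ and $U\hat{P}_1U^{-1}=\rho(\mathcal{E})\rho(P_1)$. The key step is to pass to the Fourier representation. Writing $M_\varphi$ for multiplication by $\varphi$, the operator $\hat{P}_0=-i\partial_0$ conjugates under $\mathcal{F}$ to $M_{p_0}$, and $\rho(P_1)=-i\partial_1$ conjugates to $M_{p_1}$; since $\rho(\mathcal{E})=e^{-\lambda\hat{P}_0}$, it conjugates to $M_{e^{-\lambda p_0}}$. Composing these on the factor $U\hat{P}_1U^{-1}=e^{-\lambda\hat{P}_0}(-i\partial_1)$ yields
\[
\mathcal{F}\,U\hat{P}_0U^{-1}\,\mathcal{F}^{-1}=M_{p_0}\ ,\qquad
\mathcal{F}\,U\hat{P}_1U^{-1}\,\mathcal{F}^{-1}=M_{e^{-\lambda p_0}p_1}\ .
\]
Both are multiplication by real functions, so they manifestly commute and are self-adjoint, which retroactively justifies the joint functional calculus used in the first step.

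It then follows that the joint functional calculus of the pair is multiplication by the composite symbol, i.e. $\mathcal{F}\,Ug(\hat{P})U^{-1}\,\mathcal{F}^{-1}=M_{g(p_0,\,e^{-\lambda p_0}p_1)}$, a bounded operator since $g$ is bounded, so there are no domain subtleties and no need to restrict to $\mathcal{A}$. Undoing the Fourier transform, $Ug(\hat{P})U^{-1}=\mathcal{F}^{-1}M_{g(p_0,e^{-\lambda p_0}p_1)}\mathcal{F}$, and writing this out as an integral operator produces exactly the claimed formula. The one step demanding genuine care is the Fourier-space identification of $U\hat{P}_1U^{-1}$: the modular factor $e^{-\lambda p_0}$ multiplying $p_1$ is the fingerprint of $\rho(\mathcal{E})$, and keeping track of it—rather than naively obtaining a bare $p_1$—is the crux of the computation.
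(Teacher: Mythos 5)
Your proposal is correct and follows essentially the same route as the paper: both use the identities $U\hat{P}_{0}U^{-1}=\tilde{P}_{0}$ and $U\hat{P}_{1}U^{-1}=e^{-\lambda\tilde{P}_{0}}\tilde{P}_{1}$ from the earlier lemma, conjugate by the Fourier transform to turn these into multiplication operators, and then apply the joint functional calculus for the commuting pair. The only difference is that you spell out the intertwining of the functional calculus with unitary conjugation and the commutativity of the spectral projections, which the paper leaves implicit.
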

\begin{proof}
We know that in $L^{2}(\mathbb{R}^{2})$ the Fourier transform $\mathcal{F}$
is the unitary operator that diagonalizes the operators $\tilde{P}_{\mu}=-i\partial_{\mu}$
(up to factors of $2\pi$, depending on the normalization). This means
that for $\psi\in L^{2}(\mathbb{R}^{2})$ we have
\[
(g(\tilde{P})\psi)(x)=\int e^{ipx}g(p)(\mathcal{F}\psi)(p)\frac{d^{2}p}{(2\pi)^{2}}\ .
\]
From the results of the previous section we have that $U\hat{P}_{0}U^{-1}=\tilde{P}_{0}$
and $U\hat{P}_{1}U^{-1}=e^{-\lambda\tilde{P}_{0}}\tilde{P}_{1}$,
so we obtain $Ug(\hat{P})U^{-1}=g(\tilde{P}_{0},e^{-\lambda\tilde{P}_{0}}\tilde{P}_{1})$.
The result follows by using the functional calculus for the commuting
operators $\tilde{P}_{\mu}$ in $L^{2}(\mathbb{R}^{2})$.
\end{proof}
Finally we are interested in operators of the form $\pi(f)g(\hat{P})$.
\begin{prop}
\label{prop:kernel}The Schwartz kernel associated to the operator
$U\pi(f)g(\hat{P})U^{-1}$ is given by
\[
K(x,y)=\int e^{ip(x-y)}(Uf)(x_{0},e^{\lambda p_{0}}x_{1})g(p_{0},e^{-\lambda p_{0}}p_{1})\frac{d^{2}p}{(2\pi)^{2}}\ .
\]
\end{prop}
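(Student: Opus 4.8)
The plan is to factor the operator through the unitary $U$ and to exploit that the two preceding lemmas already diagonalize the two factors. Since $\pi(f)$ is bounded and $g(\hat{P})$ is bounded (as $g$ is a bounded function), the composite is bounded, and I may write
\[
U\pi(f)g(\hat{P})U^{-1}=\bigl(U\pi(f)U^{-1}\bigr)\bigl(Ug(\hat{P})U^{-1}\bigr)\ .
\]
The task is then to compose the multiplication-type operator of Lemma \ref{lem:mult-formula} with the Fourier-multiplier operator of Lemma \ref{lem:g-formula} and read off the Schwartz kernel.

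First I would observe that Lemma \ref{lem:g-formula} exhibits $Ug(\hat{P})U^{-1}\psi$ precisely as the inverse Fourier transform (in the unitary convention) of the function $p\mapsto g(p_{0},e^{-\lambda p_{0}}p_{1})(\mathcal{F}\psi)(p)$. Taking $\mathcal{F}$ back therefore yields the clean multiplier identity
\[
\mathcal{F}\bigl(Ug(\hat{P})U^{-1}\psi\bigr)(p)=g(p_{0},e^{-\lambda p_{0}}p_{1})(\mathcal{F}\psi)(p)\ .
\]
Next I would feed this into Lemma \ref{lem:mult-formula}, applied with $\psi$ replaced by $Ug(\hat{P})U^{-1}\psi$. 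Because that lemma expresses the action of $U\pi(f)U^{-1}$ in terms of the Fourier transform of its argument, the multiplier symbol just computed enters directly, giving
\[
\bigl(U\pi(f)g(\hat{P})U^{-1}\psi\bigr)(x)=\int e^{ipx}(Uf)(x_{0},e^{\lambda p_{0}}x_{1})\,g(p_{0},e^{-\lambda p_{0}}p_{1})(\mathcal{F}\psi)(p)\frac{d^{2}p}{(2\pi)^{2}}\ .
\]

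The final step is to insert $(\mathcal{F}\psi)(p)=\int e^{-ipy}\psi(y)d^{2}y$ and interchange the $p$ and $y$ integrations; the integrand then factors as $e^{ip(x-y)}(Uf)(x_{0},e^{\lambda p_{0}}x_{1})g(p_{0},e^{-\lambda p_{0}}p_{1})$ times $\psi(y)$, and the coefficient of $\psi(y)$ under $\int\,d^{2}y$ is exactly the claimed kernel $K(x,y)$. The one delicate point is the justification of this interchange: I would first carry out the computation for $\psi\in\mathcal{A}$ (or more generally Schwartz), where $(Uf)$ is Schwartz in its second argument and $\mathcal{F}\psi$ decays rapidly, so the double integral is absolutely convergent and Fubini applies. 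Since $\mathcal{A}$ is dense and the operator is bounded, this suffices to identify the Schwartz kernel. I do not expect any serious obstacle here — the argument is essentially bookkeeping with the Fourier conventions, and the only care needed is the convergence statement that legitimizes the exchange of integrals.
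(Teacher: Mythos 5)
Your proposal is correct and follows essentially the same route as the paper: factor through $U$, apply Lemma \ref{lem:g-formula} to identify the Fourier multiplier, then substitute into Lemma \ref{lem:mult-formula} and read off the kernel. The only difference is that you spell out the final Fubini step identifying the kernel, which the paper leaves implicit.
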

\begin{proof}
Using Lemmata \ref{lem:mult-formula} and \ref{lem:g-formula} we
obtain
\[
\begin{split}(U\pi(f)g(\hat{P})U^{-1}\psi)(x) & =(U\pi(f)U^{-1}Ug(\hat{P})U^{-1}\psi)(x)\\
 & =\int e^{ipx}(Uf)(x_{0},e^{\lambda p_{0}}x_{1})(\mathcal{F}Ug(\hat{P})U^{-1}\psi)(p)\frac{d^{2}p}{(2\pi)^{2}}\\
 & =\int e^{ipx}(Uf)(x_{0},e^{\lambda p_{0}}x_{1})g(p_{0},e^{-\lambda p_{0}}p_{1})(\mathcal{F}\psi)(p)\frac{d^{2}p}{(2\pi)^{2}}\ .
\end{split}
\]
The Schwartz kernel corresponding to this operator is given by
\[
K(x,y)=\int e^{ip(x-y)}(Uf)(x_{0},e^{\lambda p_{0}}x_{1})g(p_{0},e^{-\lambda p_{0}}p_{1})\frac{d^{2}p}{(2\pi)^{2}}\ .
\]
\end{proof}

\section{The Dirac operator}

The next step in the construction of a spectral triple is the introduction
of a self-adjoint unbounded operator $\mathcal{D}$ on the Hilbert
space, which we call the Dirac operator. In the next section we briefly
recall the ingredients used in the commutative case, mainly to fix
some notation. We show that there is a problem in obtaining a bounded
commutator, which is solved by considering the twisted commutator
introduced by Connes and Moscovici \cite{type III}, where the commutator
is twisted by an automorphism $\sigma$. Finally we prove that, under
some assumptions related to symmetry and to the classical limit, there
is a unique pair of a Dirac operator $\mathcal{D}$ and automorphism
$\sigma$ such that the twisted commutator is bounded.

\subsection{A problem with boundedness}

Let us briefly recall some facts concerning the Dirac operator $\mathcal{D}$
in the two dimensional Euclidean space $\mathbb{R}^{2}$. Here we
consider the algebra of Schwartz functions $\mathcal{A}=\mathcal{S}(\mathbb{R}^{2})$
with the $*$-representation $\pi$ on $\mathcal{H}_{r}=L^{2}(\mathbb{R}^{2})$
given by pointwise multiplication, that is $(\pi(f)\psi)(x)=f(x)\psi(x)$.
We consider the following representation of the Clifford algebra 
\[
\Gamma^{0}:=\sigma^{1}=\left(\begin{array}{cc}
0 & 1\\
1 & 0
\end{array}\right)\ ,\qquad\Gamma^{1}:=\sigma^{2}=\left(\begin{array}{cc}
0 & -i\\
i & 0
\end{array}\right)\ .
\]
These matrices satisfy the anticommutation relations $\{\Gamma^{\mu},\Gamma^{\nu}\}=2\delta^{\mu\nu}$.
To consider spinors we define the Hilbert space $\mathcal{H}=\mathcal{H}_{r}\otimes\mathbb{C}^{2}$,
corresponding to the trivial spinor bundle. The representation $\pi$
of $\mathcal{A}$ is extended as pointwise multiplication on the two
copies of $\mathcal{H}_{r}$, and we denote it again by $\pi$. The
inner product is given by
\[
(\psi,\phi)_{\mathcal{H}}=\int\left(\overline{\psi_{1}}(x)\phi_{1}(x)+\overline{\psi_{2}}(x)\phi_{2}(x)\right)d^{2}x\ ,
\]
where $\psi_{i},\phi_{j}$ are the components of the spinors $\psi,\phi$.
The Dirac operator for this space is built using the $\Gamma$ matrices
and the operators $\hat{P}_{\mu}=-i\partial_{\mu}$ as follows 
\[
\mathcal{D}=\Gamma^{\mu}\hat{P}_{\mu}=-\left(\begin{array}{cc}
0 & i\partial_{0}+\partial_{1}\\
i\partial_{0}-\partial_{1} & 0
\end{array}\right)\ .
\]
The Dirac operator $\mathcal{D}$ is essentially self-adjoint with
respect to the inner product defined above. Since the dimension is
even there is a self-adjoint operator $\chi$, called the grading,
which satisfies $\chi^{2}=1$ and the following properties: it commutes
with the algebra, that is for any $f\in\mathcal{A}$ we have $[\chi,\pi(f)]=0$,
and it anticommutes with the Dirac operator, that is $\{\chi,\mathcal{D}\}=0$.
In terms of the $\Gamma$ matrices it is given by
\[
\chi=-i\Gamma^{0}\Gamma^{1}=\left(\begin{array}{cc}
1 & 0\\
0 & -1
\end{array}\right)\ .
\]
One of the requirements in the definition of a spectral triple is
that $[\mathcal{D},\pi(f)]$, for $f\in\mathcal{A}$, should extend
to a bounded operator in $\mathcal{H}$. From its definition we obtain
$[\mathcal{D},\pi(f)]=\Gamma^{\mu}[\hat{P}_{\mu},\pi(f)]$. Using
the Leibnitz rule for the derivatives we obtain
\[
\hat{P}_{\mu}\pi(f)\psi=(\hat{P}_{\mu}f)\psi+f(\hat{P}_{\mu}\psi)=\pi(\hat{P}_{\mu}f)\psi+\pi(f)\hat{P}_{\mu}\psi\ .
\]
The previous equation can be rewritten in the form $[\mathcal{D},\pi(f)]=\Gamma^{\mu}\pi(\hat{P}_{\mu}f)$.
This is a bounded operator, since $\pi(f)$ is bounded for $f\in\mathcal{A}$
and $\hat{P}_{\mu}f\in\mathcal{A}$, since $f$ is a Schwartz function.

Now let us discuss the deformed case. We refer to the Hilbert space
introduced in the previous section as the reduced Hilbert space, and
we denote it by $\mathcal{H}_{r}$. The algebra $\mathcal{A}$ is
represented on $\mathcal{H}_{r}$ by left multiplication, that is
$(\pi(f)\psi)(x)=(f\star\psi)(x)$. To consider spinors we define
the Hilbert space $\mathcal{H}=\mathcal{H}_{r}\otimes\mathbb{C}^{2}$,
corresponding to the trivial spinor bundle. The representation $\pi$
of $\mathcal{A}$ is trivially extended to the two copies of $\mathcal{H}_{r}$,
and we denote it again by $\pi$. The inner product of two spinors
$\psi,\phi\in\mathcal{H}$ is given by
\[
(\psi,\phi)_{\mathcal{H}}=\int\left((\psi_{1}^{*}\star\phi_{1})(x)+(\psi_{2}^{*}\star\phi_{2})(x)\right)d^{2}x\ .
\]

As a first attempt, we can try to use the classical Dirac operator
$\mathcal{D}=\Gamma^{\mu}\hat{P}_{\mu}$. Then, as in the commutative
case, we have $[\mathcal{D},\pi(f)]=\Gamma^{\mu}[\hat{P}_{\mu},\pi(f)]$.
The difference with the commutative case comes from the fact that
the representation $\pi$ is not pointwise multiplication. Indeed,
using the fact that the representation $\pi$ is $\mathcal{T}_{\kappa}$-equivariant,
we obtain
\[
\hat{P}_{1}\pi(f)\psi=\rho(P_{1})\pi(f)\psi=\pi(P_{1}\triangleright f)\psi+\pi(\mathcal{E}\triangleright f)\rho(P_{1})\psi\ .
\]
As a consequence of the non-trivial coproduct of the $\kappa$-Poincaré
algebra, $P_{1}$ does not obey the Leibnitz rule. Then the commutator
$[\mathcal{D},\pi(f)]$ is not bounded, since $\rho(P_{1})$ is an
unbounded operator. Explicitely we have 
\[
[\mathcal{D},\pi(f)]=\Gamma^{\mu}\pi(P_{\mu}\triangleright f)+\Gamma^{1}\pi((\mathcal{E}-1)\triangleright f)\rho(P_{1})\ .
\]
One can hope to evade this problem by considering a different Dirac
operator $\mathcal{D}$. To have a good classical limit we require
that, in some technical sense that we will specify later, in the limit
$\lambda\to0$ this Dirac operator $\mathcal{D}$ reduces to the classical
one. Moreover, since we are dealing with a quantum group, it is natural
to consider the framework of twisted spectral triples \cite{type III},
see the discussions in \cite{wagner,quantum-twisted}. Then we require
the boundedness of 
\[
[\mathcal{D},\pi(f)]_{\sigma}=\mathcal{D}\pi(f)-\pi(\sigma(f))\mathcal{D}\ .
\]
Here $\sigma$ is an automorphism of the algebra $\mathcal{A}$. Since
the algebra $\mathcal{A}$ is involutive one also requires the compatibility
property $\sigma(f)^{*}=\sigma^{-1}(f^{*})$, for $f\in\mathcal{A}$.
In the next subsection we investigate what are the possible choices
for the Dirac operator $\mathcal{D}$ and the automorphism $\sigma$.

\subsection{The deformed Dirac operator $\mathcal{D}$}

Now we state our assumptions for the Dirac operator $\mathcal{D}$
and the automorphism $\sigma$. The first assumption is of general
nature, that is we require that $\mathcal{D}$ is self-adjoint on
$\mathcal{H}=\mathcal{H}_{r}\otimes\mathbb{C}^{2}$ and that it anticommutes
with the grading $\chi$. This implies that $\mathcal{D}$ is of the
form $\mathcal{D}=\Gamma^{\mu}\hat{D}_{\mu}$, where $\hat{D}_{\mu}$
are self-adjoint operators on $\mathcal{H}_{r}$. The second assumption
is that, in the limit $\lambda\to0$, $\mathcal{D}$ and $\sigma$
should reduce respectively to the classical Dirac operator and the
identity. The technical statement of this assumption is going to be
given later. Next, motivated by the fact that $\mathcal{D}$ and $\sigma$
should be determined by the symmetries, we assume that $\hat{D}_{\mu}=\rho(D_{\mu})$,
for some $D_{\mu}\in\mathcal{T}_{\kappa}$, and that the automorphism
$\sigma$ is given by $\sigma(f)=\sigma\triangleright f$, for some
$\sigma\in\mathcal{T}_{\kappa}$. The requirement that $\sigma$ is
an automorphism implies that its coproduct is $\Delta(\sigma)=\sigma\otimes\sigma$. 

We recall that we do not consider any topology on $\mathcal{T}_{\kappa}$,
therefore any element can be written as a \textit{finite }sum of products
of generators. Since the algebra is commutative, any such element
is a sum of terms of the form $P_{0}^{i}P_{1}^{j}\mathcal{E}^{k}$,
with $i,j\in\mathbb{N}$ and $k\in\mathbb{Z}$. Notice that, in our
definition of $\mathcal{T}_{\kappa}$, the element $\mathcal{E}$
is not the formal series $e^{-\lambda P_{0}}$ in $P_{0}$, as usually
given in the defining relations of $\kappa$-Poincaré, but is considered
as one of the generators of the algebra.
\begin{lem}
\label{lem:twist-bnd}Suppose that $A,\sigma\in\mathcal{T}_{\kappa}$.
Then $[\rho(A),\pi(f)]_{\sigma}=\rho(A)\pi(f)-\pi(\sigma\triangleright f)\rho(A)$
is bounded if and only if the coproduct of $A$ is $\Delta(A)=A^{\prime}\otimes1+\sigma\otimes A$,
for some $A^{\prime}\in\mathcal{T}_{\kappa}$.\end{lem}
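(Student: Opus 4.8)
\emph{Proof idea.} The plan is to reduce the entire statement to an analysis of the \emph{defect} of $\Delta(A)$ from the prescribed shape, by exploiting the $\mathcal{T}_{\kappa}$-equivariance of $\pi$. Write $\Delta(A)=\sigma\otimes A+\sum_{j}E_{j}\otimes F_{j}$, where $\sum_{j}E_{j}\otimes F_{j}:=\Delta(A)-\sigma\otimes A$ is the defect (recall that $\sigma$ is grouplike, so $\varepsilon(\sigma)=1$). Since $\pi$ is equivariant, on the common domain $\mathcal{A}$ one has $\rho(A)\pi(f)=\pi(A_{(1)}\triangleright f)\rho(A_{(2)})$ in Sweedler notation, and substituting the decomposition gives $\rho(A)\pi(f)=\pi(\sigma\triangleright f)\rho(A)+\sum_{j}\pi(E_{j}\triangleright f)\rho(F_{j})$. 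Hence, identically in $f$,
\[
[\rho(A),\pi(f)]_{\sigma}=\sum_{j}\pi(E_{j}\triangleright f)\rho(F_{j})\ .
\]
The problem therefore becomes: decide for which defects this operator is bounded for every $f\in\mathcal{A}$.

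With this identity the \emph{if} direction is immediate. If $\Delta(A)=A^{\prime}\otimes1+\sigma\otimes A$ then the defect is $A^{\prime}\otimes1$, the sum collapses to $[\rho(A),\pi(f)]_{\sigma}=\pi(A^{\prime}\triangleright f)$, and since $A^{\prime}\triangleright f\in\mathcal{A}$ and $\pi(g)$ is bounded for every $g\in\mathcal{A}$ (proven above), this is bounded.

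For the converse I argue by contraposition, so it suffices to exhibit a single $f$ making the operator unbounded whenever the defect has a nonconstant second leg. First split off the part of the defect with second leg proportional to $1$ (it contributes the bounded term $\pi(A^{\prime}\triangleright f)$); by minimality of the number of terms I may assume the remaining $F_{j}$ are linearly independent and none is a scalar, and after row-reducing the second legs with respect to a fixed monomial order I may further assume they have pairwise distinct leading monomials. Because $\mathcal{T}_{\kappa}$ is commutative, Lemma \ref{lem:g-formula} shows that, after conjugation by $U$ and Fourier transform, $\rho(F)$ is multiplication by the symbol $\sigma_{F}(p)=\sum c_{ilm}\,p_{0}^{i}p_{1}^{l}e^{-\lambda m p_{0}}$, which is unbounded exactly when $F\not\propto1$. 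I then probe with position-localized, high-frequency vectors $\psi_{N}(x)=\phi(x)e^{i(N_{0}x_{0}+N_{1}x_{1})}$ for a fixed envelope $\phi$, taking $N_{0}\to-\infty$ and $N_{1}=e^{c|N_{0}|}$ with $c$ large. One checks that $\|\rho(F_{j})\psi_{N}\|\asymp|\sigma_{F_{j}}(N)|\,\|\psi_{N}\|$, while Lemma \ref{lem:mult-formula} shows that in this regime $\pi(E_{j}\triangleright f)$ acts to leading order by multiplication by $U(E_{j}\triangleright f)(x_{0},0)$, so it neither suppresses nor amplifies. Ordering monomials by the exponential rate $cl+\lambda m$ and then by the power $i$, the chosen direction makes one term $j^{\ast}$, namely the one whose $F_{j^{\ast}}$ has maximal leading monomial, strictly dominate all others by an unbounded ratio. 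Choosing $f$ so that $U(E_{j^{\ast}}\triangleright f)$ does not vanish on the line $x_{1}=0$ (possible since $E_{j^{\ast}}\neq0$ and the action is faithful) yields $\|[\rho(A),\pi(f)]_{\sigma}\psi_{N}\|\gtrsim|\sigma_{F_{j^{\ast}}}(N)|\to\infty$, contradicting boundedness. Thus the defect has no nonconstant second leg, i.e. $\Delta(A)=A^{\prime}\otimes1+\sigma\otimes A$.

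The main obstacle is exactly this boundedness analysis in the converse: one must preclude that the bounded multiplication operators $\pi(E_{j}\triangleright f)$ conspire to cancel the genuinely growing contributions of the $\rho(F_{j})$. The subtlety is that $\pi$ is left $\star$-multiplication, which entangles position and momentum through the rescaling $x_{1}\mapsto e^{\lambda p_{0}}x_{1}$; consequently naive momentum-localized test vectors get suppressed and detect nothing. The resolution is to combine the position-localized high-frequency probes above with the explicit Schwartz kernel of $\pi(f)g(\hat{P})$ from Proposition \ref{prop:kernel} and with the strict domination of a single leading symbol, which is what forces the vanishing of the nonconstant part of the defect.
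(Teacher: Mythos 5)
Your reduction via equivariance to the identity $[\rho(A),\pi(f)]_{\sigma}=\sum_{j}\pi(E_{j}\triangleright f)\rho(F_{j})$, with $\sum_{j}E_{j}\otimes F_{j}=\Delta(A)-\sigma\otimes A$, is exactly the paper's starting point, and your ``if'' direction is correct. You have also correctly identified the crux of the converse, which the paper itself passes over quickly: one must rule out that the bounded operators $\pi(E_{j}\triangleright f)$ tame the unbounded $\rho(F_{j})$. However, your test-vector family does not close this gap. After conjugation by $U$ the symbol of $\rho(P_{0}^{i}P_{1}^{l}\mathcal{E}^{m})$ is $p_{0}^{i}p_{1}^{l}e^{-\lambda(l+m)p_{0}}$ (note the extra factor $e^{-\lambda lp_{0}}$ coming from $U\hat{P}_{1}U^{-1}=e^{-\lambda\tilde{P}_{0}}\tilde{P}_{1}$, which your $\sigma_{F}$ omits), and along your path $N_{0}\to-\infty$, $N_{1}=e^{c|N_{0}|}$ this \emph{decays} whenever $l=0$ and $m<0$. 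Hence for a defect whose nonconstant second legs lie in the span of $\{P_{0}^{i}\mathcal{E}^{-m}:m\geq1\}$ your probes detect nothing: the putative dominant term tends to zero rather than to infinity, and no contradiction with boundedness is obtained. This case genuinely occurs --- take $A=\mathcal{E}^{-1}$ and $\sigma=1$, so that $\Delta(A)-\sigma\otimes A=(\mathcal{E}^{-1}-1)\otimes\mathcal{E}^{-1}$ is not of the required form, yet $\rho(\mathcal{E}^{-1})=e^{\lambda\hat{P}_{0}}$ annihilates your vectors in the limit.

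The natural repair, probing with $N_{0}\to+\infty$, runs into precisely the phenomenon you flag as the main subtlety, but on the side you did not treat: in that regime $\pi(g)$ \emph{does} suppress, because the kernel factor $(Ug)(x_{0},e^{\lambda p_{0}}x_{1})$ of Lemma \ref{lem:mult-formula} concentrates on a strip of width $e^{-\lambda N_{0}}$ in $x_{1}$, costing a factor $e^{-\lambda N_{0}/2}$ in $L^{2}$ norm. One can still win --- the suppression only half-cancels the growth $e^{m\lambda N_{0}}$ of $\rho(\mathcal{E}^{-m})$, giving $\|\pi(g)\rho(\mathcal{E}^{-m})\psi_{N}\|\sim e^{(m-1/2)\lambda N_{0}}\to\infty$ for $m\geq1$ --- but this requires a separate estimate that your proposal neither states nor proves, and the assertion that $\pi(E_{j}\triangleright f)$ ``neither suppresses nor amplifies'' is false exactly in the regime needed for these monomials. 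Until both asymptotic regimes, and their interaction with your leading-monomial selection, are treated, the converse is not established. For what it is worth, the paper's own proof is terse at the same spot (it asserts that boundedness of $\sum\pi(B_{i}\triangleright f)\rho(C_{j})$ forces each $C_{j}$ to be scalar without excluding cancellations), so your instinct about where the real work lies is sound; the work is just not yet complete.
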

\begin{proof}
Using the fact that the representation $\pi$ is equivariant we have
\[
\begin{split}[\rho(A),\pi(f)]_{\sigma} & =\rho(A)\pi(f)-\pi(\sigma\triangleright f)\rho(A)\\
 & =\pi(A_{(1)}\triangleright f)\rho(A_{(2)})-\pi(\sigma\triangleright f)\rho(A)\ .
\end{split}
\]
For some $B_{i},C_{j}\in\mathcal{T}_{\kappa}$ this can be rewritten
in the form 
\begin{equation}
[\rho(A),\pi(f)]_{\sigma}=\sum_{ij}\pi(B_{i}\triangleright f)\rho(C_{j})\ .\label{eq:hopf-comm}
\end{equation}
Now we show that the operator $\rho(C_{j})$ is bounded if and only
if $C_{j}$ is a multiple of the unit. Indeed, from their definition,
we have that $\rho(P_{\mu})$ and $\rho(\mathcal{E})$ are unbounded
operators, while the unit corresponds to the identity operator. Using
our previous remark on the structure of $\mathcal{T}_{\kappa}$, we
have that a general operator $\rho(h)$, with $h\in\mathcal{T}_{\kappa}$,
can be written in the form 
\[
\rho(h)=\sum_{ijk}c_{ijk}\rho(P_{0}^{i})\rho(P_{1}^{j})\rho(\mathcal{E}^{k})\ .
\]
This is unbounded unless all its coefficients are zero, except possibly
for $c_{000}$. Then it follows from equation (\ref{eq:hopf-comm})
that $[\rho(A),\pi(f)]_{\sigma}$ is bounded if and only if all the
elements $C_{j}$ are multiples of the unit. Setting $C_{j}=c_{j}1$,
with $c_{j}\in\mathbb{C}$, we can rewrite equation (\ref{eq:hopf-comm})
as
\[
[\rho(A),\pi(f)]_{\sigma}=\pi(A^{\prime}\triangleright f)\ ,\qquad A^{\prime}:=\sum_{ij}c_{j}B_{i}\ .
\]
Using the definition of the twisted commutator, we have that $[\rho(A),\pi(f)]_{\sigma}=\pi(A^{\prime}\triangleright f)$
implies
\[
\rho(A)\pi(f)=\pi(A^{\prime}\triangleright f)+\pi(\sigma\triangleright f)\rho(A)\ .
\]
Finally, using the equivariance of $\pi$, this implies that $\Delta(A)=A^{\prime}\otimes1+\sigma\otimes A$.
\end{proof}
The previous lemma tells us that the requirement of boundedness of
the twisted commutator severly restricts the form of the coproduct
of $D_{\mu}$. Now we characterize which elements of $\mathcal{T}_{\kappa}$
have a coproduct of this form.
\begin{lem}
\label{lem:list-elems}Consider an element $A\in\mathcal{T}_{\kappa}$.
Suppose that $\Delta(A)=A^{\prime}\otimes1+\sigma\otimes A$ for some
$A^{\prime},\sigma\in\mathcal{T}_{\kappa}$, and that $\Delta(\sigma)=\sigma\otimes\sigma$.
Then we have $\sigma=\mathcal{E}^{m}$, for some $m\in\mathbb{Z}$,
and for some coefficients $c_{j}\in\mathbb{C}$ we have that $A$
can be written as
\begin{itemize}
\item $c_{1}1+c_{2}\mathcal{E}^{m}$ if $m<0$,
\item \textup{$c_{1}1+c_{2}P_{0}$} if $m=0$,
\item $c_{1}1+c_{2}\mathcal{E}+c_{3}P_{1}$ if $m=1$,
\item $c_{1}1+c_{2}\mathcal{E}^{m}$ if $m>1$.
\end{itemize}
\end{lem}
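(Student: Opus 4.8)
The plan is to directly analyze the constraint $\Delta(A)=A'\otimes 1+\sigma\otimes A$ using the explicit coproduct formulas for the generators of $\mathcal{T}_\kappa$. First I would settle the form of $\sigma$. Since $\sigma$ is grouplike, $\Delta(\sigma)=\sigma\otimes\sigma$, I expand a general element $\sigma=\sum c_{ijk}P_0^iP_1^j\mathcal{E}^k$ and impose this condition. Because $\mathcal{E}$ is the only grouplike generator (indeed $\Delta(\mathcal{E})=\mathcal{E}\otimes\mathcal{E}$, whereas $P_0$ and $P_1$ are primitive or skew-primitive and produce cross-terms with $1$ that cannot be matched by $\sigma\otimes\sigma$), a short computation should force $\sigma=\mathcal{E}^m$ for some $m\in\mathbb{Z}$. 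I expect that comparing the number of tensor factors equal to the counit-augmented pieces rules out any genuine polynomial dependence on $P_0,P_1$.

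Next, with $\sigma=\mathcal{E}^m$ fixed, I would write $A=\sum c_{ijk}P_0^iP_1^j\mathcal{E}^k$ and impose $\Delta(A)=A'\otimes 1+\mathcal{E}^m\otimes A$. The key structural fact is that the second tensor leg of $\Delta(A)$, after projecting with $\varepsilon$ on the first leg, must reproduce $A$ itself with prefactor controlled by $\mathcal{E}^m$; more precisely, applying $(\varepsilon\otimes\mathrm{id})$ and $(\mathrm{id}\otimes\varepsilon)$ to the hypothesized coproduct recovers $A=A'+\varepsilon(\sigma)A$-type relations that pin down the allowed monomials. The cleanest route is to compute $\Delta$ on each monomial $P_0^iP_1^j\mathcal{E}^k$ using $\Delta(P_0)=P_0\otimes 1+1\otimes P_0$, $\Delta(P_1)=P_1\otimes 1+\mathcal{E}\otimes P_1$, and $\Delta(\mathcal{E})=\mathcal{E}\otimes\mathcal{E}$, and then demand that the only surviving term whose second leg is not a scalar is exactly $\mathcal{E}^m\otimes A$. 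This condition says: every term in $\Delta(A)$ either has second leg proportional to $1$ (absorbed into $A'\otimes 1$) or has first leg exactly $\mathcal{E}^m$.

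Carrying this out monomial by monomial, the requirement that the ``$\mathcal{E}^m\otimes(\text{rest})$'' piece closes up consistently forces case distinctions on $m$. For $m=0$ the grouplike prefactor is trivial, so $A$ must be primitive modulo constants, giving $c_11+c_2P_0$ (note $P_1$ is skew-primitive with prefactor $\mathcal{E}$, not $1$, so it is excluded). For $m=1$ the prefactor $\mathcal{E}$ matches the skew-primitive structure of $P_1$ and the grouplike $\mathcal{E}$, admitting $c_11+c_2\mathcal{E}+c_3P_1$. For $m<0$ or $m>1$ no primitive or skew-primitive generator supplies the needed prefactor $\mathcal{E}^m$ except the grouplike $\mathcal{E}^m$ itself, leaving only $c_11+c_2\mathcal{E}^m$. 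The constant term $c_11$ is always allowed since $\Delta(1)=1\otimes 1$ fits with $A'=(1-\mathcal{E}^m)$ absorbed on the first leg.

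The main obstacle I anticipate is the bookkeeping in the mixed monomials: terms like $P_0^iP_1^j\mathcal{E}^k$ with $i+j\geq 2$ generate, under $\Delta$, a spread of tensor products whose second legs are higher monomials, and one must verify these cannot be cancelled or repackaged into the rigid form $A'\otimes 1+\mathcal{E}^m\otimes A$. The cleanest way to handle this is to argue by the \emph{degree} (total order in $P_0,P_1$) of the second tensor leg: the condition $\Delta(A)=A'\otimes 1+\mathcal{E}^m\otimes A$ forces the component of $A$ of each positive degree to be separately skew-primitive with grouplike element $\mathcal{E}^m$, and since the only degree-one skew-primitives are $P_0$ (with prefactor $1$) and $P_1$ (with prefactor $\mathcal{E}$), while there are no admissible higher-degree skew-primitives of this exact type, the list closes. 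I would verify the absence of higher-degree solutions by a short induction on degree, which is the delicate part but is essentially combinatorial once the coproduct formulas are in hand.
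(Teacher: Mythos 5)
Your proposal is correct and follows essentially the same route as the paper: identify the grouplikes of $\mathcal{T}_{\kappa}$ as the powers $\mathcal{E}^{m}$, expand $A$ in the monomial basis $P_{0}^{i}P_{1}^{j}\mathcal{E}^{k}$, and use the explicit coproducts to see that cross terms rule out every monomial except $1$, $\mathcal{E}^{m}$, and (depending on $m$) the skew-primitives $P_{0}$ or $P_{1}$. Your degree-filtration/induction remark just makes explicit the linear-independence argument that the paper handles by exhibiting sample cross terms such as $\Delta(P_{0}^{2})$ and $\Delta(P_{0}P_{1})$.
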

\begin{proof}
First of all it is clear that if $\sigma\in\mathcal{T}_{\kappa}$
is such that $\Delta(\sigma)=\sigma\otimes\sigma$, then we must have
$\sigma=\mathcal{E}^{m}$ for some $m\in\mathbb{Z}$. Any element
$A\in\mathcal{T}_{\kappa}$ is a finite sum of elements of the form
$P_{0}^{i}P_{1}^{j}\mathcal{E}^{k}$, with $i,j\in\mathbb{N}$ and
$k\in\mathbb{Z}$, and we can write it as
\[
A=\sum_{ijk}c_{ijk}P_{0}^{i}P_{1}^{j}\mathcal{E}^{k}=\sum_{ijk}A_{ijk}\ .
\]
The request $\Delta(A)=A^{\prime}\otimes1+\sigma\otimes A$ puts a
constraint on the possible terms that appear in the sum. A term $A_{ijk}$
is allowed if and only if its coproduct is of the form $\Delta(A_{ijk})=B_{ijk}\otimes1+\sigma\otimes C_{ijk}$,
for some $B_{ijk},C_{ijk}\in\mathcal{T}_{\kappa}$. Now we discuss
which terms are of this form.

Consider first the generator $P_{0}$. Then $P_{0}^{i}$ is allowed
for $i=0,1$, since we have $\Delta(1)=1\otimes1$ and $\Delta(P_{0})=P_{0}\otimes1+1\otimes P_{0}$.
For higher powers we get cross terms which cannot be of the allowed
form, for example if we compute the coproduct for $i=2$ we find
\[
\Delta(P_{0}^{2})=P_{0}^{2}\otimes1+2P_{0}\otimes P_{0}+1\otimes P_{0}^{2}\ .
\]
The same argument applies to $P_{1}$, for which we have $\Delta(P_{1})=P_{1}\otimes1+\mathcal{E}\otimes P_{1}$.
Higher powers are not allowed as in the case of $P_{0}$. For $\mathcal{E}$
on the other hand any power is acceptable, since $\Delta(\mathcal{E}^{k})=\mathcal{E}^{k}\otimes\mathcal{E}^{k}$
is an automorphism. Now we consider the mixed terms. Any term of the
form $P_{0}^{i}P_{1}^{j}$ for $i,j\geq1$ is not allowed because
of the cross terms. For example for $P_{0}P_{1}$ we have
\[
\Delta(P_{0}P_{1})=P_{0}P_{1}\otimes1+\mathcal{E}P_{0}\otimes P_{1}+P_{1}\otimes P_{0}+\mathcal{E}\otimes P_{0}P_{1}\ .
\]
Similarly mixed terms like $P_{0}^{i}\mathcal{E}^{k}$ or $P_{1}^{j}\mathcal{E}^{k}$
are not allowed for $i,j,k\geq1$. For example we have
\[
\begin{split}\Delta(P_{0}\mathcal{E}^{k}) & =\mathcal{E}^{k}P_{0}\otimes\mathcal{E}^{k}+\mathcal{E}^{k}\otimes\mathcal{E}^{k}P_{0}\ ,\\
\Delta(P_{1}\mathcal{E}^{k}) & =\mathcal{E}^{k}P_{1}\otimes\mathcal{E}^{k}+\mathcal{E}^{k}\otimes\mathcal{E}^{k}P_{1}\ .
\end{split}
\]

Now we discuss which terms are compatible when we fix $\sigma=\mathcal{E}^{m}$,
for some $m\in\mathbb{Z}$. The coproduct of $A_{ijk}$ must have
the form $\Delta(A_{ijk})=B_{ijk}\otimes1+\mathcal{E}^{m}\otimes C_{ijk}$.
The unit and $\mathcal{E}^{m}$ satisfy this requirement for any $m\in\mathbb{Z}$.
They are the only possible terms for $m<0$ and $m>1$. On the other
hand for $m=0$ we can also have $P_{0}$, while for $m=1$ we can
also have $P_{1}$.
\end{proof}
Now it is time to discuss the requirement of the classical limit for
$\mathcal{D}$.\textit{ }First of all we need to recall that the parameter
$\lambda$ is a \textit{physical quantity} of the model, which has
the physical dimension of a length. Since also the coordinates $x_{\mu}$
have the dimensions of a length, it follows that the Dirac operator
must have dimension $[\mathcal{D}]=-1$, where by $[A]$ we denote
the physical dimension of $A$ in units of length. From this observation
it follows that the generators of $\mathcal{T}_{\kappa}$ have physical
dimensions $[P_{\mu}]=-1$, $[\mathcal{E}]=0$ and the unit has $[1]=0$.
Now we can give the technical statement of the assumption of the classical
limit.
\begin{defn}
We say that the Dirac operator $\mathcal{D}=\Gamma^{\mu}\hat{D}_{\mu}$
obeys the \textit{classical limit} if for all $\psi\in\mathcal{A}$
we have $\lim\hat{D}_{\mu}\psi=\hat{P}_{\mu}\psi$ for $\lambda\to0$,
and moreover $[\hat{D}_{\mu}]=-1$.
\end{defn}
Now that we have stated all the assumptions we can prove the following
theorem.
\begin{thm}
Suppose that $D_{\mu},\sigma\in\mathcal{T}_{\kappa}$ and that the
Dirac operator $\mathcal{D}=\Gamma^{\mu}\rho(D_{\mu})$ obeys the
classical limit. Then the twisted commutator
\[
[\mathcal{D},\pi(f)]_{\sigma}=\Gamma^{\mu}\left(\rho(D_{\mu})\pi(f)-\pi(\sigma\triangleright f)\rho(D_{\mu})\right)
\]
is bounded if and only if we have $D_{0}=\frac{1}{\lambda}(1-\mathcal{E})$,
$D_{1}=P_{1}$ and $\sigma=\mathcal{E}$.\end{thm}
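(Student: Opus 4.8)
The plan is to reduce the statement to the two preceding lemmas and then use the classical limit together with dimensional analysis to pin down the coefficients. First I would note that, since the matrices $\Gamma^\mu$ commute with $\pi(f)$ (they act only on the spinor factor $\mathbb{C}^2$), the twisted commutator splits as $[\mathcal{D},\pi(f)]_\sigma = \Gamma^\mu[\rho(D_\mu),\pi(f)]_\sigma$. Writing this $2\times2$ operator out, its off-diagonal blocks are $[\rho(D_0),\pi(f)]_\sigma \mp i[\rho(D_1),\pi(f)]_\sigma$, so boundedness of $[\mathcal{D},\pi(f)]_\sigma$ for all $f\in\mathcal{A}$ is equivalent to boundedness of both $[\rho(D_0),\pi(f)]_\sigma$ and $[\rho(D_1),\pi(f)]_\sigma$ with one and the same automorphism $\sigma$. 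This is the key structural point: the single $\sigma$ must work simultaneously for both components.

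Next I would invoke Lemma \ref{lem:twist-bnd} for each $\mu$: boundedness is equivalent to $\Delta(D_\mu) = D_\mu'\otimes1 + \sigma\otimes D_\mu$ for some $D_\mu'\in\mathcal{T}_\kappa$. Since $\sigma$ is assumed to be an automorphism we have $\Delta(\sigma)=\sigma\otimes\sigma$, so Lemma \ref{lem:list-elems} applies and yields $\sigma=\mathcal{E}^m$ for some $m\in\mathbb{Z}$, together with the explicit list of admissible forms for $D_0$ and $D_1$, both governed by the same $m$.

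The crucial observation is then that the classical limit selects $m=1$. The Dirac operator must satisfy $\rho(D_1)\psi\to\hat P_1\psi=-i\partial_1\psi$, so $D_1$ must contain a $P_1$ term; inspecting the list in Lemma \ref{lem:list-elems}, the only value of $m$ for which $P_1$ is an admissible summand is $m=1$. For $m<0$, $m=0$ and $m>1$ the admissible $D_1$ involves only $1$, $\mathcal{E}^m$ (or $P_0$ when $m=0$), whose representatives are functions of $\hat P_0$ alone and can never converge to $\hat P_1$; this rules out all $m\neq1$.

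Finally, with $\sigma=\mathcal{E}$ fixed, I would pin down the coefficients. Using $[P_\mu]=-1$, $[\mathcal{E}]=0$, $[1]=0$ and $[\lambda]=1$, and the fact that $\lambda$ is the only dimensionful parameter available, each coefficient in $D_\mu=c_11+c_2\mathcal{E}+c_3P_1$ must be a constant multiple of the appropriate power of $\lambda$; dimensional homogeneity $[\hat D_\mu]=-1$ then forces $c_1,c_2\propto\lambda^{-1}$ and $c_3$ constant. Expanding $\rho(\mathcal{E})=e^{-\lambda\hat P_0}=1-\lambda\hat P_0+O(\lambda^2)$ and demanding $\rho(D_0)\to\hat P_0$ kills the divergent $\lambda^{-1}$ term and normalises the $\hat P_0$ coefficient, yielding $D_0=\tfrac1\lambda(1-\mathcal{E})$; the same expansion with $\rho(D_1)\to\hat P_1$ forces the $\mathcal{E}$- and $1$-coefficients to vanish and the $P_1$-coefficient to equal $1$, giving $D_1=P_1$. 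For the converse I would simply check $\Delta(\tfrac1\lambda(1-\mathcal{E}))=\tfrac1\lambda(1-\mathcal{E})\otimes1+\mathcal{E}\otimes\tfrac1\lambda(1-\mathcal{E})$ and $\Delta(P_1)=P_1\otimes1+\mathcal{E}\otimes P_1$, both of the form required by Lemma \ref{lem:twist-bnd} with the common $\sigma=\mathcal{E}$, so the twisted commutator is bounded. The main obstacle is this last step: disentangling the dimensional analysis from the classical limit so that the exact coefficients (not merely their leading $\lambda\to0$ behaviour) are determined, and verifying carefully that the single automorphism $\sigma=\mathcal{E}$ is consistent with the admissible forms of both $D_0$ and $D_1$.
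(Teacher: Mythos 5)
Your proposal is correct and follows essentially the same route as the paper: reduce to Lemma \ref{lem:twist-bnd} and Lemma \ref{lem:list-elems}, let the presence of $P_{1}$ in the classical limit of $D_{1}$ force $m=1$, and then use dimensional analysis plus the expansion of $\rho(\mathcal{E})=e^{-\lambda\hat{P}_{0}}$ to fix the coefficients. Your explicit remarks that the single $\sigma$ must work for both components and the final converse check are only spelled out implicitly in the paper, but they do not change the argument.
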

\begin{proof}
Using Lemma \ref{lem:twist-bnd} we have that $[\mathcal{D},\pi(f)]_{\sigma}$
is bounded if and only if $\Delta(D_{\mu})=D_{\mu}^{\prime}\otimes1+\sigma\otimes D_{\mu}$,
for some $D_{\mu}^{\prime}\in\mathcal{T}_{\kappa}$. Such elements
are classified by Lemma \ref{lem:list-elems}, depending on $m\in\mathbb{Z}$.
The condition that the operators $\hat{D}_{\mu}$ obey the classical
limit imposes $m=1$. Indeed only for this choice we have the element
$P_{1}$, which corresponds to the operator $\rho(P_{1})=-i\partial_{1}$.

Now we need to consider the restriction on the coefficients, imposed
again by the classical limit, for $m=1$. Let us start with $D_{0}$,
for which we have
\[
(\rho(D_{0})\psi)(x)=\lambda^{-1}\left(c_{1}\psi(x)+c_{2}\psi(x_{0}-i\lambda,x_{1})\right)-ic_{3}(\partial_{1}\psi)(x)\ .
\]
We have rescaled the coefficients in such a way that $c_{k}\in\mathbb{C}$
are dimensionless. The operator $\rho(D_{0})$ should reduce to $-i\partial_{0}$
in the limit $\lambda\to0$. We see immediately that we must have
$c_{3}=0$, since this term is not affected by the limit. To see what
are the requirements on the remaining two coefficients, we expand
$\psi$ in Taylor series in $\lambda$, which is allowed since $\psi\in\mathcal{A}$
is analytic in the first variable. We have
\[
(\rho(D_{0})\psi)(x)=\lambda^{-1}(c_{1}+c_{2})\psi(x)+ic_{2}(\partial_{0}\psi)(x)+\lambda^{-1}c_{2}\sum_{n=2}^{\infty}(\partial_{0}^{n}\psi)(x)\frac{(i\lambda)^{n}}{n!}\ .
\]
The first term diverges in the limit $\lambda\to0$ unless $c_{1}=-c_{2}$,
which we must require. The second term on the other hand is not affected
by this limit, and forces us to choose $c_{2}=-1$. For the third
term, after exchanging the limit with the series, we find that it
vanishes for $\lambda\to0$. Therefore the classical limit fixes $D_{0}=\lambda^{-1}(1-\mathcal{E})$.

We can repeat the same argument for $D_{1}$, for which we write
\[
(\rho(D_{1})\psi)(x)=\lambda^{-1}\left(c_{1}^{\prime}\psi(x)+c_{2}^{\prime}\psi(x_{0}-i\lambda,x_{1})\right)-ic_{3}^{\prime}(\partial_{1}\psi)(x)\ .
\]
This operator must reduce to $-i\partial_{1}$ in the limit $\lambda\to0$.
From the previous discussion it is obvious that we must have $c_{1}^{\prime}=c_{2}^{\prime}=0$.
Finally the third coefficient is the same as in the commutative case,
that is $c_{3}^{\prime}=1$, therefore $D_{1}=P_{1}$.
\end{proof}
The uniqueness of $\mathcal{D}$ and $\sigma$, under the symmetry
and classical limit assumptions, is the main result of this section.
We have a nice compatibility between the twisting of the commutator
and the modular properties of the algebra. Indeed the modular operator
$\Delta_{\omega}$, associated to the weight $\omega$, implements
the twist in the Hilbert space $\mathcal{H}$, that is $\pi(\sigma(f))=\Delta_{\omega}\pi(f)\Delta_{\omega}^{-1}$.
Moreover $\sigma$ is the analytic extension at $t=-i$ of $\sigma_{t}^{\omega}$,
the modular group of $\omega$, compare with the discussion on the
index map given in \cite{type III}.

Another interesting feature is the simple relation between $\mathcal{D}^{2}$,
the square of the Dirac operator, and $\mathcal{C}$, the \textit{first
Casimir} of the $\kappa$-Poincaré algebra (more precisely of the
quantum Euclidean group, since we are in Euclidean signature). The
latter is given by
\begin{equation}
\mathcal{C}=\frac{4}{\lambda^{2}}\sinh^{2}\left(\frac{\lambda\hat{P}_{0}}{2}\right)+e^{\lambda\hat{P}_{0}}\hat{P}_{1}^{2}\ .
\end{equation}
Then an elementary computation shows that we have the relation $\mathcal{D}^{2}=\Delta_{\omega}\mathcal{C}$,
where $\Delta_{\omega}$ is the modular operator of the weight $\omega$.
Apart from the presence of the modular operator, this is the same
property that one has in the commutative case. This connection is
made more suggestive if, following \cite{quantum-twisted}, we rewrite
the twisted commutator in the form
\[
K^{-1}\left(\mathcal{D}^{\prime}\pi(a)-(K^{-1}\pi(a)K)\mathcal{D}^{\prime}\right)\ .
\]
The usual definition of the twisted commutator is obtained by setting
$\mathcal{D}=K^{-1}\mathcal{D}^{\prime}$ and $\pi(\sigma(a))=K^{-2}\pi(a)K^{2}$.
From the previous remark it follows that $K=\Delta_{\omega}^{-1/2}$,
so we have $(\mathcal{D}^{\prime})^{2}=\mathcal{C}$ and $\mathcal{D}^{\prime}$
is exactly the {}``square root'' of the Casimir operator.

We remark that the operator $\mathcal{D}$, which we introduced in
this section, serves the purpose of describing the geometry of $\kappa$-Minkowski
space from the spectral point of view. This is in principle distinct
from the operator one introduces to describe the physical properties
of fermions, which is the one that deserves to be called the Dirac
operator. This has been studied in the literature, see for example
\cite{nowicki-dirac,dirac-spinors} and also \cite{dandrea-kappa}.
Mathematically such an operator is required to be equivariant (or
covariant, in more physical terms) under the $\kappa$-Poincaré algebra
(the quantum Euclidean group, in Euclidean signature). It is given
by
\[
\mathcal{D}^{eq}=\Gamma^{0}\left(\frac{1}{\lambda}\sinh(\lambda\hat{P}_{0})-\frac{\lambda}{2}e^{\lambda\hat{P}_{0}}\hat{P}_{1}^{2}\right)+\Gamma^{1}e^{\lambda\hat{P}_{0}}\hat{P}_{1}\ .
\]
In our notations it can be written as $\mathcal{D}^{eq}=\Gamma^{\mu}\rho(D_{\mu}^{eq})$,
where
\[
D_{0}^{eq}=\frac{1}{2\lambda}(\mathcal{E}^{-1}-\mathcal{E})-\frac{\lambda}{2}\mathcal{E}^{-1}P_{1}^{2}\ ,\quad D_{1}^{eq}=\mathcal{E}^{-1}P_{1}\ .
\]
By our previous results it follows that such an operator does not
have a twisted bounded commutator. A relevant algebraic property that
$\mathcal{D}^{eq}$ satisfies is $(\mathcal{D}^{eq})^{2}=\mathcal{C}+\frac{\lambda^{2}}{4}\mathcal{C}^{2}$,
where $\mathcal{C}$ is again the Casimir of the $\kappa$-Poincaré
algebra. Differently from the commutative case, we do not have that
the operator $\mathcal{D}^{eq}$ is the {}``square root'' of the
Casimir operator $\mathcal{C}$. On the other hand, as we remarked
above, this role is essentially played by the operator $\mathcal{D}$.

The most obvious difference between $\mathcal{D}$ and $\mathcal{D}^{eq}$
is that the former is equivariant only under the extended momentum
algebra $\mathcal{T}_{\kappa}$, while the latter is equivariant under
the full $\kappa$-Poincaré algebra $\mathcal{P}_{\kappa}$. However
we point out that only the subalgebra $\mathcal{T}_{\kappa}$ is relevant
for the introduction of $\kappa$-Minkowski space, so at least the
minimal requirement of equivariance under $\mathcal{T}_{\kappa}$
is satisfied. Moreover, for any $h\in\mathcal{P}_{\kappa}$, we have
the interesting property that the twisted commutator of $\mathcal{D}^{2}$
with $\rho(h)$ is zero. This follows from a one-line computation
\[
[\mathcal{D}^{2},\rho(h)]_{\sigma}=\mathcal{D}^{2}\rho(h)-\Delta_{\omega}\rho(h)\Delta_{\omega}^{-1}\mathcal{D}^{2}=\Delta_{\omega}(\mathcal{C}\rho(h)-\rho(h)\mathcal{C})=0\ .
\]

\section{The spectral dimension}

In this section we consider the summability properties of our spectral
triple. We are going to show that it is not finitely summable in usual
sense of spectral triples, but it is finitely summable if we adapt
some definitions from the framework of \textit{modular spectral triples}.
This an extension of the concept of spectral triple, introduced among
other reasons to handle algebras having a KMS state. Since, as we
have seen in the previous sections, there is a natural KMS weight
on the algebra $\mathcal{A}$, it seems appropriate to use these tools
in this case.

The main result of this section is that the spectral dimension, computed
using the weight $\Phi$, exists and is equal to the classical dimension
two. Moreover the residue at $s=2$ of the function $\Phi\left(\pi(f)(\mathcal{D}^{2}+\mu^{2})^{-s/2}\right)$,
for $f\in\mathcal{A}$ and $\mu\neq0$, exists and gives $\omega(f)$
up to a constant, which shows that we recover the notion of integration
given by $\omega$ using the operator $\mathcal{D}$.

\subsection{A problem with finite summability}

The concept of finite summability for a non-unital spectral triple
is far more subtle than in the unital case, see \cite{integration}
for a detailed discussion of some of the issues arising. We just point
out that, while in the unital case the definition of the operator
$\mathcal{D}$ is enough to characterize the spectral dimension, in
the non-unital case one needs a delicate interplay between $\mathcal{D}$
and the algebra $\mathcal{A}$. We consider the notions of summability
given in \cite{integration}.
\begin{defn}
Let $(\mathcal{A},\mathcal{H},\mathcal{D})$ be a non-compact spectral
triple. We say that it is finitely summable and call $p$ the spectral
dimension if the following quantity exists 
\[
p:=\inf\{s>0:\forall a\in\mathcal{A},a\geq0,\ \text{Tr}\left(\pi(a)(\mathcal{D}^{2}+1)^{-s/2}\right)<\infty\}\ .
\]
In addition we say that $(\mathcal{A},\mathcal{H},\mathcal{D})$ is
$\mathcal{Z}_{p}$-summable if for all $a\in\mathcal{A}$ we have
\[
\limsup_{s\downarrow p}\left|(s-p)\text{Tr}\left(\pi(a)(\mathcal{D}^{2}+1)^{-s/2}\right)\right|<\infty\ .
\]
Now we show that our spectral triple is not finitely summable in this
sense.\end{defn}
\begin{prop}
\label{prop:not-compact}Let $h\in\mathcal{A}$ such that $h=f\star f$
with $f\geq0$. Then the operator $\pi(h)(\mathcal{D}^{2}+1)^{-s/2}$
is not trace class for any $s>0$. In other words, the spectral triple
is not finitely summable.\end{prop}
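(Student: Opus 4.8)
The plan is to reduce the trace-class question to a Hilbert--Schmidt computation, for which the explicit kernels of Section~3 are available. First I would observe that $\mathcal{D}^2=(\rho(D_0)^2+\rho(D_1)^2)\otimes 1_{\mathbb{C}^2}$ and $\pi(h)=\pi(h)\otimes 1_{\mathbb{C}^2}$, so $\pi(h)(\mathcal{D}^2+1)^{-s/2}$ is trace class on $\mathcal{H}$ if and only if its reduced part on $\mathcal{H}_r$ is, the $\mathbb{C}^2$ contributing only an overall factor of two. Writing $R=(\hat{D}^2+1)^{-s/2}$ with $\hat{D}^2=\rho(D_0)^2+\rho(D_1)^2$, and using that $\pi$ is a $*$-representation with $\pi(f)\geq 0$ (since $f$ is positive, hence self-adjoint) and $\pi(f)^2=\pi(f\star f)=\pi(h)$, the standard identity for a product of two positive operators gives $\mathrm{Tr}(\pi(h)R)=\|B\|_{\mathrm{HS}}^2$ with $B:=\pi(f)(\hat{D}^2+1)^{-s/4}$. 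Hence it suffices to show that $B$ fails to be Hilbert--Schmidt for every $s>0$.

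Next I would make the symbol of $R$ explicit. Since $\rho(D_0)=\lambda^{-1}(1-e^{-\lambda\hat{P}_0})$ and $\rho(D_1)=\hat{P}_1$ (equivalently $\mathcal{D}^2=\Delta_\omega\mathcal{C}$), one has $\hat{D}^2+1=g_0(\hat{P})$ with $g_0(p_0,p_1)=\lambda^{-2}(1-e^{-\lambda p_0})^2+p_1^2+1$, so $(\hat{D}^2+1)^{-s/4}=g_0^{-s/4}(\hat{P})$. Applying Proposition~\ref{prop:kernel} with $f$ and $g=g_0^{-s/4}$, the operator $UBU^{-1}$ on $L^2(\mathbb{R}^2)$ has kernel
\[
K_B(x,y)=\int e^{ip(x-y)}(Uf)(x_0,e^{\lambda p_0}x_1)\,g_0(p_0,e^{-\lambda p_0}p_1)^{-s/4}\,\frac{d^2p}{(2\pi)^2}\ .
\]
Then $\|B\|_{\mathrm{HS}}^2=\int|K_B(x,y)|^2\,d^2x\,d^2y$; carrying out the $y$-integration by Plancherel (for each fixed $x$, $K_B(x,\cdot)$ is an inverse Fourier transform in $p$) yields
\[
\|B\|_{\mathrm{HS}}^2=\int d^2x\int\frac{d^2p}{(2\pi)^2}\,|(Uf)(x_0,e^{\lambda p_0}x_1)|^2\,\bigl(\lambda^{-2}(1-e^{-\lambda p_0})^2+e^{-2\lambda p_0}p_1^2+1\bigr)^{-s/2}\ .
\]
Everything here is nonnegative, so Tonelli's theorem licenses any order of integration.

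Finally I would perform the change of variables that exhibits the modular mismatch: substituting $u_1=e^{\lambda p_0}x_1$ in the $x_1$-integral and $q_1=e^{-\lambda p_0}p_1$ in the $p_1$-integral produces Jacobians $e^{-\lambda p_0}$ and $e^{\lambda p_0}$ which cancel exactly. The integral factorises as
\[
\|B\|_{\mathrm{HS}}^2=\frac{1}{(2\pi)^2}\Bigl(\int|(Uf)(x)|^2\,d^2x\Bigr)\Bigl(\int dp_0\,dq_1\,\bigl(\lambda^{-2}(1-e^{-\lambda p_0})^2+q_1^2+1\bigr)^{-s/2}\Bigr)\ .
\]
The first factor equals $\|f\|^2>0$ (recall $\|f\|^2=\int|Uf|^2$ and $f\neq0$ since $h\neq0$). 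In the second factor the symbol does not grow in $p_0$: as $p_0\to+\infty$ the integrand tends to $(\lambda^{-2}+q_1^2+1)^{-s/2}$, a strictly positive limit independent of $p_0$, so the $p_0$-integral diverges for every $s>0$. Hence $\|B\|_{\mathrm{HS}}^2=+\infty$, $B$ is not Hilbert--Schmidt, and $\pi(h)(\mathcal{D}^2+1)^{-s/2}$ is not trace class. The only delicate point is to handle the product of two positive operators and the non-absolutely-convergent diagonal rigorously; routing everything through $\|B\|_{\mathrm{HS}}^2$ and Tonelli removes this obstacle, and the heart of the matter is the exact cancellation of the modular Jacobians together with the saturation of $\lambda^{-2}(1-e^{-\lambda p_0})^2$ as $p_0\to+\infty$.
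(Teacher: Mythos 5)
Your proposal is correct and follows essentially the same route as the paper: both reduce the question to showing that $\pi(f)(\mathcal{D}^{2}+1)^{-s/4}$ fails to be Hilbert--Schmidt, compute its kernel via Proposition \ref{prop:kernel}, factorize the Hilbert--Schmidt norm through the same cancelling change of variables, and conclude from the fact that $\lambda^{-2}(1-e^{-\lambda p_{0}})^{2}$ stays bounded as $p_{0}\to+\infty$ so the momentum integral diverges for every $s>0$. The only cosmetic difference is that you phrase the reduction as the positive-operator identity $\mathrm{Tr}(\pi(h)R)=\|\pi(f)R^{1/2}\|_{\mathrm{HS}}^{2}$, whereas the paper uses the one-directional implication that trace-classness of $\pi(h)R$ forces trace-classness of $\pi(f)R\pi(f)$; both yield the same contrapositive.
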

\begin{proof}
We have that if $\pi(h)(\mathcal{D}^{2}+1)^{-s/2}$ is trace class
then also $\pi(f)(\mathcal{D}^{2}+1)^{-s/2}\pi(f)$ is trace class,
while the converse statement is not true in general, see the discussion
in \cite{integration}. Proving that $\pi(f)(\mathcal{D}^{2}+1)^{-s/2}\pi(f)$
is trace class is the same as proving that $\pi(f)(\mathcal{D}^{2}+1)^{-s/4}$
is Hilbert-Schmidt, which is easy to check using the integral formula
for the kernel. Now we show that $\pi(f)(\mathcal{D}^{2}+1)^{-s/4}$
is not Hilbert-Schmidt for any $s>0$, from which the proposition
follows.

We have that the Hilbert-Schmidt norm of $\pi(f)(\mathcal{D}^{2}+1)^{-s/4}$,
as an operator on $\mathcal{H}=\mathcal{H}_{r}\otimes\mathbb{C}^{2}$,
is equal to the Hilbert-Schmidt norm of $A:=U\pi(f)(\mathcal{D}^{2}+1)^{-s/4}U^{-1}$
as an operator on $L^{2}(\mathbb{R}^{2})\otimes\mathbb{C}^{2}$. Using
Proposition \ref{prop:kernel} we find that the Schwartz kernel of
$A$ is given by
\[
K_{A}(x,y)=\int e^{ip(x-y)}(Uf)(x_{0},e^{\lambda p_{0}}x_{1})G_{s}(p_{0},e^{-\lambda p_{0}}p_{1})\frac{d^{2}p}{(2\pi)^{2}}\ ,
\]
where the function $G_{s}$ is defined by
\[
G_{s}(p):=\left(\lambda^{-2}(1-e^{-\lambda p_{0}})^{2}+p_{1}^{2}+1\right)^{-s/4}\ .
\]
For fixed $x$ define the function $h_{x}(p):=(Uf)(x_{0},e^{\lambda p_{0}}x_{1})G_{s}(p_{0},e^{-\lambda p_{0}}p_{1})$.
With this definition we can write the kernel $K_{A}$ as an inverse
Fourier transform
\[
K_{A}(x,y)=\int e^{ip(x-y)}h_{x}(p)\frac{d^{2}p}{(2\pi)^{2}}=(\mathcal{F}^{-1}h_{x})(x-y)\ .
\]
Now it is easy to compute the Hilbert-Schmidt norm of $A$. We have
\[
\begin{split}\|A\|_{2}^{2} & =2\int\int|K_{A}(x,y)|^{2}d^{2}xd^{2}y=2\int\int|(\mathcal{F}^{-1}h_{x})(x-y)|^{2}d^{2}xd^{2}y\\
 & =2\int\int|(\mathcal{F}^{-1}h_{x})(y)|^{2}d^{2}xd^{2}y\ .
\end{split}
\]
The factor $2$ comes from the dimension of the spinor bundle, since
$\mathcal{H}=\mathcal{H}_{r}\otimes\mathbb{C}^{2}$. Now, using the
fact that the Fourier transform is a unitary operator in $L^{2}(\mathbb{R}^{2})$
(up to the factor $(2\pi)^{2}$, in our conventions), we can rewrite
the previous expression as
\[
\|A\|_{2}^{2}=2\int\int|h_{x}(p)|^{2}d^{2}x\frac{d^{2}p}{(2\pi)^{2}}=2\int\int|(Uf)(x_{0},e^{\lambda p_{0}}x_{1})G_{s}(p_{0},e^{-\lambda p_{0}}p_{1})|^{2}d^{2}x\frac{d^{2}p}{(2\pi)^{2}}\ .
\]
After the change of variables $x_{1}\to e^{-\lambda p_{0}}x_{1}$,
$p_{1}\to e^{\lambda p_{0}}p_{1}$ we find
\begin{equation}
\|A\|_{2}^{2}=2\int|(Uf)(x)|^{2}d^{2}x\int|g_{s}(p)|^{2}\frac{d^{2}p}{(2\pi)^{2}}=\frac{2}{(2\pi)^{2}}\|Uf\|_{2}^{2}\|G_{s}\|_{2}^{2}\ .\label{eq:hs-norm}
\end{equation}
Now consider the norm $\|G_{s}\|_{2}$, which is given by the expression
\[
\|G_{s}\|_{2}^{2}=\int\left(\lambda^{-2}(1-e^{-\lambda p_{0}})^{2}+p_{1}^{2}+1\right)^{-s/2}d^{2}p\ .
\]
Notice that the integrand does not go to zero for $p_{0}\to\infty$,
so we find that $\|G_{s}\|_{2}=\infty$ for any $s>0$. Therefore
the operator $\pi(f)(1+\mathcal{D}^{2})^{-s/4}$ is not Hilbert-Schmidt
for any $s>0$.
\end{proof}
Now we argue that using the framework of twisted spectral triples
is not enough to describe the non-commutative geometry of $\kappa$-Minkowski
space, and that some more refined notion is needed to capture the
modular properties associated to this geometry. First we recall a
result obtained in \cite{type III}: consider a twisted spectral triple
($\mathcal{A},\mathcal{H},\mathcal{D})$ with twist $\sigma$ and
such that $\mathcal{D}^{-1}\in\mathcal{L}^{n+}$. Define the linear
functional $\varphi(a)=\text{Tr}_{\omega}(\pi(a)\mathcal{D}^{-n})$,
where $\text{Tr}_{\omega}$ is the Dixmier trace. Then for any $a,b\in\mathcal{A}$
we have $\varphi(ab)=\varphi(\sigma^{n}(b)a)$. Putting aside the
issues of the non-compact case, which are not needed for this heuristic
argument, having a spectral dimension $n=2$ in our case would imply
$\varphi(f\star g)=\varphi(\sigma^{2}(g)\star f).$ The KMS condition
for $\omega$, on the other hand, can be rewritten in terms of the
twist $\sigma$ and reads $\omega(f\star g)=\omega(\sigma(g)\star f)$.
Since we expect the Dixmier trace to be linked to the weight $\omega$,
we see that there is a tension between the two notions of integration,
which are due to their different modular properties.

\subsection{Modular spectral triples}

Similar problems appear quite generically for algebras involving KMS
states. To deal with them the framework of \textit{modular spectral
triples} was introduced, which has emerged from a series of papers
\cite{modular1,modular2,modular3,modular4}, see also the review \cite{modular-review}.
A modification of this idea involving twisted commutators has been
also considered in \cite{suq2-kaad,kaad}. The main point of a modular
version of a spectral triple is to enable the use of a weight, instead
of a trace, to measure the growth of the resolvent of the operator
$\mathcal{D}$. The definition of a modular spectral triple is usually
given using the language of semi-finite spectral triples, a generalization
of spectral triples introduced to deal with the case of a semi-finite
von Neumann algebra $\mathcal{N}$. We do not need this extra technology,
as we are going to be concerned with the case $\mathcal{N}=\mathcal{B}(\mathcal{H})$.

However, as we remarked in the introduction, our model does not fit
the definition of modular spectral triple, as given for example in
\cite{modular4}. Indeed this notion has been formalized on the base
of examples where the modular group associated to the KMS state is
periodic and, in this situation, it makes sense to restrict some conditions
to the fixed point algebra under this action. This possibility is
not available in our case: as we have seen in the previous sections,
the weight $\omega$ is a KMS weight with respect to the action $(\sigma_{t}^{\omega}f)(x)=f(x_{0}-\lambda t,x_{1})$,
that is translation in the first variable. The fixed points for this
action are functions that are constant in the first variable and,
since the functions in $\mathcal{A}$ vanish at infinity, the only
fixed point is given by the zero function.

This fact implies that, strictly speaking, our construction does not
fit into this framework. Here we are not going to be concerned whether
it might be extended to cover this case. More humbly, we are going
to borrow some ingredients from this framework, and show that we obtain
sensible results by applying them to our case. More specifically we
use the notion of spectral dimension involving a weight $\Phi$, which
reduces to the usual one in the case when $\Phi$ is the operator
trace. The definition we use is the following.
\begin{defn}
\label{def:sum-phi}Let $(\mathcal{A},\mathcal{H},\mathcal{D})$ be
a non-compact modular spectral triple with weight $\Phi$. We say
that it is finitely summable and call $p$ the spectral dimension
if the following quantity exists 
\[
p:=\inf\{s>0:\forall a\in\mathcal{A},a\geq0,\ \Phi\left(\pi(a)(\mathcal{D}^{2}+1)^{-s/2}\right)<\infty\}\ .
\]

\end{defn}
We do not introduce the notion of $\mathcal{Z}_{p}$-summability,
since the ideals $\mathcal{Z}_{p}$ are defined for traces and not
for weights \cite{integration}.

We now argue why such a notion is useful in our case. Denote by $\varphi$
the non-commutative integral defined in terms of the Dixmier trace.
Since we have shown that the spectral triple is not finitely summable,
it follows that the non-commutative integral does not exist. However,
as we remarked at the end of the previous subsection, if the spectral
dimension were equal to two it would follow, from the general result
for twisted spectral triples, that $\varphi(f\star g)=\varphi(\sigma^{2}(g)\star f)$.
On the other hand, the KMS condition for the weight $\omega$ can
be written in terms of the twist $\sigma$ and reads $\omega(f\star g)=\omega(\sigma(g)\star f)$.
Since the non-commutative integral should recover the notion of integration
on the algebra, given by the weight $\omega$, we see that we have
a mismatch of modular properties. We point out that the twist $\sigma$
is implemented by $\Delta_{\omega}$, the modular operator of $\omega$,
and this fact hints to the possibility of correcting the non-commutative
integral by inserting $\Delta_{\omega}$ appropriately in the definition.
To this end we consider the weight $\Phi$ defined by $\Phi(\cdot):=\mathrm{Tr}(\Delta_{\omega}\cdot)$,
and use it to compute the spectral dimension.

The main result of this section is that the spectral dimension, computed
using the weight $\Phi$, exists and is equal to the classical dimension
two. Moreover the residue at $s=2$ of the function $\Phi\left(\pi(f)(\mathcal{D}^{2}+\mu^{2})^{-s/2}\right)$,
for $f\in\mathcal{A}$ and $\mu\neq0$, exists and gives $\omega(f)$
up to a constant, which shows that we recover the notion of integration
given by $\omega$ using the operator $\mathcal{D}$. This gives an
analogue of the $\mathcal{Z}_{2}$-summability condition, and is in
line with similar results obtained for the modular spectral triples
studied so far.

Before starting the computation we note the following easy but useful
lemma.
\begin{lem}
\label{lem:form-weight}For all $f\in\mathcal{A}$ we have $\Phi\left(\pi(f)(\mathcal{D}^{2}+1)^{-s/2}\right)=\mathrm{Tr}\left(\pi(\sigma_{-i}^{\omega}f)\Delta_{\omega}(\mathcal{D}^{2}+1)^{-s/2}\right)$.\end{lem}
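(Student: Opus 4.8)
The plan is to unfold the definition of the weight $\Phi$ and reduce the whole statement to a single commutation relation between the modular operator and the representation. By construction $\Phi(\cdot)=\mathrm{Tr}(\Delta_{\omega}\,\cdot\,)$, so the left-hand side is simply $\mathrm{Tr}\!\left(\Delta_{\omega}\pi(f)(\mathcal{D}^{2}+1)^{-s/2}\right)$. Comparing with the right-hand side, the only thing to establish is the operator identity $\Delta_{\omega}\pi(f)=\pi(\sigma_{-i}^{\omega}f)\Delta_{\omega}$; multiplying it on the right by $(\mathcal{D}^{2}+1)^{-s/2}$ and taking the trace then yields the claim directly, with no appeal to cyclicity.

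To obtain this identity I would start from the fact, established in the previous section, that the modular group of $\omega$ is implemented by $\Delta_{\omega}^{it}$, namely $\pi(\sigma_{t}^{\omega}(f))=\Delta_{\omega}^{it}\pi(f)\Delta_{\omega}^{-it}$ for all $t\in\mathbb{R}$. Setting $t=-i$ formally gives $\pi(\sigma_{-i}^{\omega}f)=\Delta_{\omega}\pi(f)\Delta_{\omega}^{-1}$, which rearranges to the required relation. The analytic continuation to $t=-i$ is legitimate here precisely because every $f\in\mathcal{A}$ is analytic in the first variable: the map $t\mapsto\sigma_{t}^{\omega}f$, acting as $f(x_{0}-\lambda t,x_{1})$, extends holomorphically to complex arguments, and at $t=-i$ one has $(\sigma_{-i}^{\omega}f)(x)=f(x_{0}+i\lambda,x_{1})=(\mathcal{E}\triangleright f)(x)$, which again lies in $\mathcal{A}$. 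In particular $\pi(\sigma_{-i}^{\omega}f)$ is a bounded operator, so the right-hand side is well defined.

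The step that deserves care, and which I expect to be the main obstacle, is the unboundedness of $\Delta_{\omega}=e^{-\lambda\hat{P}_{0}}$: the manipulation $\Delta_{\omega}\pi(f)=\pi(\sigma_{-i}^{\omega}f)\Delta_{\omega}$ is an identity of unbounded operators and must be checked on a common core, for instance on $\mathcal{A}$, where both sides are given by explicit translations in the first variable and can be matched by a direct computation. It is also convenient to record that $\Delta_{\omega}$ commutes with $(\mathcal{D}^{2}+1)^{-s/2}$: since $\mathcal{D}^{2}=\Delta_{\omega}\mathcal{C}$ with $\mathcal{C}$ a function of the commuting operators $\hat{P}_{0},\hat{P}_{1}$, and $\Delta_{\omega}=e^{-\lambda\hat{P}_{0}}$ is itself such a function, the two commute through the functional calculus. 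This guarantees that $\Delta_{\omega}(\mathcal{D}^{2}+1)^{-s/2}$ is an unambiguous positive operator and that both traces are simultaneously finite or infinite, so the equality holds as an identity in $[0,+\infty]$ (or of complex numbers once the operator is trace class). With these points settled the lemma follows in one line, consistent with its billing as an easy but useful observation.
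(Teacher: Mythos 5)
Your proposal is correct and follows essentially the same route as the paper: unfold $\Phi(\cdot)=\mathrm{Tr}(\Delta_{\omega}\,\cdot\,)$, insert $\Delta_{\omega}^{-1}\Delta_{\omega}$, and use the implementation $\pi(\sigma_{t}^{\omega}f)=\Delta_{\omega}^{it}\pi(f)\Delta_{\omega}^{-it}$ analytically continued to $t=-i$. Your additional remarks on justifying the continuation via analyticity of $f$ in the first variable, on checking the unbounded-operator identity on the core $\mathcal{A}$, and on the commutation of $\Delta_{\omega}$ with $(\mathcal{D}^{2}+1)^{-s/2}$ are sound refinements of details the paper leaves implicit.
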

\begin{proof}
It is easily proven by the following computation
\[
\begin{split}\Phi\left(\pi(f)(\mathcal{D}^{2}+1)^{-s/2}\right) & =\mathrm{Tr}\left(\Delta_{\omega}\pi(f)\Delta_{\omega}^{-1}\Delta_{\omega}(\mathcal{D}^{2}+1)^{-s/2}\right)\\
 & =\mathrm{Tr}\left(\pi(\sigma_{-i}^{\omega}f)\Delta_{\omega}(\mathcal{D}^{2}+1)^{-s/2}\right)\ .
\end{split}
\]
In the last line we have used the fact that $\sigma^{\omega}$ is
implemented by the modular operator $\Delta_{\omega}$, that is $\pi(\sigma_{t}^{\omega}f)=\Delta_{\omega}^{it}\pi(f)\Delta_{\omega}^{-it}$
for any $f\in\mathcal{A}$.
\end{proof}
The next subsection is devoted to proving the results announced above.

\subsection{The spectral dimension with the weight $\Phi$}

We can restrict our attention to the operator $\pi(f)\Delta_{\omega}(\mathcal{D}^{2}+1)^{-s/2}$
on $\mathcal{H}$ and, via the unitary operator $U$, to the operator
$U\pi(f)\Delta_{\omega}(\mathcal{D}^{2}+1)^{-s/2}U^{-1}$ on $L^{2}(\mathbb{R}^{2})\otimes\mathbb{C}^{2}$.
Now we want to see if this operator is trace class for some $s>0$
and compute its trace. To prove that an operator $A$ on $L^{2}(\mathbb{R}^{n})$
is trace class, a possible strategy is to show that it is a pseudo-differential
operator of sufficiently negative order (see for example \cite[Chapter IV]{shubin}).
We say that $A$ is a pseudo-differential operator of order $m$ if
its symbol $a$ satisfies the condition $|\partial_{x}^{\alpha}\partial_{\xi}^{\beta}a(x,\xi)|\leq c_{\alpha\beta}(1+|\xi|)^{m-|\alpha|}$,
where $c_{\alpha\beta}$ are constants and we use the multi-index
notation. However this class of symbols is not well adapted to the
present situation, as we now argue.

Using Proposition \ref{prop:kernel} we know that the symbol of $U\pi(f)\Delta_{\omega}(\mathcal{D}^{2}+1)^{-s/2}U^{-1}$
is of the form $(Uf)(x_{0},e^{\lambda\xi_{0}}x_{1})g(\xi)$, for some
function $g$. Now consider the derivative with respect to $x_{1}$,
which is given by $e^{\lambda\xi_{0}}(\partial_{1}Uf)(x_{0},e^{\lambda\xi_{0}}x_{1})g(\xi)$.
By examining the behaviour at $\xi_{0}\to\infty$ we see that we can
not bound this function uniformly in $x_{1}$. Consider first the
case $x_{1}\neq0$: by defining $y_{1}=e^{\lambda\xi_{0}}x_{1}$,
we have $x_{1}^{-1}y_{1}(\partial_{1}Uf)(x_{0},y_{1})$ and this goes
to zero for $y_{1}\to\pm\infty$, since $Uf$ is a Schwartz function.
For $x_{1}=0$, however, we get $e^{\lambda\xi_{0}}(\partial_{1}Uf)(x_{0},0)$,
which grows exponentially in $\xi_{0}$. This implies that at $x_{1}=0$
we can not satisfy the condition for a pseudo-differential operator
of negative order, and so we can not use the related results.

Since we have this problem only at $x_{1}=0$, which is a set of measure
zero in $\mathbb{R}^{2}$, we can expect to be able to overcome this
problem by using a criterion which involves an $L^{1}$ condition
on the symbol. To this end we are going to use the following theorem
given in \cite{arsu}. Let $A$ be an operator in $L^{2}(\mathbb{R}^{n})$
with symbol $a(x,\xi)$. If the symbol satisfies the condition $\partial_{x}^{\alpha}\partial_{\xi}^{\beta}a\in L^{p}(\mathbb{R}^{n}\times\mathbb{R}^{n})$
for $|\alpha|,|\beta|\leq[n/2]+1$, where $1\leq p<\infty$, then
$A$ belongs to the $p$-th Schatten ideal in $L^{2}(\mathbb{R}^{n})$.
Here we are using the multi-index notation for $\alpha,\beta$ and
$[n]$ denotes the integer part of $n$. Using this result we can
prove the following.
\begin{thm}
Let $f\in\mathcal{A}$ and $\mu\neq0$. Then the operator $\pi(f)\Delta_{\omega}(\mathcal{D}^{2}+\mu^{2})^{-s/2}$
is trace class for $s>2$. In particular we have spectral dimension
$p=2$ according to Definition \ref{def:sum-phi}.\end{thm}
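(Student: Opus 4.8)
The plan is to move everything to $L^{2}(\mathbb{R}^{2})\otimes\mathbb{C}^{2}$ through the unitary $U$ and reduce the trace-class question to an estimate on a symbol. Since $\rho(D_{0})=\tfrac1\lambda(1-e^{-\lambda\hat P_{0}})$ and $\rho(D_{1})=\hat P_{1}$ commute, the cross terms $\Gamma^{0}\Gamma^{1}+\Gamma^{1}\Gamma^{0}$ cancel and $\mathcal{D}^{2}=\big(\rho(D_{0})^{2}+\rho(D_{1})^{2}\big)\otimes\mathbb{1}_{2}=\Delta_{\omega}\mathcal{C}\otimes\mathbb{1}_{2}$ acts as a scalar on the spinor factor. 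Hence both $\Delta_{\omega}=e^{-\lambda\hat P_{0}}$ and $(\mathcal{D}^{2}+\mu^{2})^{-s/2}$ are functions of the commuting pair $\hat P_{0},\hat P_{1}$, and their product is a single $g(\hat P)$ with
\[
g(\hat P_{0},\hat P_{1})=e^{-\lambda\hat P_{0}}\Big(\tfrac1{\lambda^{2}}(1-e^{-\lambda\hat P_{0}})^{2}+\hat P_{1}^{2}+\mu^{2}\Big)^{-s/2}.
\]
Feeding this into Proposition \ref{prop:kernel}, I would read off the (Kohn--Nirenberg) symbol of $U\pi(f)\Delta_{\omega}(\mathcal{D}^{2}+\mu^{2})^{-s/2}U^{-1}$ as $a(x,p)=(Uf)(x_{0},e^{\lambda p_{0}}x_{1})\,\tilde g(p)$, where $Uf\in\mathcal{S}(\mathbb{R}^{2})$ and
\[
\tilde g(p)=e^{-\lambda p_{0}}\Big(\tfrac1{\lambda^{2}}(1-e^{-\lambda p_{0}})^{2}+e^{-2\lambda p_{0}}p_{1}^{2}+\mu^{2}\Big)^{-s/2}.
\]

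I would then invoke the criterion of \cite{arsu}: for trace class it suffices that $\partial_{x}^{\alpha}\partial_{p}^{\beta}a\in L^{1}(\mathbb{R}^{2}\times\mathbb{R}^{2})$ for $|\alpha|,|\beta|\le 2$. Every such integral I would attack with the two substitutions $w=e^{\lambda p_{0}}x_{1}$ and $v=e^{-\lambda p_{0}}p_{1}$, which are jointly area-preserving in the $(x_{1},p_{1})$ plane: they turn $(Uf)(x_{0},e^{\lambda p_{0}}x_{1})$ into the Schwartz function $(Uf)(x_{0},w)$ and turn the bracket into $A(p_{0})+v^{2}$ with $A(p_{0})=\tfrac1{\lambda^{2}}(1-e^{-\lambda p_{0}})^{2}+\mu^{2}\ge\mu^{2}$. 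Integrating out $x$ and then $v$ (using $\int(A+v^{2})^{-s/2}dv=c_{s}A^{(1-s)/2}$ for $s>1$), the undifferentiated symbol collapses to the single integral $\int e^{-\lambda p_{0}}A(p_{0})^{(1-s)/2}\,dp_{0}$. This converges at $p_{0}\to+\infty$ precisely because $\Delta_{\omega}$ supplies the decaying factor $e^{-\lambda p_{0}}$, and at $p_{0}\to-\infty$, where $A^{(1-s)/2}\sim e^{(s-1)\lambda p_{0}}$, exactly when $s>2$. This is the asserted threshold, and it shows transparently that the insertion of the modular operator $\Delta_{\omega}$ is what repairs the divergence behind Proposition \ref{prop:not-compact}.

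The main obstacle is the behaviour of the $x_{1}$-derivatives. Each $\partial_{x_{1}}$ landing on $(Uf)(x_{0},e^{\lambda p_{0}}x_{1})$ produces a factor $e^{\lambda p_{0}}$, which \emph{grows} as $p_{0}\to+\infty$ and directly eats into the decay supplied by $\Delta_{\omega}$; this is the analytic incarnation of the singularity at $x_{1}=0$ already flagged before the theorem, and it is exactly why one cannot work with the symbol class of negative-order pseudo-differential operators and must pass to the $L^{1}$ criterion. Controlling these terms --- confirming that, after the area-preserving substitution, the residual $p_{0}$-integral carrying the extra $e^{\lambda p_{0}}$ factors still returns finite $L^{1}$ norms --- is the delicate heart of the argument and the place where I would spend the most care, since the compensation between the $e^{\lambda p_{0}}$ from differentiation and the $e^{-\lambda p_{0}}$ from the Jacobian of $w=e^{\lambda p_{0}}x_{1}$ is what must be kept track of.

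Finally I would extract the spectral dimension. Once $\pi(f)\Delta_{\omega}(\mathcal{D}^{2}+\mu^{2})^{-s/2}$ is trace class for $s>2$ and the $p_{0}\to-\infty$ obstruction shows it fails for $s\le2$, Definition \ref{def:sum-phi} yields $p=2$. For the positive elements $a=g^{*}\star g$ that actually enter that definition the threshold can be confirmed by a cleaner route that sidesteps the derivative estimates: writing $\pi(a)=\pi(g)^{\dagger}\pi(g)$ and using that $\Delta_{\omega}$ and the resolvent commute, cyclicity of the trace gives
\[
\Phi\big(\pi(a)(\mathcal{D}^{2}+\mu^{2})^{-s/2}\big)=\big\|\pi(g)\Delta_{\omega}^{1/2}(\mathcal{D}^{2}+\mu^{2})^{-s/4}\big\|_{2}^{2},
\]
a Hilbert--Schmidt norm computed exactly as in Proposition \ref{prop:not-compact} but now carrying the additional factor $e^{-\lambda p_{0}}$, hence finite if and only if $s>2$; taking the residue at $s=2$ then recovers $\omega(f)$ up to a constant.
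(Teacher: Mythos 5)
Your overall route coincides with the paper's: conjugate by $U$, read off the symbol from Proposition \ref{prop:kernel}, apply the $L^{1}$ Schatten criterion of \cite{arsu}, and extract the threshold $s>2$ from the jointly area-preserving substitution $w=e^{\lambda p_{0}}x_{1}$, $v=e^{-\lambda p_{0}}p_{1}$; your treatment of the undifferentiated symbol, including the two asymptotic regimes $p_{0}\to\pm\infty$, matches the paper exactly. The genuine gap is the step you explicitly defer: the $L^{1}$ bounds for the differentiated symbol. This is not merely delicate --- the direct estimate you sketch fails. Each $\partial_{x_{1}}$ produces a factor $e^{\lambda p_{0}}$, and since your two substitutions are \emph{jointly} area-preserving their Jacobians cancel each other, so no compensating $e^{-\lambda p_{0}}$ survives: for $n$ derivatives in $x_{1}$ one is left with $\|\partial_{1}^{n}Uf\|_{1}\,c_{s}\int e^{(n-1)\lambda p_{0}}A(p_{0})^{(1-s)/2}dp_{0}$, whose integrand already for $n=1$ tends to the nonzero constant $(\lambda^{-2}+\mu^{2})^{(1-s)/2}$ as $p_{0}\to+\infty$, hence diverges for every $s$. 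The same growth afflicts the $\xi_{0}$-derivatives landing on $(Uf)(x_{0},e^{\lambda\xi_{0}}x_{1})$. The missing idea, which is the actual content of the paper's proof at this point, is the identity
\[
\partial_{\xi_{0}}(Uf)(x_{0},e^{\lambda\xi_{0}}x_{1})=\lambda x_{1}\partial_{x_{1}}(Uf)(x_{0},e^{\lambda\xi_{0}}x_{1})=\lambda\partial_{x_{1}}\bigl(x_{1}(Uf)(x_{0},e^{\lambda\xi_{0}}x_{1})\bigr)-\lambda(Uf)(x_{0},e^{\lambda\xi_{0}}x_{1})\ ,
\]
which converts the dangerous $e^{\lambda\xi_{0}}$-carrying terms into total $x_{1}$-derivatives, whose $x_{1}$-integrals vanish because $Uf$ is Schwartz, plus lower-order terms that reduce to the already-treated case; the pure $x_{1}$-derivatives of the symbol are disposed of by the same vanishing-upon-integration argument. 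Without some such device the criterion of \cite{arsu} cannot be applied to this symbol.

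Two smaller remarks. Your closing identity $\Phi\bigl(\pi(g^{*}\star g)(\mathcal{D}^{2}+\mu^{2})^{-s/2}\bigr)=\bigl\|\pi(g)\Delta_{\omega}^{1/2}(\mathcal{D}^{2}+\mu^{2})^{-s/4}\bigr\|_{2}^{2}$ is correct (modulo the usual care with cyclicity for weights) and is a genuinely cleaner, derivative-free way to pin down the spectral dimension of Definition \ref{def:sum-phi}, since that definition only involves positive elements and the resulting Hilbert--Schmidt norm reduces to the very integral $\|G_{s}^{\Delta}\|_{1}$ computed above; the paper does not use this shortcut. But it does not prove the first assertion of the theorem --- that $\pi(f)\Delta_{\omega}(\mathcal{D}^{2}+\mu^{2})^{-s/2}$ is trace class for \emph{arbitrary} $f\in\mathcal{A}$ --- so the symbol estimate cannot be dispensed with.
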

\begin{proof}
Using Proposition \ref{prop:kernel} we have that the symbol of the
operator $U\pi(f)\Delta_{\omega}(\mathcal{D}^{2}+\mu^{2})^{-s/2}U^{-1}$
is given by $a(x,\xi):=(Uf)(x_{0},e^{\lambda\xi_{0}}x_{1})G_{s}^{\Delta}(\xi_{0},e^{-\lambda\xi_{0}}\xi_{1})$,
where
\[
G_{s}^{\Delta}(\xi):=e^{-\lambda\xi_{0}}\left(\lambda^{-2}(1-e^{-\lambda\xi_{0}})^{2}+\xi_{1}^{2}+\mu^{2}\right)^{-s/2}\ .
\]
As we remarked above, to prove that this operator is trace class we
are going to show that the symbol satisfies the condition $\partial_{x}^{\alpha}\partial_{\xi}^{\beta}a\in L^{1}(\mathbb{R}^{2}\times\mathbb{R}^{2})$,
for $|\alpha|,|\beta|\leq2$. Let us start by showing that $a$ is
integrable. Using a change of variables we get
\[
\begin{split}\int\int|a(x,\xi)|d^{2}xd^{2}\xi & =\int\int|(Uf)(x_{0},e^{\lambda\xi_{0}}x_{1})|G_{s}^{\Delta}(\xi_{0},e^{-\lambda\xi_{0}}\xi_{1})d^{2}xd^{2}\xi\\
 & =\int|(Uf)(x)|d^{2}x\int G_{s}^{\Delta}(\xi)d^{2}\xi=\|Uf\|_{1}\|G_{s}^{\Delta}\|_{1}\ .
\end{split}
\]
We have that $\|Uf\|_{1}$ is finite since $Uf$ is a Schwartz function.
To prove that $\|G_{s}^{\Delta}\|_{1}$ is finite we need to consider
the asymptotic behaviour of the function $G_{s}^{\Delta}(\xi)$. This
is given by
\[
\begin{array}{cc}
G_{s}^{\Delta}(\xi)\sim e^{-\lambda\xi_{0}}|\xi_{1}|^{-s}\ , & \xi_{0}\to\infty\ ,\ |\xi_{1}|\to\infty\ ,\\
G_{s}^{\Delta}(\xi)\sim e^{-\lambda\xi_{0}}\left(e^{-2\lambda\xi_{0}}+\xi_{1}^{2}\right)^{-s/2}\ , & \xi_{0}\to-\infty\ ,\ |\xi_{1}|\to\infty\ .
\end{array}
\]
For $\xi_{0}\to\infty$ this function is integrable if $s>1$. To
study the other case we use the integral
\[
\int(c^{2}+x^{2})^{-s/2}dx=\sqrt{\pi}\frac{\Gamma\left(\frac{s-1}{2}\right)}{\Gamma\left(\frac{s}{2}\right)}(c^{2})^{-\frac{s-1}{2}}\ .
\]
Then the function $G_{s}^{\Delta}$ is integrable for $\xi_{0}\to-\infty$
is $s>2$, as we have
\[
\int G_{s}^{\Delta}(\xi)d\xi_{1}\sim e^{-\lambda\xi_{0}}\int\left(e^{-2\lambda\xi_{0}}+\xi_{1}^{2}\right)^{-s/2}d\xi_{1}\sim e^{-\lambda\xi_{0}}e^{(s-1)\lambda\xi_{0}}\ .
\]

Now we consider the derivatives in $x$ and show that the integral
of $\partial_{x}^{\alpha}a$ vanishes, since $Uf$ is a Schwartz function.
In the following we use the notation $\partial_{0}$ and $\partial_{1}$
to denote derivatives with respect to the first and second variable.
For the derivatives in $x_{0}$ we have 
\[
\int\partial_{x_{0}}^{n}(Uf)(x_{0},e^{\lambda\xi_{0}}x_{1})dx_{0}=(\partial_{0}^{n-1}Uf)(\infty,e^{\lambda\xi_{0}}x_{1})-(\partial_{0}^{n-1}Uf)(-\infty,e^{\lambda\xi_{0}}x_{1})=0\ .
\]
Similarly for the derivatives in $x_{1}$ we obtain
\[
\begin{split}\int\partial_{x_{1}}^{n}(Uf)(x_{0},e^{\lambda\xi_{0}}x_{1})dx_{1} & =e^{n\lambda\xi_{0}}\int(\partial_{1}^{n}Uf)(x_{0},e^{\lambda\xi_{0}}x_{1})dx_{1}=e^{(n-1)\lambda\xi_{0}}\int(\partial_{1}^{n}Uf)(x)dx_{1}\\
 & =e^{(n-1)\lambda\xi_{0}}\left((\partial_{1}^{n-1}Uf)(x_{0},\infty)-(\partial_{1}^{n-1}Uf)(x_{0},-\infty)\right)=0\ .
\end{split}
\]

Now we consider the derivatives with respect to $\xi_{0}$. There
are two contributions, one coming from $(Uf)(x_{0},e^{\lambda\xi_{0}}x_{1})$
and the other from $G_{s}^{\Delta}(\xi_{0},e^{-\lambda\xi_{0}}\xi_{1})$.
First we will consider derivatives acting on the term $(Uf)(x_{0},e^{\lambda\xi_{0}}x_{1})$.
Taking one derivative we obtain
\begin{equation}
\begin{split}\partial_{\xi_{0}}(Uf)(x_{0},e^{\lambda\xi_{0}}x_{1}) & =e^{\lambda\xi_{0}}\lambda x_{1}(\partial_{1}Uf)(x_{0},e^{\lambda\xi_{0}}x_{1})\\
 & =\lambda x_{1}\partial_{x_{1}}(Uf)(x_{0},e^{\lambda\xi_{0}}x_{1})\\
 & =\lambda\partial_{x_{1}}\left(x_{1}(Uf)(x_{0},e^{\lambda\xi_{0}}x_{1})\right)-\lambda(Uf)(x_{0},e^{\lambda\xi_{0}}x_{1})\ .
\end{split}
\label{eq:derivative-xi}
\end{equation}
The second term corresponds to the case with zero derivatives, so
we have already proven that it gives a finite contribution. The first
term on the other hand vanishes upon integration in $x_{1}$, since
$Uf$ is a Schwartz function. For the second derivative using (\ref{eq:derivative-xi})
we obtain
\[
\begin{split}\partial_{\xi_{0}}^{2}(Uf)(x_{0},e^{\lambda\xi_{0}}x_{1}) & =\lambda x_{1}\partial_{x_{1}}\partial_{\xi_{0}}(Uf)(x_{0},e^{\lambda\xi_{0}}x_{1})\\
 & =\lambda\partial_{x_{1}}\left(x_{1}\partial_{\xi_{0}}(Uf)(x_{0},e^{\lambda\xi_{0}}x_{1})\right)-\lambda\partial_{\xi_{0}}(Uf)(x_{0},e^{\lambda\xi_{0}}x_{1})
\end{split}
\]
The second term corresponds to the case of one derivative, which we
have proven to be finite. For the first term we just need to notice
that, for fixed $\xi_{0}$, equation (\ref{eq:derivative-xi}) implies
that $\partial_{\xi_{0}}(Uf)(x_{0},e^{\lambda\xi_{0}}x_{1})$ is a
Schwartz function. Then the first term vanishes by the previous argument.
\end{proof}
Since we have established that the operator $\pi(f)\Delta_{\omega}(\mathcal{D}^{2}+\mu^{2})^{-s/2}$
is trace class we can now compute its trace. We are going to show
that the residue at $s=2$ recovers, up to a constant, the weight
$\omega$ on the algebra $\mathcal{A}$.
\begin{prop}
Let $f\in\mathcal{A}$ and $\mu\neq0$. Then we have\textup{
\[
\lim_{s\to2}(s-2)\Phi\left(\pi(f)(\mathcal{D}^{2}+\mu^{2})^{-s/2}\right)=\frac{1}{2\pi}\omega(f)\ .
\]
}\end{prop}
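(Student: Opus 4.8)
The plan is to turn the weight $\Phi$ into an ordinary operator trace, evaluate that trace as the integral of the diagonal of a Schwartz kernel, and then extract a residue. First I would observe that $\Phi(\cdot)=\mathrm{Tr}(\Delta_\omega\,\cdot)$ with $\Delta_\omega=e^{-\lambda\hat P_0}$, and that $\Delta_\omega$ commutes with $(\mathcal{D}^2+\mu^2)^{-s/2}$ since both are functions of the commuting operators $\hat P_0,\hat P_1$. Cyclicity of the trace (legitimate because the preceding theorem guarantees trace-class-ness for $s>2$) then gives
\[
\Phi\left(\pi(f)(\mathcal{D}^2+\mu^2)^{-s/2}\right)=\mathrm{Tr}\left(\pi(f)\Delta_\omega(\mathcal{D}^2+\mu^2)^{-s/2}\right),
\]
which is exactly the operator already shown to be trace class, so we are reduced to computing its trace.

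Next I would transport everything to $L^2(\mathbb{R}^2)\otimes\mathbb{C}^2$ via $U$ and compute the trace as $2\int K(x,x)\,d^2x$, the factor $2$ coming from the trivial spinor bundle (note $\mathcal{D}^2+\mu^2$ acts as a scalar on $\mathbb{C}^2$). Since $\Delta_\omega(\mathcal{D}^2+\mu^2)^{-s/2}=g(\hat P)$ with $g=G_s^\Delta$ the function appearing in the previous theorem, Proposition \ref{prop:kernel} yields on the diagonal
\[
K(x,x)=\int (Uf)(x_0,e^{\lambda p_0}x_1)\,G_s^\Delta(p_0,e^{-\lambda p_0}p_1)\,\frac{d^2p}{(2\pi)^2}.
\]
The key move is the change of variables $x_1\mapsto e^{-\lambda p_0}x_1$, $p_1\mapsto e^{\lambda p_0}p_1$, whose Jacobians cancel; it separates the integral into a product $\left(\int(Uf)(x)\,d^2x\right)\left(\int G_s^\Delta(p)\,d^2p\right)$. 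Integrating $\int(Uf)\,d^2x$ over $x_0$ produces $2\pi\delta(p_0)$, collapsing it to $\int(\mathcal{F}_0 f)(0,x_1)\,dx_1=\int f\,d^2x=\omega(f)$. Hence
\[
\Phi\left(\pi(f)(\mathcal{D}^2+\mu^2)^{-s/2}\right)=\frac{2}{(2\pi)^2}\,\omega(f)\int_{\mathbb{R}^2}G_s^\Delta(p)\,d^2p.
\]

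It then remains to find the residue of $\int G_s^\Delta$ at $s=2$. I would integrate $p_1$ first with the stated formula $\int(c^2+x^2)^{-s/2}dx=\sqrt\pi\,\Gamma(\tfrac{s-1}{2})/\Gamma(\tfrac s2)\,(c^2)^{-(s-1)/2}$, then substitute $u=e^{-\lambda p_0}$ to reduce the remaining integral to $\tfrac1\lambda\int_0^\infty(\lambda^{-2}(1-u)^2+\mu^2)^{-(s-1)/2}du$. The only divergence as $s\downarrow 2$ comes from the tail $u\to\infty$, where the integrand behaves like $\tfrac1\lambda\lambda^{s-1}u^{-(s-1)}$, giving a simple pole with $\lim_{s\to2}(s-2)\cdot(\text{this})=1$; together with $\sqrt\pi\cdot\Gamma(\tfrac12)/\Gamma(1)=\pi$ this gives $\lim_{s\to2}(s-2)\int G_s^\Delta\,d^2p=\pi$. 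Assembling constants yields $\tfrac{2}{(2\pi)^2}\cdot\pi\cdot\omega(f)=\tfrac{1}{2\pi}\omega(f)$, as claimed.

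The main obstacle I expect is the clean isolation of this pole: one must check that the subleading contributions to the $u$-integral remain finite at $s=2$ and that the limit interchanges with the remaining integrations, so that only the large-$u$ tail feeds the residue. A secondary technical point is justifying the passage from the trace to the integral of the diagonal kernel, which relies on the trace-class property established earlier (guaranteeing the Mercer-type diagonal formula is valid and that the formal $\int e^{ip\cdot 0}$ step is legitimate), together with careful bookkeeping of the $2\pi$ factors and the spinor multiplicity.
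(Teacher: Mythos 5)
Your proposal is correct and follows essentially the same route as the paper: reduce to the trace of $\pi(\cdot)\Delta_{\omega}(\mathcal{D}^{2}+\mu^{2})^{-s/2}$, integrate the diagonal of the Schwartz kernel from Proposition \ref{prop:kernel}, separate variables via $x_{1}\mapsto e^{-\lambda p_{0}}x_{1}$, $p_{1}\mapsto e^{\lambda p_{0}}p_{1}$ to obtain $\omega(f)\cdot\int G_{s}^{\Delta}$, and extract the simple pole at $s=2$. The only deviations are immaterial: where you invoke cyclicity of the trace with the unbounded $\Delta_{\omega}$, the paper instead conjugates to get $\pi(\sigma_{-i}^{\omega}f)$ (a bounded algebra element) and uses $\omega(\sigma_{-i}^{\omega}f)=\omega(f)$ by analyticity, which is the cleaner justification; and where you read the residue off the large-$u$ tail, the paper evaluates the $\xi_{0}$-integral in closed form via a hypergeometric function and locates the pole in $\Gamma(\tfrac{s}{2}-1)$ --- both give $\lim_{s\to2}(s-2)\int G_{s}^{\Delta}\,d^{2}p=\pi$ and hence the constant $\tfrac{1}{2\pi}$.
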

\begin{proof}
Using Lemma \ref{lem:form-weight} it suffices to prove the analogue
result with $\pi(f)\Delta_{\omega}(1+\mathcal{D}^{2})^{-s/2}$. In
the previous theorem we have shown that the operator $A:=U\pi(f)\Delta_{\omega}(\mathcal{D}^{2}+\mu^{2})^{-s/2}U^{-1}$,
with $\mu\neq0$, is trace class for $s>2$. We can compute its trace
by integrating the kernel, that is
\[
\text{Tr}\left(\pi(f)\Delta_{\omega}(\mathcal{D}^{2}+\mu^{2})^{-s/2}\right)=2\int\int(Uf)(x_{0},e^{\lambda\xi_{0}}x_{1})G_{s}^{\Delta}(\xi_{0},e^{-\lambda\xi_{0}}\xi_{1})d^{2}x\frac{d^{2}\xi}{(2\pi)^{2}}\ .
\]
Here the factor $2$ comes from the dimension of the spinor bundle,
since $\mathcal{H}=\mathcal{H}_{r}\otimes\mathbb{C}^{2}$. As shown
in the previous theorem this is actually equal to
\[
\text{Tr}\left(\pi(f)\Delta_{\omega}(\mathcal{D}^{2}+\mu^{2})^{-s/2}\right)=2\int(Uf)(x)d^{2}x\int G_{s}^{\Delta}(\xi)\frac{d^{2}\xi}{(2\pi)^{2}}\ .
\]

Now we need to compute the integral in $\xi$, which is given by
\[
c(s):=\int G_{s}^{\Delta}(\xi)\frac{d^{2}\xi}{(2\pi)^{2}}=\int e^{-\lambda\xi_{0}}\left(\lambda^{-2}\left(1-e^{-\lambda\xi_{0}}\right)^{2}+\xi_{1}^{2}+\mu^{2}\right)^{-s/2}\frac{d^{2}\xi}{(2\pi)^{2}}\ .
\]
The integral over $\xi_{1}$ can be easily computed using the standard
formula 
\[
\int(x^{2}+a^{2})^{-s/2}dx=\sqrt{\pi}\frac{\Gamma\left(\frac{s-1}{2}\right)}{\Gamma\left(\frac{s}{2}\right)}(a^{2})^{-\frac{s-1}{2}}\ ,\qquad s>1\ .
\]
Using this result we have
\[
c(s)=\frac{\sqrt{\pi}}{(2\pi)^{2}}\frac{\Gamma\left(\frac{s-1}{2}\right)}{\Gamma\left(\frac{s}{2}\right)}\int e^{-\lambda\xi_{0}}\left(\lambda^{-2}\left(1-e^{-\lambda\xi_{0}}\right)^{2}+\mu^{2}\right)^{-\frac{s-1}{2}}d\xi_{0}\ .
\]
Now, using the change of variable $r=e^{-\lambda\xi_{0}}$, we rewrite
the integral in $\xi_{0}$ as
\[
\int e^{-\lambda\xi_{0}}\left(\lambda^{-2}\left(1-e^{-\lambda\xi_{0}}\right)^{2}+\mu^{2}\right)^{-\frac{s-1}{2}}d\xi_{0}=\lambda^{s-2}\int\left((1-r)^{2}+\lambda^{2}\mu^{2}\right)^{-\frac{s-1}{2}}dr\ .
\]
This integral can be solved analytically for $s>2$
\[
\int_{0}^{\infty}\left((1-r)^{2}+a^{2}\right)^{-\frac{s-1}{2}}dr=a^{1-s}\left(a\frac{\sqrt{\pi}}{2}\frac{\Gamma\left(\frac{s}{2}-1\right)}{\Gamma\left(\frac{s-1}{2}\right)}+{}_{2}F_{1}\left(\frac{1}{2},\frac{s-1}{2},\frac{3}{2},-\frac{1}{a^{2}}\right)\right)\ ,
\]
where $_{2}F_{1}(a,b,c,z)$ is the ordinary hypergeometric function.
The result is then
\[
c(s)=\frac{\sqrt{\pi}}{(2\pi)^{2}}\frac{\Gamma\left(\frac{s-1}{2}\right)}{\Gamma\left(\frac{s}{2}\right)}\lambda^{s-2}(\lambda\mu)^{1-s}\left(\lambda\mu\frac{\sqrt{\pi}}{2}\frac{\Gamma\left(\frac{s}{2}-1\right)}{\Gamma\left(\frac{s-1}{2}\right)}+{}_{2}F_{1}\left(\frac{1}{2},\frac{s-1}{2},\frac{3}{2},-\frac{1}{\lambda^{2}\mu^{2}}\right)\right)\ .
\]

Now we can easily show the analogue of $\mathcal{Z}_{2}$-summability,
that is
\[
\limsup_{s\downarrow2}\left|(s-2)\text{Tr}\left(\pi(f)\Delta_{\omega}(\mathcal{D}^{2}+\mu^{2})^{-s/2}\right)\right|<\infty\ .
\]
We need to study the behaviour of the function $c(s)$ around $s=2$.
Notice that the second term, the one involving the hypergeometric
function, is regular at $s=2$, while the first term has a simple
pole at $s=2$, which comes from the function $\Gamma\left(\frac{s}{2}-1\right)$.
Indeed the Laurent expansion of this function at $s=2$ is given by
\[
\Gamma\left(\frac{s}{2}-1\right)=\frac{2}{s-2}-\gamma+O(s-2)\ ,
\]
where $\gamma$ is the Euler-Mascheroni constant. Using this fact
we have
\[
\begin{split}\limsup_{s\downarrow2}(s-2)c(s) & =\limsup_{s\downarrow2}(s-2)\frac{\sqrt{\pi}}{(2\pi)^{2}}\frac{1}{\Gamma\left(\frac{s}{2}\right)}\lambda^{s-2}(\lambda\mu)^{1-s}\lambda\mu\frac{\sqrt{\pi}}{2}\Gamma\left(\frac{s}{2}-1\right)\\
 & =\frac{\sqrt{\pi}}{(2\pi)^{2}}(\lambda\mu)^{-1}\lambda\mu\frac{\sqrt{\pi}}{2}2=\frac{1}{4\pi}\ .
\end{split}
\]
Notice that the result of this limit does not depend on $\mu$, as
expected. Then we have 
\[
\lim_{s\to2}(s-2)\Phi\left(\pi(f)\Delta_{\omega}(1+\mathcal{D}^{2})^{-s/2}\right)=\frac{1}{2\pi}\int\sigma_{-i}^{\omega}(f)(x)d^{2}x\ .
\]
Using the properties of the functions in $\mathcal{A}$ and the Cauchy
theorem we have that
\[
\int f(x_{0}+z,x_{1})d^{2}x=\int f(x)d^{2}x\ ,
\]
for any $z\in\mathbb{C}$. Then the result follows, since $\sigma_{-i}^{\omega}(f)(x)=f(x_{0}+i\lambda,x_{1})$.\end{proof}

\section{The real structure}

In this section we discuss the possibility of introducing a real structure
on our spectral triple. The outcome is that most conditions are trivially
satisfied, the first order condition is modified in a way that involves
the twisted commutator as in \cite{quantum-twisted}, and one condition
involving the Dirac operator is modified by the presence of the modular
operator $\Delta_{\omega}$.

We briefly review the commutative case, to set the notation and also
since some computations are going to be identical in the non-commutative
case. Consider a spectral triple $(\mathcal{A},\mathcal{H},\mathcal{D}$)
of spectral dimension $n$. It is even if there exists a $\mathbb{Z}_{2}$-grading
$\chi$ on $\mathcal{H}$.
\begin{defn}
A real structure for the spectral triple $(\mathcal{A},\mathcal{H},\mathcal{D}$)
is an antilinear isometry $\mathcal{J}:\mathcal{H}\to\mathcal{H}$
with the following properties:
\begin{enumerate}
\item $\mathcal{J}^{2}=\varepsilon(n)$,
\item $\mathcal{J}\mathcal{D}=\varepsilon^{\prime}(n)\mathcal{D}\mathcal{J}$,
\item $[\pi(f),\mathcal{J}\pi(g^{*})\mathcal{J}^{-1}]=0$,
\item $\mathcal{J}\chi=i^{n}\chi\mathcal{J}$ if it is even,
\item $[[\mathcal{D},\pi(f)],\mathcal{J}\pi(g)\mathcal{J}^{-1}]=0$.
\end{enumerate}
\end{defn}
The fifth condition is usually called the first order condition. Here
$\varepsilon(n)$ and $\varepsilon^{\prime}(n)$ are mod $8$ periodic
functions which are given by
\[
\begin{split}\varepsilon(n) & =(1,1,-1,-1,-1,-1,1,1)\ ,\\
\varepsilon^{\prime}(n) & =(1,-1,1,1,1,-1,1,1)\ .
\end{split}
\]

We shortly review the case $n=2$. Consider $\mathcal{J}=CJ_{c}$,
where $J_{c}$ is complex conjugation and $C$ is a $2\times2$ matrix,
which in our conventions is given by $C=i\Gamma^{0}$. The grading
is given by the matrix $\chi=-i\Gamma^{0}\Gamma^{1}$, which satisfies
$\chi^{2}=1$. The Hilbert space is $\mathcal{H}=L^{2}(\mathbb{R}^{2})\otimes\mathbb{C}^{2}$
and acting with the operator $\mathcal{J}$ on a spinor $\psi$ we
get
\[
\mathcal{J}\left(\begin{array}{c}
\psi_{1}(x)\\
\psi_{2}(x)
\end{array}\right)=\left(\begin{array}{c}
-i\overline{\psi_{2}}(x)\\
-i\overline{\psi_{1}}(x)
\end{array}\right)\ .
\]
The fact that that $\mathcal{J}$ is an antilinear isometry follows
from a simple computation
\[
\begin{split}(\mathcal{J}\phi,\mathcal{J}\psi)_{\mathcal{H}} & =\sum_{k=1}^{2}((\mathcal{J}\phi)_{k},(\mathcal{J}\psi)_{k})_{L^{2}(\mathbb{R}^{2})}=\sum_{k=1}^{2}(-i\overline{\phi_{k}},-i\overline{\psi_{k}})_{L^{2}(\mathbb{R}^{2})}\\
 & =\sum_{k=1}^{2}(\psi_{k},\phi_{k})_{L^{2}(\mathbb{R}^{2})}=(\psi,\phi)_{\mathcal{H}}\ .
\end{split}
\]
The first condition is easily verified using the properties of the
$\Gamma$ matrices. To verify the second condition write the Dirac
operator as $\mathcal{D}=\Gamma^{\mu}\hat{P}_{\mu}$. We have
\[
\mathcal{J}\mathcal{D}\mathcal{J}=i\Gamma^{0}J_{c}\Gamma^{\mu}\hat{P}_{\mu}i\Gamma^{0}J_{c}=\Gamma^{\mu}J_{c}\hat{P}_{\mu}J_{c}\ .
\]
Using the definition of $\hat{P}_{\mu}$ we get $\Gamma^{\mu}J_{c}\hat{P}_{\mu}J_{c}=-\Gamma^{\mu}\hat{P}_{\mu}=-\mathcal{D}$.
Applying $\mathcal{J}$ and using $\mathcal{J}^{2}=1$ we obtain $\mathcal{J}\mathcal{D}=-\mathcal{D}\mathcal{J}$.
For the third condition we note that $\mathcal{J}\pi(g^{*})\mathcal{J}^{-1}=i\Gamma^{0}J_{c}\overline{g}i\Gamma^{0}J_{c}=g$.
The fourth condition follows from the computation 
\[
\mathcal{J}\chi=i\Gamma^{0}J_{c}(-i)\Gamma^{0}\Gamma^{1}=i\Gamma^{0}(-i)\Gamma^{0}\Gamma^{1}J_{c}=-(-i\Gamma^{0}\Gamma^{1})(i\Gamma^{0}J_{c})=-\chi\mathcal{J}\ .
\]
Finally the fifth condition follows since $[\mathcal{D},\pi(f)]=\Gamma^{\mu}\pi(D_{\mu}f)$
and therefore
\[
[[\mathcal{D},\pi(f)],\mathcal{J}\pi(g)\mathcal{J}^{-1}]=\Gamma^{\mu}[\pi(D_{\mu}f),\mathcal{J}\pi(g)\mathcal{J}^{-1}]=0\ .
\]

Now we consider the non-commutative case. For $f\in\mathcal{A}$ define
the operator $\tilde{J}f:=\sigma_{i/2}^{\omega}(f^{*})$, see \cite{kms-weights}.
Since $\omega$ satisfies the KMS condition with respect to $\sigma^{\omega}$,
the term $\sigma_{i/2}^{\omega}$ compensates for the lack of the
trace property, as shown in the next lemma.
\begin{lem}
The operator $\tilde{J}$ is an antilinear isometry on $\mathcal{H}_{r}$.\end{lem}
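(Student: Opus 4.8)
The plan is to verify separately that $\tilde{J}$ is antilinear and that it preserves the norm of $\mathcal{H}_r$; together with polarization these give that $\tilde{J}$ is an antilinear isometry, i.e. $(\tilde{J}f,\tilde{J}g)=\overline{(f,g)}$. Antilinearity is immediate, since $f\mapsto f^*$ is antilinear while $\sigma_{i/2}^\omega$ is linear, so $\tilde{J}(\alpha f+\beta g)=\bar\alpha\,\tilde{J}f+\bar\beta\,\tilde{J}g$. Thus the whole content of the statement is the identity $\|\tilde{J}f\|^2=\|f\|^2$ for $f\in\mathcal{A}$.

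The first step I would carry out is to compute the involution of $\tilde{J}f$. For this I record the relation $(\sigma_z^\omega g)^*=\sigma_{\bar z}^\omega(g^*)$, valid for all complex $z$. For real $z=t$ this is exactly the statement that $\sigma_t^\omega$ is a $*$-automorphism, established in the lemma above; since every element of $\mathcal{A}$ is analytic in the first variable (by Paley--Wiener), the left-hand side is antiholomorphic in $z$ and the right-hand side is likewise antiholomorphic, and as they agree on the real axis they agree wherever the continuation is defined. Taking $z=i/2$ and using $(f^*)^*=f$ then gives $(\tilde{J}f)^*=\sigma_{-i/2}^\omega(f)$.

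With this in hand the norm computation is short. Using the GNS inner product and the formula just obtained,
\[
\|\tilde{J}f\|^2=\omega\big((\tilde{J}f)^*\star\tilde{J}f\big)=\omega\big(\sigma_{-i/2}^\omega(f)\star\sigma_{i/2}^\omega(f^*)\big)\ .
\]
I would then apply the twisted trace property of Proposition~\ref{prop:twisted-dusi} in the form $\omega(a\star b)=\omega(\sigma_{-i}^\omega(b)\star a)$, recalling that $\mathcal{E}\triangleright=\sigma_{-i}^\omega$; this moves the two factors past each other and, using the group law $\sigma_{-i}^\omega\sigma_{i/2}^\omega=\sigma_{-i/2}^\omega$, turns the right-hand side into $\omega\big(\sigma_{-i/2}^\omega(f^*)\star\sigma_{-i/2}^\omega(f)\big)$. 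Since $\sigma_{-i/2}^\omega$ is an algebra automorphism this equals $\omega\big(\sigma_{-i/2}^\omega(f^*\star f)\big)$, and finally the invariance $\omega\circ\sigma_z^\omega=\omega$, which for complex $z$ amounts to the contour-shift identity $\int g(x_0+z,x_1)\,d^2x=\int g(x)\,d^2x$ established by Cauchy's theorem, yields $\omega(f^*\star f)=\|f\|^2$, as desired.

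I expect the only genuine obstacle to be the careful justification of the analytic continuations to complex parameters $z=\pm i/2,-i$: one must check that $\sigma_z^\omega$ is well defined on $\mathcal{A}$ and that the automorphism property, the twisted trace identity, and the invariance of $\omega$ all persist under continuation. All of these rest on the same fact, namely that functions in $\mathcal{A}$ are entire and rapidly decaying in the first variable, so that the relevant contour shifts are permitted; the algebraic manipulations themselves are routine once this is granted.
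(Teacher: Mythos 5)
Your proposal is correct and follows essentially the same route as the paper: compute $(\tilde{J}f)^*=\sigma_{-i/2}^{\omega}(f)$, apply the KMS/twisted-trace property to swap the factors, use that $\sigma_{-i/2}^{\omega}$ is an automorphism, and conclude by the invariance of $\omega$ under imaginary translations. The paper computes the full sesquilinear identity $(\tilde{J}f,\tilde{J}g)=(g,f)$ directly rather than passing through the norm and polarization, but the steps are the same.
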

\begin{proof}
Recall that $\sigma_{t}^{\omega}(f)(x)=f(x_{0}-\lambda t,x_{1})$
and that for $f,g\in\mathcal{A}$ we have the KMS property $\omega(f\star g)=\omega(\sigma_{-i}^{\omega}(g)\star f)$.
Then we have the following
\[
\begin{split}(\tilde{J}f,\tilde{J}g) & =\omega((\sigma_{i/2}^{\omega}(f^{*}))^{*}\star\sigma_{i/2}^{\omega}(g^{*}))=\omega(\sigma_{-i/2}^{\omega}(f)\star\sigma_{i/2}^{\omega}(g^{*}))\\
 & =\omega(\sigma_{-i}^{\omega}\sigma_{i/2}^{\omega}(g^{*})\star\sigma_{-i/2}^{\omega}(f))=\omega(\sigma_{-i/2}^{\omega}(g^{*})\star\sigma_{-i/2}^{\omega}(f))\\
 & =\omega(\sigma_{-i/2}^{\omega}(g^{*}\star f))=\omega(g^{*}\star f)=(g,f)\ .
\end{split}
\]
The property $\omega(\sigma_{-i/2}^{\omega}(f))=\omega(f)$ holds
for any $f\in\mathcal{A}$, as previously discussed. In particular
we have that $\|\sigma_{i/2}^{\omega}f^{*}\|=\|f\|$, so it is an
antilinear isometry on $\mathcal{A}$. Since $\mathcal{A}$ is dense
in $\mathcal{H}_{r}$ this operator can be extended by continuity
to the whole Hilbert space.
\end{proof}
To extend $\tilde{J}$ from $\mathcal{H}_{r}$ to $\mathcal{H}=\mathcal{H}_{r}\otimes\mathbb{C}^{2}$
we introduce the operator $\mathcal{J}=C\tilde{J}$, where $C=i\Gamma^{0}$
is the same matrix as in the commutative case. Now we only have to
check the various properties satisfied by this operator, which are
given in the following.
\begin{prop}
The operator $\mathcal{J}$ is an antilinear isometry on $\mathcal{H}$.
Moreover it satisfies the following properties:
\begin{enumerate}
\item $\mathcal{J}^{2}=1$,
\item $\mathcal{J}\mathcal{D}=-\Delta_{\omega}^{-1}\mathcal{D}\mathcal{J}$,
\item $[\pi(f),\mathcal{J}\pi(g^{*})\mathcal{J}^{-1}]=0$,
\item $\mathcal{J}\chi=-\chi\mathcal{J}$,
\item \textup{$[[\mathcal{D},\pi(f)]_{\sigma},\mathcal{J}\pi(g)\mathcal{J}^{-1}]=0$.}
\end{enumerate}
\end{prop}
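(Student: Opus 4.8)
The plan is to treat the five items in three groups. As a setup I would extend the antilinear isometry $\tilde{J}$ of the preceding lemma to $\mathcal{H}=\mathcal{H}_{r}\otimes\mathbb{C}^{2}$ by the same swap-and-conjugate pattern as the commutative $\mathcal{J}=CJ_{c}$, so that $\mathcal{J}(\psi_{1},\psi_{2})=(-i\tilde{J}\psi_{2},-i\tilde{J}\psi_{1})$; since $C=i\Gamma^{0}$ is unitary and $\tilde{J}$ is an antilinear isometry on $\mathcal{H}_{r}$, the operator $\mathcal{J}$ is automatically an antilinear isometry on $\mathcal{H}$. The one algebraic fact I would record at the outset is $\tilde{J}^{2}=1$: the proof of the previous lemma already uses $(\sigma_{i/2}^{\omega}(f^{*}))^{*}=\sigma_{-i/2}^{\omega}(f)$, so $\tilde{J}^{2}f=\sigma_{i/2}^{\omega}(\sigma_{-i/2}^{\omega}(f))=f$.

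Items (1) and (4) are then purely formal and run exactly as the commutative computations recalled above, with $J_{c}$ replaced by $\tilde{J}$. For (1) the two spinor swaps compose to the identity while the scalars contribute $(-i)\overline{(-i)}\,\tilde{J}^{2}=1$, using the antilinearity of $\tilde{J}$ and $\tilde{J}^{2}=1$. For (4) only the constant matrices $C$ and $\chi=-i\Gamma^{0}\Gamma^{1}$ enter, and these are unchanged from the commutative case, so the same anticommutation $\mathcal{J}\chi=-\chi\mathcal{J}$ holds verbatim.

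For (3) and (5) the mechanism is Tomita--Takesaki made explicit. I would first show that conjugation by $\mathcal{J}$ turns left multiplication into right multiplication: writing $\tilde{J}=\sigma_{i/2}^{\omega}\circ J$ with $J\colon f\mapsto f^{*}$ the involution, and using that $J\pi(a)J$ acts by right multiplication (as noted in the proof of Lemma \ref{lem:mult-formula}), a short manipulation with $\tilde{J}(a\star b)=(\tilde{J}b)\star(\tilde{J}a)$ and $\tilde{J}^{2}=1$ gives $\mathcal{J}\pi(g^{*})\mathcal{J}^{-1}\psi=\psi\star\sigma_{i/2}^{\omega}(g)$ and $\mathcal{J}\pi(g)\mathcal{J}^{-1}\psi=\psi\star\sigma_{i/2}^{\omega}(g^{*})$, both with trivial spinor action (the off-diagonal swaps compose back to the diagonal). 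Since left and right multiplication commute by associativity of $\star$, item (3) follows at once. For (5) I would invoke Lemma \ref{lem:twist-bnd} and the theorem of this section, by which $[\mathcal{D},\pi(f)]_{\sigma}=\Gamma^{\mu}\pi(D_{\mu}^{\prime}\triangleright f)$ is again a left-multiplication operator dressed with constant matrices; it therefore commutes with the right multiplication $\mathcal{J}\pi(g)\mathcal{J}^{-1}$, giving the first-order condition.

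The only substantial point is (2), and it is where the modular operator enters. I would compute $\mathcal{J}\mathcal{D}\mathcal{J}^{-1}=\sum_{\mu}\big(C\overline{\Gamma^{\mu}}C^{-1}\big)\big(\tilde{J}\rho(D_{\mu})\tilde{J}^{-1}\big)$, separating the matrix part from the action on $\mathcal{H}_{r}$; the matrix identities $C\overline{\Gamma^{\mu}}C^{-1}=\Gamma^{\mu}$ are the same as in the commutative case. For the analytic part, working on the core $\mathcal{A}$, I would use $\tilde{J}=\Delta_{\omega}^{-1/2}J$ (since $\sigma_{i/2}^{\omega}=\Delta_{\omega}^{-1/2}$), the relations $J\rho(P_{0})J=-\rho(P_{0})$ and $J\rho(P_{1})J=-\rho(\mathcal{E}^{-1})\rho(P_{1})$ established earlier, and the crucial identification $\rho(\mathcal{E})=\Delta_{\omega}$. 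Because $\hat{P}_{0}$ and $\hat{P}_{1}$ commute and $D_{0}=\tfrac{1}{\lambda}(1-\mathcal{E})$, $D_{1}=P_{1}$, the conjugations by $\Delta_{\omega}^{\pm1/2}$ telescope, yielding $\tilde{J}\rho(D_{0})\tilde{J}^{-1}=\tfrac{1}{\lambda}(1-\Delta_{\omega}^{-1})$ and $\tilde{J}\rho(D_{1})\tilde{J}^{-1}=-\Delta_{\omega}^{-1}\hat{P}_{1}$; assembling the two terms gives $\mathcal{J}\mathcal{D}\mathcal{J}^{-1}=-\Delta_{\omega}^{-1}\mathcal{D}$, that is $\mathcal{J}\mathcal{D}=-\Delta_{\omega}^{-1}\mathcal{D}\mathcal{J}$. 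The main obstacle, and the source of the deviation from the commutative relation $\mathcal{J}\mathcal{D}=-\mathcal{D}\mathcal{J}$, is precisely the bookkeeping of the antilinear conjugation of the functions of $\hat{P}_{0}$ hidden in $\mathcal{D}$ and $\Delta_{\omega}$: one must track how $\tilde{J}$, the composite of the involution with the modular flow $\sigma_{i/2}^{\omega}$, conjugates $\rho(\mathcal{E})$ and $\rho(P_{1})$, and verify that the residual factor comes out exactly as $\Delta_{\omega}^{-1}$ rather than as a symmetric split. The antipode twist $S(P_{1})^{*}=-\mathcal{E}^{-1}P_{1}$ entering $J\rho(P_{1})J$ is what makes this factor collapse cleanly.
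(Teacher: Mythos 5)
Your proposal is correct and follows essentially the same route as the paper: items (1) and (4) by the commutative matrix computations with $J_{c}$ replaced by $\tilde{J}$, items (3) and (5) by identifying $\mathcal{J}\pi(g)\mathcal{J}^{-1}$ with right multiplication by $\sigma_{i/2}^{\omega}(g^{*})$ and using $[\mathcal{D},\pi(f)]_{\sigma}=\Gamma^{\mu}\pi(D_{\mu}\triangleright f)$, and item (2) via the antipode. Your handling of (2) through $\tilde{J}=\Delta_{\omega}^{-1/2}J$ and the explicit relations $J\rho(P_{\mu})J$ is only a cosmetic variant of the paper's identity $\tilde{J}\rho(h)\tilde{J}=\rho(S(h)^{*})$, since those relations were themselves derived from the antipode.
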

\begin{proof}
First we show that $\mathcal{J}$ is an antilinear isometry. When
acting on a spinor $\psi$ we get
\[
\mathcal{J}\left(\begin{array}{c}
\psi_{1}(x)\\
\psi_{2}(x)
\end{array}\right)=\left(\begin{array}{c}
-i\sigma_{i/2}^{\omega}(\psi_{2}^{*})(x)\\
-i\sigma_{i/2}^{\omega}(\psi_{1}^{*})(x)
\end{array}\right)\ .
\]
Then the result follows from the previous lemma and the following
computation
\[
(\mathcal{J}\phi,\mathcal{J}\psi)_{\mathcal{H}}=\sum_{k=1}^{2}\left(-i\sigma_{i/2}^{\omega}(\phi_{k}^{*}),-i\sigma_{i/2}^{\omega}(\psi_{k}^{*})\right)_{\mathcal{H}_{r}}=\sum_{k=1}^{2}(\psi_{k},\phi_{k})_{\mathcal{H}_{r}}=(\psi,\phi)_{\mathcal{H}}\ .
\]
For any $f\in\mathcal{A}$ and $z\in\mathbb{C}$ we have $\sigma_{z}^{\omega}(f)^{*}=\sigma_{\bar{z}}^{\omega}(f^{*})$.
Writing $\tilde{J}\psi=\sigma_{-i/2}^{\omega}(\psi)^{*}$ we have
\[
\tilde{J}^{2}\psi=(\sigma_{-i/2}^{\omega}(\sigma_{-i/2}^{\omega}\psi)^{*})^{*}=\sigma_{i/2}^{\omega}(\sigma_{-i/2}^{\omega}\psi)=\psi\ .
\]
Using this relation the first property is proven as in the commutative
case
\[
\mathcal{J}^{2}=i\Gamma^{0}\tilde{J}i\Gamma^{0}\tilde{J}=(\Gamma^{0})^{2}\tilde{J}^{2}=1\ .
\]
For the second property recall that the Dirac operator is given by
$\mathcal{D}=\Gamma^{\mu}\hat{D}_{\mu}$, where $\hat{D}_{0}=\frac{1}{\lambda}(1-e^{-\lambda\hat{P}_{0}})$
and $\hat{D}_{1}=\hat{P}_{1}$. Using the properties of the $\Gamma$
matrices we obtain $\mathcal{J}\mathcal{D}\mathcal{J}=\Gamma^{\mu}\tilde{J}\hat{D}_{\mu}\tilde{J}$,
as in the commutative case. To compute $\tilde{J}\hat{D}_{\mu}\tilde{J}$
notice that, for any $h\in\mathcal{T}_{\kappa}$, we have
\[
\tilde{J}\rho(h)\tilde{J}\psi=\tilde{J}\rho(h)\sigma_{i/2}^{\omega}(\psi^{*})=(\sigma_{-i/2}^{\omega}\rho(h)\sigma_{i/2}^{\omega}(\psi^{*}))^{*}\ .
\]
But $\sigma_{i/2}^{\omega}$ commutes with $\rho(h)$ for any $h\in\mathcal{T}_{\kappa}$.
Then, using the property of compatibility of the representation with
the star structure $h\triangleright a^{*}=(S(h)^{*}\triangleright a)^{*}$,
we obtain 
\[
\tilde{J}\rho(h)\tilde{J}\psi=(\rho(h)\psi^{*})^{*}=(h\triangleright\psi^{*})^{*}=S(h)^{*}\triangleright\psi=\rho(S(h)^{*})\psi\ .
\]
If we apply this result to $\hat{D}_{\mu}=\rho(D_{\mu})$ we obtain
\[
\begin{split}\tilde{J}\hat{D}_{0}\tilde{J} & =\frac{1}{\lambda}\rho(1-\mathcal{E}^{-1})=-\frac{1}{\lambda}\rho(\mathcal{E}^{-1})\rho(1-\mathcal{E})=-\Delta_{\omega}^{-1}\hat{D}_{0}\ ,\\
\tilde{J}\hat{D}_{1}\tilde{J} & =\rho(-\mathcal{E}^{-1}P_{1})=-\rho(\mathcal{E}^{-1})\rho(P_{1})=-\Delta_{\omega}^{-1}\hat{D}_{1}\ .
\end{split}
\]
Then we obtain $\mathcal{J}\mathcal{D}\mathcal{J}=-\Delta_{\omega}^{-1}\mathcal{D}$,
from which the second property follows. For the third one we notice
that $\mathcal{J}\pi(f^{*})\mathcal{J}=\tilde{J}\pi(f^{*})\tilde{J}$.
We can easily show that $\tilde{J}\pi(f^{*})\tilde{J}$ corresponds
to right multiplication by $\sigma_{i/2}^{\omega}(f)$. Indeed we
have 
\[
\begin{split}\tilde{J}\pi(f^{*})\tilde{J}\psi & =\sigma_{i/2}^{\omega}(f^{*}\star\sigma_{i/2}^{\omega}(\psi^{*}))^{*}=\sigma_{i/2}^{\omega}((\sigma_{i/2}^{\omega}(\psi^{*}))^{*}\star f)\\
 & =\sigma_{i/2}^{\omega}(\sigma_{-i/2}^{\omega}(\psi)\star f)=\psi\star\sigma_{i/2}^{\omega}(f)\ .
\end{split}
\]
Then the property follows from the general fact that right multiplication
commutes with left multiplication. The proof of the fourth property
is identical to the classical case. For the fifth property recall
that $[\mathcal{D},\pi(f)]_{\sigma}=\Gamma^{\mu}\pi(D_{\mu}\triangleright f)$.
Then from the previous property
\[
[[\mathcal{D},\pi(f)]_{\sigma},\mathcal{J}\pi(g)\mathcal{J}^{-1}]=\Gamma^{\mu}[\pi(D_{\mu}\triangleright f),\mathcal{J}\pi(g)\mathcal{J}^{-1}]=0\ .
\]
The proof is complete.
\end{proof}
Several of the requirements for a real structure are trivially satisfied.
This is a consequence of the fact that the Clifford structure is the
same as in the commutative case, that the algebra $\mathcal{A}$ is
involutive and that the representation $\pi$ is by left multiplication.
The first order condition requires the use of the twisted commutator
$[\mathcal{D},\pi(f)]_{\sigma}$, which is indeed the natural choice
for the case of quantum groups \cite{quantum-twisted}. In our case
this condition is automatically satisfied, due to the structure of
our Dirac operator. Finally we discuss the property involving $\mathcal{J}$
and the Dirac operator $\mathcal{D}$. We have shown that for our
spectral triple this is given by $\mathcal{J}\mathcal{D}=-\Delta_{\omega}^{-1}\mathcal{D}\mathcal{J}$.
We have seen that, in checking this property, the role of the antipode
was crucial. Since the antipode of $P_{1}$ is not trivial, we immediately
understand why we can not obtain $\mathcal{J}\mathcal{D}=-\mathcal{D}\mathcal{J}$.

\section*{Acknowledgements}

I am indebted to Gherardo Piacitelli and Ludwik D\k{a}browski for
their assistance in the preparation of this paper. I also want to
thank Francesca Arici, Fabian Belmonte, and Domenico Monaco for helpful
discussions and comments.

\end{document}